\newtheorem{theorem}{Theorem}
\newtheorem{lemma}{Lemma}
\newtheorem{observation}{Observation}
\newtheorem{conjecture}{Conjecture}
\theoremstyle{definition}
\newtheorem{remark}{Remark}
\theoremstyle{remark}
\newtheorem{definition}{Definition}
\theoremstyle{open}
\newtheorem{open}{Open Problem}
\theoremstyle{informal}
\newtheorem{informal}{Informal Theorem}
\newcommand{\lp}{\left(}
\newcommand{\rp}{\right)}
\newcommand{\mc}{\mathcal}
\providecommand{\keywords}[1]
{
  \small	
  \textbf{Keywords:} #1
}
\newclass{\SHARPP}{\#P}
\newclass{\PPOLY}{P/Poly}
\newclass{\ETH}{ETH}
\newclass{\SLNP}{SLNP}
\newlang{\OV}{OV}
\newlang{\HAMPATH}{HAMPATH}
\newlang{\HAMCYCLE}{HAMCYCLE}
\newlang{\CLIQUE}{CLIQUE}
\newlang{\INDSET}{INDSET}
\newlang{\VERTEXCOVER}{VERTEX COVER}
\newlang{\COL}{COLOR}
\newlang{\CSP}{CSP}
\begin{document}

\title {Recovery Reductions, Conjectures, and Barriers}                      

\author{Tejas Nareddy\orcidlink{0009-0007-7032-6654}\footnote{Khoury College of Computer Science, Northeastern University. Email: \texttt{nareddy.s@northeastern.edu}.} \and 
Abhishek Mishra\orcidlink{0000-0002-2205-0514}\footnote{Department of Computer Science and Information Systems, Birla Institute of Technology and Science, Pilani, Pilani-333031, Rajasthan, I\textsc{ndia}. Email: \texttt{abhishek.mishra@pilani.bits-pilani.ac.in}.}}

\maketitle
\thispagestyle{empty}

\begin{abstract}

We introduce and initiate the study of a new model of reductions called the \textit{random noise model}. In this model, the truth table $\mathtt{T}_f$ of the function $f$ is corrupted on a randomly chosen $\delta$-fraction of instances. A randomized algorithm $\mathcal{A}$ is a $\left(t, \delta, 1-\varepsilon\right)$-recovery reduction for $f$ if:

\begin{enumerate}

\item With probability $1-\varepsilon$ over the choice of $\delta$-fraction corruptions, given access to the corrupted truth table, the algorithm $\mathcal{A}$ computes $f(\phi)$ correctly with probability at least $2/3$ on every input $\phi$.
  
\item The algorithm $\mathcal{A}$ runs in time $O(t)$.
  
\end{enumerate}

This model, a natural relaxation of average-case complexity, has practical motivations and is mathematically interesting.

Pointing towards this, we show the existence of robust deterministic polynomial-time recovery reductions with optimal parameters up to polynomial factors (that is, deterministic $\left( poly(n), 0.5 - 1/poly(n), 1-e^{-\Omega(poly(n))} \right)$-recovery reductions) for a large function class $\SLNP^S$ containing many of the canonical $\NP$-complete problems - $\SAT$, $k\SAT$, $k\CSP$, $\CLIQUE$ and more. As a corollary, we obtain that the barrier of \cite{Bogdanov2006} for non-adaptive worst-case to average-case reductions does not apply to our mild non-adaptive relaxation.

Furthermore, we establish recovery reductions with optimal parameters for \textit{Orthogonal Vectors} and \textit{Parity $k$-Clique} problems. These problems exhibit structural similarities to $\NP$-complete problems, with \textit{Orthogonal Vectors} admitting a $2^{0.5n}$-time reduction from $k\SAT$ on $n$ variables; and \textit{Parity $k$-Clique} a subexponential-time reduction from $3\SAT$.

\keywords{Constraint Satisfaction Problems; Fine-Grained Complexity; Graph Problems; Group-Theoretic Algorithms; $\NP$-Complete Problems; Random Noise Model; Recovery Reductions; Satisfiability.}
\end{abstract}

\newpage

\pagenumbering{roman}
\setcounter{page}{1}

\maketitle

\tableofcontents

\newpage

\pagenumbering{arabic}
\setcounter{page}{1}

\section{Introduction}
\label{section:introduction}

Average-case complexity deals with the complexity of solving computational problems on, say, at least an 80\% fraction of instances of every size.  Much work has been done in the area, showing strong results for $\SHARPP$ and polynomials believed not to be in $\P$ \citep{Levin1986, Gemmell1992, Feige1996, Sudan1996, Cai1999, Sudan2001}. Typically, the paradigm to prove the average-case hardness of a problem is to show a reduction from solving the problem in the worst case to solving the problem on average. Recently, owing to the explosion of interest in fine-grained complexity theory \citep{Abboud2014, Williams2018, Williams2019}, work has also been focused on proving the fine-grained average-case hardness of problems in $\P$ \citep{Ball2017, Goldreich2018, Boix2019, Kane2019, Goldreich2020, Mina2020, Asadi2022, Asadi2024}.

While much progress has been made on the average-case hardness of non-Boolean functions, proving the average-case hardness of computing Boolean functions poses a significant technical challenge. In the absence of large finite fields to compute our function over, we may no longer straightforwardly use tools such as the Schwartz-Zippel-DeMillo-Lipton lemma \citep{Schwartz1980, Zippel1979, Demillo1978}. Hence, it is a natural question to ask what natural relaxations of average-case hardness might simultaneously be practically interesting, mathematically rich, applicable, and easier to prove for larger classes of functions (including Boolean functions). We propose a model that satisfies these criteria and provides a new approach to tackling the gap left by worst-case to average-average case reductions: the random noise model.

\subsection{The Setup: Using Corrupted Truth Tables}
\label{section:randomnoise}

We propose our model of ``almost'' worst-case to average-average case reductions. First, we define what corruption means in the random noise model.

\begin{definition}
\label{def:randomnoise}

\textbf{Random Noise Corruption}\\
Suppose $f_n:\Sigma^{p(n)}\to \mathcal{D}$\footnote{Here, we allow $\Sigma$ to be any constant length alphabet, and $\mathcal{D}$ is any set. The variable $n$ is the growing instance size parameter and $p:\mathbb{N}\to\mathbb{N}$ is a polynomial of constant degree.} is the part of the function $f$ of input length parameter $n$ and let $\mathtt{T}_{f_n}$ denote its truth table on instances of size parameter $n$ (length $p(n)$)\footnote{We use the phrase ``truth table'' even if $\mathcal{D}$ is not $\{\, 0,1 \,\}$, simply out of convention. When we say the truth table, we always refer to the table of all evaluations.}.

We define $\mathcal{N}_{\delta}:\mathcal{D}^{\left|\Sigma\right|^{p(n)}}\to\mathcal{D}^{\left|\Sigma\right|^{p(n)}}$ as a random noise operator that acts as follows, acting on $\mathtt{T}_{f_n}$ as $\mathcal{N}_{\delta}\mathtt{T}_{f_n}$ - a subset $S \subset \Sigma^{p(n)}$ of size $\delta|\Sigma|^{p(n)}$ is chosen uniformly at random. A corrupted truth table $\mathtt{T}_{f_n}^{\prime}$ is produced by modifying all entries in $S$ in any possible way, with no restrictions on how they are changed\footnote{These changes may be adversarial, may be made to minimize the time complexity of computing the function represented by $\mathtt{T}^{\prime}_{f}$, or to fit any other criterion.}, leaving all other entries unchanged.

\end{definition}

Now, we are ready to define what it means for a function $f$ to have a reduction in the random noise model.

\begin{definition}
\label{def:reduction}

\textbf{Recovery Reductions in the Random Noise Model}\\
For a given function $f:\Sigma^{*}\to \mathcal{D}$, a $\left(t(n), \delta, \varepsilon\right)$-recovery reduction in the random noise model is a randomized algorithm $\mathcal{A}$, defined as follows:

\begin{itemize}

\item The random noise operator $\mathcal{N}_{\delta}$ is applied on $\mathtt{T}_{f_n}$ to produce a corrupted truth table $\mathtt{T}_{f_n}^{\prime}$. It is unknown to the algorithm which answers are corrupted.
  
\item With probability $1-\varepsilon$ over the randomness of choice of corruptions, $\mathcal{A}$, given oracle access to the $\delta$-fraction corrupted truth table $\mathtt{T}^{\prime}_{f_n}$, computes $f_n(x)$ correctly in $O\left(t(n)\right)$ time with probability at least $2/3$.
  
\end{itemize}
\end{definition}

We emphasize that the algorithm must be correct on every inputs with probability at least $2/3$, conditioned on the $1-\varepsilon$ measure favorable corruption due the noise operator.

We use the phrase ``recovery reduction'' to highlight that such an algorithm both resembles the practical recovery algorithms that operate on corrupted data and a self-reduction in some sense. Going forward, we may also call this a ``recovery algorithm'' when thinking of it this way seems intuitively helpful.

This is a natural, practical, and well-explored model in the context of learning from noisy examples \citep{Angluin1988, Bianchi1999, Kalai2003, Akavia2008} and error recovery in databases \citep{Zhang2020}. These recovery and learning algorithms are helpful even if they work with a probability arbitrarily close to $1$ instead of with a probability of exactly $1$. Moreover, since if the class $\PH$ does not collapse, $\NP$-complete problems cannot have polynomial-time non-adaptive worst-case to average-case reductions \citep{Feigenbaum1993, Bogdanov2006}, this is a slight relaxation that we can think of as allowing ``almost'' worst-case to average-case reduction that may exist non-adaptively for $\NP$-complete problems. As we will see in later sections, this is a natural and unifying model for the significant $\NP$-complete problems and for some problems in $\P$ that are structurally similar to $\NP$-complete problems or admit fine-grained reductions from $\NP$-complete problems.

If $\varepsilon = 0$, the existence of a recovery reduction in this model is equivalent to $f$ having an $O\left(t(n)\right)$-time worst-to-average case reduction with error tolerance $\delta$. Constructing a reduction in the random noise model with $\varepsilon > 0$ does not strictly imply a hardness result for computing $f$ on a $\left(1-\delta\right)$-fraction of instances. Indeed, we prove that there is a $\left(\poly(n), 0.5-1/\poly(n), \exp\left(-\Omega(n^2)\right)\right)$-reduction for the decision problem of detecting cliques of size $n/2$, while it is known due to \citep{Erdos1963, Polya1937} that $\mathcal{G}(n, 1/2)$\footnote{The distribution of graphs on $n$ vertices where each edge is selected with probability $1 / 2$} does not have a clique of size $n/2$ with probability $1-o(1)$ - printing $0$ without reading the input is a good average-case algorithm. However, this does prove that $\left(1-\varepsilon\right)$-fraction of functions exactly $\left(1-\delta\right)$-close to $f$ are at least as hard to compute as the worst-case complexity of $f$.

\begin{remark}
\label{remark:relaxation}

Allowing $\varepsilon$ to be larger than $0$ is a natural parameter relaxation. For $\NP$-complete problems, since we still want to have polynomial time reductions we may either relax the worst-to-average-case reductions by allowing the error tolerance $\delta$ to be small or $\varepsilon$ to be non-zero. The result of \cite{Bogdanov2006} prohibits a relaxation of only $\delta$ to $1/\poly(n)$ unless $\PH$ collapses. This suggests that the most natural parameter to relax is $\varepsilon$.

We do note that one potential workaround to the barrier posed by \cite{Bogdanov2006} is to have an adaptive reduction. There exist many works showing the power of adaptive worst-to-average case reductions compared to non-adaptive ones \citep{Feigenbaum1992, Naik1993, Babai1999, Akavia2006}, and also landmark adaptive reductions \citep{Micciancio2004, Ajtai1996}.

\end{remark}

\subsection{Towards a Generalized $\NP$ Function with Symmetry}
\label{section:definitions}

Before we define the function class we construct recovery reductions for, we must define some preliminary algebraic structures, the first of which is a semilattice.

\begin{definition}
\label{def:semilattice}

\textbf{Semilattice}\\
A semilattice $\lp\mc{D}, \odot\rp$ is a set $\mc{D}$ and a binary operation $\odot: \mc{D} \times \mc{D} \to \mc{D}$ such that:

\begin{enumerate}

\item \textbf{(Associativity)} For every $x, y, z \in \mc{D}$, $\lp x \odot y\rp \odot z = x \odot \lp y \odot z \rp$.
  
\item \textbf{(Commutativity)} For every $x, y \in \mc{D}$, $x \odot y = y \odot x$.
  
\item \textbf{(Idempotence)} For every $d \in \mc{D}$, $d \odot d = d$.
  
\end{enumerate}

\end{definition}

Of course, we require that the semilattice operations be efficiently computable, so we define the following subclass of semilattices.

\begin{definition}
\label{def:computableSL}

\textbf{Polynomial-Time Computable Semilattice}\\
A semilattice $\lp\mc{D}, \odot\rp$ is polynomial time computable if for any elements $x$ and $y$ of $\mc{D}$, the product $x \odot y$ is computable in $\poly\lp\left|x\right|+\left|y\right|\rp$ time.
\end{definition}

As an exercise, one may verify that $\lp\{\, 0,1 \,\}, \vee\rp$ is a polynomial-time computable semilattice.

We now define a generalization of the class $\NP$ that, instead of simply the OR operation, allows any poly-time computable semilattice operation to be performed between the polynomial-time computable predicate values.

\begin{definition}
\label{def:aggregrate}

\textbf{Semilattice $\NP$}\\
Semilattice $\NP$ ($\SLNP$) is the class of functions defined as follows. A function $f$ is in the class $\SLNP$ if:

\begin{enumerate}

\item There is a polynomial $p: \mathbb{N} \to \mathbb{N}$ and a polynomial-time computable semilattice $\lp\mc{D}, \odot\rp$ such that for every natural number $n$, the function $f_n:\Sigma^{p(n)}\to \mathcal{D}$ is the function $f$ restricted to inputs of size parameter $n$ on a constant sized alphabet $\Sigma$.
  
\item There exists a function $h_n:\Sigma^{p(n)} \times \mathcal{C}_n\to \mathcal{D}$ such that
  $$f_n(\phi) = \bigodot_{x \in \mathcal{C}_n}h_n(\phi, x),$$
where $|x| = \poly(n)$, $\mathcal{C}_n = \Pi^{\poly(n)}$ for some constant- sized alphabet $\Pi$ and $h_n(\phi, x)$ can be computed in time polynomial in $n$.

\end{enumerate}

\end{definition}

As we noticed before that $\left(\{\, 0,1 \,\}, \vee\right)$ is a semilattice, it is simple to see that $\NP \subset \SLNP$ (proof in Appendix \ref{appendix:A}).

\begin{lemma}
\label{lemma:CSAF}

The complexity class $\NP$ is contained in the function class $\SLNP$.

\end{lemma}

Now, we add the restriction that our function has some symmetries involved, plus some other convenient properties.

\begin{definition}
\label{def:syminv}

\textbf{$G$-Invariant Semilattice $\NP$}\\
Suppose $G = \lp G_1, G_2, \ldots \rp$ is an infinite sequence of groups. A function $f$ is in $G$-Invariant Semilattice $\NP$, $\SLNP^G$ if:

\begin{enumerate}

\item The function $f$ is contained in the function class $\SLNP$.

\item For each group $G_n$, there exist group actions $\alpha: G_n \times \Sigma^{p(n)} \to \Sigma^{p(n)}$ ($p$ is the size function associated with the function $f$) and $\beta: G_n \times \mathcal{C}_n \to \mathcal{C}_n$ such that:
  
\begin{enumerate}
      
\item For every $\phi \in \Sigma^{p(n)}$, $x \in \mathcal{C}_n$, and $g \in G_n$, we have that $$h_n\left(\alpha_{g}(\phi), \beta_{g}(x)\right) = h_n\left(\phi, x\right).$$
  
\item The group action $\beta$ partitions the set (the certificate space) $\mathcal{C}_n$ into $j(n) = \poly(n)$ distinct orbits $\mathcal{C}^{\beta}_{n, 1}, \mathcal{C}^{\beta}_{n, 2}, \ldots, \mathcal{C}^{\beta}_{n, j(n)}$.
  
\item In $\poly(n)$ time, it is possible to compute a list $\left(x_1, x_2, \ldots, x_{j(n)}\right) \in \mathcal{C}_n^{j(n)}$ such that for each $i \in \left[j(n)\right]$, $x_{i} \in \mathcal{C}^{\beta}_{n, i}$.
  
\item The group actions $\alpha$ and $\beta$ are computable in $\poly\left(n\right)$ time.
  
\end{enumerate}

\end{enumerate}

\end{definition}

\begin{lemma}
\label{lemma:isoinv}

\textbf{Isomorphism Invariance Property of $\SLNP^G$}\\
Given an indicator function $f \in \SLNP^G$, with group sequence $G_1, G_2, \ldots$ and group actions $\alpha$ and $\beta$, we have that for every element $g \in G_n$ and input $\phi \in \Sigma^{p(n)}$, $$f_n\left(\alpha_g(\phi)\right) = f_n(\phi).$$

\end{lemma}

Our final definition forces that our sequence of groups is a sequence of symmetric groups.

\begin{definition}
\label{def:symgroupsyminv}

\textbf{$S$-Invariant Semilattice $\NP$}\\
A function $f$ is the class $S$-Invariant Semilattice $\NP$, or $\SLNP^S$ if:

\begin{enumerate}

\item The function $f$ is in a function class $G$-Invariant Semilattice $\NP$, $\SLNP^G$.
  
\item The group sequence $G = \lp G_1, G_2, \ldots\rp = \lp S_{m(1)}, S_{m(2)}, \ldots \rp$ where $m: \mathbb{N} \to \mathbb{N}$ is a function whose value and computation time grow at most polynomial in its input.
  
\end{enumerate}

\end{definition}

\subsection{Our Results}
\label{section:results}

We state below, the main result of our paper, giving recovery reductions in the random noise model for many of the canonical $\NP$-hard problems.

\begin{theorem}
\label{theorem:resultsSICSAF}

For every $\epsilon > 0$, any function $f$ in the class $\SLNP^S$ has a fully deterministic $\left(\poly(n, 1/\epsilon), 0.5 - \epsilon, \exp\left(-\Omega(\poly(n))\right)\right)$-recovery reduction in the random noise model.

\end{theorem}

\begin{theorem}
\label{theorem:results}

For every $\epsilon > 0$, the following problems have fully deterministic\\ $\left(\poly\left(n, 1/\epsilon\right), 0.5-\epsilon, \exp\left(-\Omega(\poly(n))\right)\right)$-recovery reductions in the random noise model - $\SAT$, $k\SAT$, $k\CSP$, MAX-$k\CSP$, $\CLIQUE$, $\INDSET$, $\VERTEXCOVER$, $k\COL$,\\ $\HAMCYCLE$ and $\HAMPATH$.

\end{theorem}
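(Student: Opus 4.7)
The plan is to build a single template algorithm, parametrized by a group of symmetries, and apply it uniformly to every problem in the list. For each problem $\Pi$ I identify a group $G_n$ that acts on instances of size parameter $n$ in a way that preserves the function value: $f(g\phi) = f(\phi)$ for every $g \in G_n$. The recovery algorithm, given a query $\phi$ and oracle access to the corrupted table $\mathtt{T}^{\prime}_f$, deterministically picks a $\poly(n,1/\epsilon)$-sized subset $R_\phi$ of the orbit $G_n \phi$, queries $\mathtt{T}^{\prime}_f(\psi)$ for every $\psi \in R_\phi$, and outputs the majority answer. By $G_n$-invariance every uncorrupted query returns $f(\phi)$, so the output is correct whenever fewer than half the positions in $R_\phi$ lie in the random corruption set --- which is what I expect to happen with overwhelming probability once $\delta < 1/2 - \epsilon$.

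\textbf{The groups and the concentration.} For the graph problems --- $\CLIQUE$, $\INDSET$, $\VERTEXCOVER$, $k\COL$, $\HAMCYCLE$ and $\HAMPATH$ --- I take $G_n = S_n$ acting by vertex relabeling on the adjacency-matrix encoding. For $\SAT$, $k\SAT$, $k\CSP$ and MAX-$k\CSP$ I take $G_n = S_n \ltimes (\mathbb{Z}/2\mathbb{Z})^n$ acting by variable permutation together with literal negation (in the obvious way on clauses). Since the corruption is a uniformly random $\delta$-fraction of the full instance space, for any deterministically chosen $R_\phi$ the number of positions in $R_\phi$ that fall in the corruption set is hypergeometric with mean $\delta|R_\phi|$, so a Hoeffding-type inequality for hypergeometric sums gives failure probability $\exp(-\Omega(\epsilon^2 |R_\phi|))$ for that single query. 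Choosing $|R_\phi| = \poly(n, 1/\epsilon)$ large enough that $\epsilon^2 |R_\phi|$ dominates a union bound over the $2^{\poly(n)}$ possible inputs $\phi$ then delivers the claimed $\exp(-\Omega(\poly(n)))$ overall error probability.

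\textbf{Main obstacle and implementation.} The hardest part is inputs whose stabilizer in $G_n$ is unusually large, so that the orbit $G_n\phi$ is too small to even contain a $\poly(n,1/\epsilon)$-sized set of distinct elements --- the complete graph under $S_n$ is the prototype. My plan here is a two-step preprocessing using standard permutation-group machinery. First, invoke Schreier--Sims together with orbit-stabilizer and canonical-form routines to detect and canonicalize highly symmetric inputs in polynomial time. Second, for each such input, either read the answer directly off its succinct combinatorial description (for instance, $K_n$ has cliques of every size $\le n$) or pad with a polynomial number of fresh variables or vertices so that the stabilizer of the padded instance collapses to polynomial size and its orbit becomes large enough to accommodate $R_\phi$. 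All ingredients --- orbit enumeration, computing $g\phi$ from $g$ and $\phi$, canonicalization, and padding --- are deterministic and polynomial time, so the overall algorithm meets the bounds stated in the theorem. The principal challenge I anticipate is verifying that this single group-theoretic wrapper really does cover all the listed problems uniformly, which is exactly the ``one black box algorithm'' promised in the abstract.
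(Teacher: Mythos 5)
Your asymmetric branch is essentially the paper's: a deterministic $\poly(n,1/\epsilon)$-sized list of distinct orbit members obtained from coset representatives of $\textit{Aut}(\phi)$ via Schreier--Sims, majority vote over the corrupted table, negatively-correlated Chernoff/hypergeometric concentration, and a union bound over all $|\Sigma|^{p(n)}$ inputs. The gap is in the highly symmetric branch. Your padding idea does not work as stated: a recovery reduction only has oracle access to the corrupted truth table for the \emph{fixed} input length $p(n)$, so adding fresh vertices or variables produces an instance living in a different truth table that the oracle cannot answer. Even setting that aside, padding is not answer-preserving for several of the listed problems (an added isolated vertex destroys Hamiltonicity; adding vertices changes which $k$-cliques exist relative to the query parameter), and any within-length modification that both breaks the symmetry and provably preserves the answer would have to be designed problem by problem and instance by instance --- precisely what a black-box argument must avoid. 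Your fallback of ``reading the answer off a succinct canonical form'' is only sketched for $K_n$; the set of instances with automorphism index below the threshold $16p(n)\ln|\Sigma|/\epsilon^2$ is not a fixed constant-size family once $\epsilon$ and the polynomial are arbitrary, so this branch as written is not an algorithm.

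The paper closes this case with an idea absent from your proposal: for an instance $\phi$ whose automorphism group has polynomial index, one can \emph{compute} $f(\phi)$ outright, with no oracle queries, by compressing the exponential certificate search. Writing $f(\phi)=\bigvee_{x\in\mathcal{C}}h(\phi,x)$ and using the orbit--stabilizer theorem together with the idempotence of the aggregation operation ($b\vee b=b$, $\max\{r,r\}=r$), the disjunction over each orbit of the certificate space collapses to a disjunction over a right transversal of $\textit{Aut}_{S_m}(\phi)$, which has only $\poly(n,1/\epsilon)$ elements and is listable deterministically by the same Schreier--Sims machinery. Invariance of $h$ under the simultaneous action on $(\phi,x)$ guarantees the collapsed disjunction equals the full one. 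This is the step your proposal needs and does not supply; without it (or an equivalent), the theorem is not proved for the symmetric inputs.
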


We state this result more informally, in an algorithmic sense, for the example of Boolean Satisfiability.

\begin{informal}

Suppose we are given a truth table $\mathtt{T}^{\prime}_{\SAT}$ for $\SAT$ instances on $n$ variables such that a randomly chosen $\left(0.5-1/\poly(n)\right)$-fraction of the answers are flipped. There is a polynomial time deterministic procedure with access to $\mathtt{T}^{\prime}_{\SAT}$ that recovers $\SAT(\phi)$ for every formula $\phi$, with probability $1-2^{-\Omega\left(\poly(n)\right)}$ over the choice of corruptions.

\end{informal}

We emphasize that this probabilistic guarantee is not for each $\phi$ independently, but that with this probability guarantee, our algorithm works correctly for \textit{every} input $\phi$.

Our reduction can be seen as an efficient deterministic recovery algorithm for entries of truth tables in $\NP$-complete problems that fails with very low probability over the choice of corruptions. Our result can be seen as saying, in a database recovery view, ``$\NP$-complete truth tables have redundancy built in.''

In fact, a famed conjecture of \cite{Berman1997} would imply that every $\NP$-complete problem has some form of a recovery reduction with the same paramaters as the one we have shown for $\SLNP^S$.

\begin{conjecture}

\textbf{Berman-Hartmanis Conjecture \citep{Berman1997}}\\
Between any two $\NP$-complete languages, there exists a bijection that is a polynomial-time reduction computable in either direction.

\end{conjecture}

\begin{theorem}

If the Bermann-Hartmanis Conjecture is true, for every $\NP$-complete function (the indicator functions of $\NP$-complete langauges), there exists a partition of possible inputs, such that all have fully deterministic\\ $\left(\poly\left(n, 1/\epsilon\right), 0.5-\epsilon, \exp\left(-\Omega(\poly(n))\right)\right)$-recovery reductions in the random noise model for every $\epsilon > 0$.

\end{theorem}

\begin{proof}

For any $\NP$-complete indicator function $f$, the polynomial-time bijection $\mathcal{B}$ to $\SAT$ due to the conjecture of \cite{Berman1997} forces a partition of the input space of $f$ on the basis of the input length of the image of $f$ across the function $\mathcal{B}$. On this partition, we apply the recovery reduction for $\SAT$ implied by Theorem \ref{theorem:resultsSICSAF} and Theorem \ref{theorem:results}.

\end{proof}

Our recovery algorithm is, in one sense, optimal for non-adaptive algorithms. As stated previously, if we were to improve the probability that the procedure works over the choice of random corruptions to $1$, then even a randomized polynomial-time algorithm would imply that $\PH$ collapses to the third level due to the work of \cite{Feigenbaum1993} and \cite{Bogdanov2006}.

\begin{informal}

The barrier of \cite{Bogdanov2006} for non-adaptive worst-case to average-case reductions requires that the noise be adversarial and not random.

\end{informal}

If we were to raise the fraction of corruptions to $0.5$, then this would imply the inclusion of $\NP$ in $\BPP$ since we would be able to simulate the queries to the truth table $\mathtt{T}^{\prime}_{\SAT}$ using truly random bits. That is, the truth table received is as good as receiving an almost uniformly random string of that length. We are unable to increase the corruption fraction to $0.5 - 2^{-\poly(n)}$ since $2^{-\poly(n)}$-fraction advantages in correctness, cannot generally be exploited in polynomial time. Moreover, our recovery algorithm is fully deterministic rather than randomized.

Each of our recovery reductions for all the $\NP$-hard problems listed in Theorem \ref{theorem:results} follows from the fact that they are all in the class $\SLNP^S$.

This subtly points towards symmetry and invariance being a structural property of $\NP$-hardness. We discuss this further in Section \ref{section:open} on open problems.

\subsubsection*{Recovery Reductions for Fine-Grained Problems} 

We also give recovery reductions in the random noise model for the \textit{Orthogonal Vectors} ($\OV$) problem , deciding whether there is a pair of $d$-dimensional $0/1$ vectors in a list of $n$ such vectors whose dot product (over $\mathbb{R}$) is $0$, and \textit{Parity $k$-Clique}, the problem of computing the lowest order bit on the number of cliques of size $k$ in a simple $n$-vertex graph. The average-case complexity of these problems has been well-studied. \cite{Ball2017} and \cite{Mina2020} show average-case hardness for the low degree extension and a construction called the ``factored version'' of the $\OV$ problem, respectively. For the \textit{Parity $k$-Clique} problem, \cite{Goldreich2020}, improving upon the work of \cite{Boix2019}, showed that there is an $O\left(n^2\right)$-time worst-to-average-case reduction from computing \textit{Parity $k$-Clique} in the worst case to computing it correctly on a $\left(1-2^{-k^2}\right)$-fraction of instances.

In contrast, for both problems, we give recovery reductions in our model with optimal parameters, while not modifying either Boolean function.

\begin{theorem}
\label{thoerem:fg}

\begin{enumerate}

\item For every $\epsilon = 1/\polylog(n)$ and dimension $d$ at most $O\left(n^{1-\gamma}\right)$ for some $\gamma > 0$, we have a $\left(\tilde{O}(nd), 0.5-\epsilon, 1-2^{-nd}\right)$-recovery reduction for the $\OV$ problem.

\item For every constant $k > 0$ and $\epsilon = 1/\polylog (n)$, we have a $\left(\tilde{O}\left(n^2\right), 0.5-\epsilon, 1-2^{-{{n}\choose{2}}}\right)$-recovery reduction for \textit{Parity $k$-Clique}.

\end{enumerate}

\end{theorem}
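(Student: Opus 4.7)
The plan is to exploit the natural $S_n$-invariance of both functions and apply a specialized version of the group-theoretic recovery algorithm from Theorem~\ref{theorem:resultsSICSAF}. The $\OV$ function is invariant under the action of $S_n$ permuting the $n$ input vectors (and under $S_d$ permuting coordinates), while Parity $k$-Clique is invariant under $S_n$ acting on the $n$ vertices. In both cases the symmetry group $G$ has order at least $n!$, so the orbit of a ``generic'' input is enormous compared to the input size, which is the only reason the ``black-box'' SICSAF machinery can be made to run in nearly linear time here.

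Given input $\phi$, the recovery algorithm first uses standard permutation-group routines (e.g.\ Schreier--Sims) to check whether $\phi$ has an atypically large stabilizer under $G$. If so, $\phi$ has enough combinatorial rigidity to be solved directly within the time budget: for $\OV$, a matrix with few distinct row and column types reduces to brute force on the distinct-type table; for Parity $k$-Clique, highly symmetric graphs (vertex-transitive, multipartite, etc.) admit direct closed-form counting of $k$-cliques. Otherwise, the algorithm samples $m = \Theta(1/\epsilon^2) = \Theta(\polylog^2 n)$ uniformly random group elements $\sigma_1, \ldots, \sigma_m \in G$, queries the corrupted oracle at $\sigma_i(\phi)$ for each $i$, and returns the majority. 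Writing down the $m$ queries costs $O(m \cdot nd) = \tilde{O}(nd)$ for $\OV$ and $O(m \cdot n^2) = \tilde{O}(n^2)$ for Parity $k$-Clique, matching the claimed running time.

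Correctness rests on hypergeometric concentration in each orbit. For any orbit $O$ of size $L$, the probability (over the random $\delta$-fraction corruption) that more than a $(1/2 - \epsilon/2)$-fraction of $O$ is corrupted is at most $\exp(-\Omega(\epsilon^2 L))$. Generic inputs satisfy $L \geq \Omega(nd/\epsilon^2)$ for $\OV$ and $L \geq \Omega(n^2/\epsilon^2)$ for Parity $k$-Clique; since the number of orbits is bounded by $2^{nd}$ and $2^{\binom{n}{2}}$ respectively, a union bound yields corruption-failure probability at most $2^{-nd}$ and $2^{-\binom{n}{2}}$. Conditional on a good corruption pattern, each random query $\sigma_i(\phi)$ is uncorrupted with probability $\geq 1/2 + \epsilon/2$ (using $f$-invariance so the correct answer agrees on the orbit), and a standard Chernoff bound on majority voting over the $m = \Theta(1/\epsilon^2)$ samples yields correctness probability $\geq 2/3$.

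The main obstacle will be the special-case handling of highly symmetric inputs. Small-orbit inputs, though exponentially rare in the input space, carry no concentration guarantee on their own orbit, so the union bound above is vacuous for them. We therefore need both a polynomial-time procedure (via permutation-group machinery) to decide whether $\phi$'s stabilizer is ``too large,'' and a direct $\tilde{O}(nd)$- or $\tilde{O}(n^2)$-time algorithm for $f(\phi)$ that exploits the rigid combinatorial structure forced by such a stabilizer. Proving that every input with orbit below the concentration threshold falls into such a tractable class, and that the stabilizer-detection step itself fits inside the near-linear time budget rather than appealing to a general graph-isomorphism-style procedure, is the delicate technical step.
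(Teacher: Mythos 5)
Your overall architecture matches the paper's: split on whether the input's $S_n$-orbit is large, handle small-orbit inputs by direct computation, and for large-orbit inputs sample random permutations, query the corrupted table on orbit members, and take a majority, with correctness coming from hypergeometric concentration on each large orbit plus a union bound over all inputs. That part is exactly the paper's query lemma (its Lemma on deterministic query sets) specialized to these two problems, and your parameter accounting for the concentration and the majority vote is sound.

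The gap is in the small-orbit case, which you correctly flag as ``the delicate technical step'' but then leave unresolved --- and it is precisely where the real work lies. First, your proposed stabilizer-detection step via Schreier--Sims does not fit the time budget: the paper explicitly remarks that those deterministic permutation-group algorithms carry large polynomial overheads (which is the whole reason randomness is used for these two problems), and for Parity $k$-Clique computing $\left|\textit{Aut}(H)\right|$ is as hard as graph automorphism, so no generic group-theoretic routine will run in $\tilde{O}(n^2)$. The paper instead uses problem-specific combinatorial tests. For $\OV$ it counts distinct vectors: four or more distinct vectors already forces orbit size $\Omega(n^3)$, which exceeds the $16nd\ln(2)/\epsilon^2$ concentration threshold precisely because $d = O(n^{1-\gamma})$ and $\epsilon = 1/\polylog(n)$; with at most three distinct vectors one computes the $O(1)$ pairwise dot products directly in $O(nd)$ time. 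For Parity $k$-Clique the paper proves a full classification (its Appendix C): the only $n$-vertex graphs with $\left|\textit{Aut}(H)\right| = \omega(n!/n^3)$ are twelve specific graphs ($K_n$, $K_n$ minus an edge, $K_{n-1}$ plus an isolated or pendant vertex, $K_{n-2}$ plus two isolated vertices or a disjoint edge, and their complements), each recognizable from the degree sequence in $\tilde{O}(n^2)$ time and each with a closed-form $k$-clique count. Your gesture toward ``vertex-transitive, multipartite'' graphs is not the right class and does not substitute for this case analysis; without it (or an equivalent), the claim that every input below the concentration threshold is tractable in near-linear time is unsupported.
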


Here, we use randomness in our reduction so we have a reduction time that is linear (up to polylogarithmic factors) in the input size, to avoid the large polynomial overhead our black-box algorithm gives us.

In some way, both these problems have structure similar to $\NP$-complete problems. $\OV$ admits a $2^{0.5n}$-time reduction from $k\SAT$ with $n$ variables \cite{Williams2005} and \textit{Parity $k$-Clique} is $\ETH$-hard \citep{Goldreich2018, Chen2006}. This suggests a relationship between reduction from $\NP$ even if the reduction is subexponential, and the existence of a recovery reduction. We note that our techniques do not imply random reductions for arbitrary polynomial-time computable functions. We further discuss this in Remark \ref{remark:multiplication}.

\begin{remark}

We prove our main theorem (Theorem \ref{theorem:results}) for the $\NP$-hard problems listed in the theorem statement. However, we emphasize that this is not an exhaustive list of $\NP$-hard problems whose recovery reductions follow from Theorem \ref{theorem:resultsSICSAF}. We believe that many others follow, even with short proof sketches, but we only list the popular $\NP$-hard problems for brevity.

\end{remark}

\subsection{Open Problems}
\label{section:open}

\subsubsection{$\NP$-Completeness and Recovery Reductions}

Since we showed recovery reductions for many of the natural $\NP$-complete problems, it is natural to wonder if the existence of a recovery reduction is an inherent property of $\NP$-completeness. We conjecture the following about $\NP$-complete problems.

\begin{conjecture}
\label{conjecture:weakrandom}

\textbf{Recovery Reductions for Every $\NP$-Complete Problem}\\
For every $\epsilon > 0$, every $\NP$-complete problem has a \\ $\left(\poly(n, 1/\epsilon), 0.5-\epsilon, \exp\left(-\poly(n)\right)\right)$-recovery reduction in the random noise model.

\end{conjecture}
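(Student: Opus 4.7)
The plan is to attack Conjecture \ref{conjecture:weakrandom} by leveraging the black-box guarantee of Theorem \ref{theorem:resultsSICSAF}, combined with structural tools from $\NP$-completeness. The first route I would try is a direct transfer: fix some $L'$ (such as $\SAT$) already known to be a $\poly(n)$-Symmetric Group SICSAF, and use the polynomial-time reductions $f: L' \to L$ and $g: L \to L'$ supplied by $\NP$-completeness of both languages. The idea is to simulate, from a corrupted $\mathtt{T}'_L$, a corrupted $L'$-oracle by answering a query $\psi$ with the value $\mathtt{T}'_L[f(\psi)]$, and then invoke the SICSAF recovery algorithm on the $L'$-instance $g(x)$. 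Correctness of each individual look-up follows by composition of the reductions, and the running time inflates only polynomially.

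The key difficulty with this direct route is that a uniformly random $\delta$-fraction corruption of $\mathtt{T}_L$, restricted to the typically sparse image set $f(L')$, is not distributed as a uniformly random corruption of the $L'$-table: a deterministic reduction $f$ may concentrate or miss corruptions in ways the recovery reduction of Theorem \ref{theorem:resultsSICSAF} cannot absorb. To circumvent this I would try either (a) replacing $f$ with a randomized, approximately length-preserving reduction that spreads $L'$-queries roughly uniformly over the corresponding $L$-slice, so that Chernoff-style concentration transfers the $\delta$-noise rate to the induced queries with exponentially small failure probability; or (b) working within the paddable, polynomial-time isomorphism framework of Berman--Hartmanis, in which every known $\NP$-complete set is equivalent to $\SAT$ under a length-preserving bijection that would carry random noise to random noise on the nose.

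A second, more ambitious route is to show that every $\NP$-complete language $L$ is itself, perhaps after a natural padding, a $\poly(n)$-Symmetric Group SICSAF in the sense of Definition \ref{def:symgroupsyminv}, and therefore falls directly under Theorem \ref{theorem:resultsSICSAF}. The canonical certificate representation $L(x) = \bigvee_{y} V(x,y)$ is an aggregation under the $\mathrm{OR}$ semigroup (which is idempotent) over a certificate space with an obvious $S_m$ action; the task is to expose a compatible symmetric-group action on the input coordinate $x$ via a suitable encoding, so that Theorem \ref{theorem:resultsSICSAF} applies without any intermediate reduction. I expect the crux of the whole program, on either route, to be precisely the noise-invariance step: turning random noise on one $\NP$-complete truth table into a noise pattern on another that the black-box algorithm still tolerates. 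In its absence, a full proof of the conjecture would likely demand either a strictly stronger recovery algorithm than Theorem \ref{theorem:resultsSICSAF} (for instance, one that also resolves Open Problem \ref{open:extSICSAFs}), or a structural hypothesis on $\NP$-completeness such as polynomial-time isomorphism of all $\NP$-complete sets.
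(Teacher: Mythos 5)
The statement you are addressing is a \emph{conjecture}, not a theorem: the paper offers no proof of it, and explicitly leaves it open. Your proposal, to its credit, is honest that neither of its routes is completed, so the right assessment is that you have sketched a research program rather than a proof, and the program coincides almost exactly with the one the paper itself lays out. Your second route is precisely the paper's Conjecture \ref{conjecture:strong} (every $\NP$-complete problem is a $\poly(n)$-Symmetric Group SICSAF over $\left(\{0,1\},\vee\right)$), which the authors note implies Conjecture \ref{conjecture:weakrandom} via Theorem \ref{theorem:resultsSICSAF}; and your route (b) is the paper's own remark that the Berman--Hartmanis $p$-isomorphism conjecture would yield recovery reductions. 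So nothing here is a new attack, and nothing here closes the gap.

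The concrete unresolved steps are the ones you name, but they are harder than your phrasing suggests. For route (a)/(b): Karp reductions are total maps on $\Sigma^{*}$, generally neither injective nor length-preserving, so the pullback of a uniformly random corruption of $\mathtt{T}_L$ along $f$ is not distributed as $\mathcal{N}_{\delta}$ applied to $\mathtt{T}_{L'}$ (collisions correlate corruptions, and queries land in many different length slices); even under a Berman--Hartmanis isomorphism, the paper points out you would only obtain a recovery reduction for a \emph{relabelled} $\SAT$ truth table supported on the image of the isomorphism, which is not the random noise model for $L$ on its own fixed-length truth table as Definition \ref{def:reduction} requires. For route two, the obstruction is that an arbitrary $\NP$-complete language gives you the $\vee$-aggregation and an $S_m$-action on certificates, but there is no reason a compatible action $\alpha$ on \emph{inputs} exists with only polynomially many certificate orbits and efficiently listable orbit representatives --- that is exactly the content of Conjecture \ref{conjecture:strong}, which the authors justify only heuristically. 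In short: your proposal correctly identifies why the conjecture is open, but it should be presented as such, not as a proof.
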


We also strengthen this and make this conjecture for deterministic recovery reductions.

\begin{conjecture}
\label{conjecture:weakdeterministic}

\textbf{Deterministic Recovery Reductions for Every $\NP$-Complete Problem}\\
For every $\epsilon > 0$, every $\NP$-complete problem has a deterministic \\ $\left(\poly(n, 1/\epsilon), 0.5-\epsilon, \exp\left(-\poly(n)\right)\right)$-recovery reduction in the random noise model.

\end{conjecture}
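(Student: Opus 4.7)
The natural approach is to leverage Theorem~\ref{theorem:results} (equivalently, Theorem~\ref{theorem:resultsSICSAF}) by transferring the recovery reduction for $\SAT$ to an arbitrary $\NP$-complete language $L$ through a carefully chosen many-one reduction. Fix a polynomial-time Karp reduction $R$ from $\SAT$ to $L$, so that $L(R(\phi)) = \SAT(\phi)$ for every formula $\phi$. The strategy is to use the corrupted truth table $\mathtt{T}^{\prime}_{L}$ to simulate oracle access to a corrupted truth table for $\SAT$, then invoke the deterministic recovery reduction from Theorem~\ref{theorem:results} for $\SAT$ and translate the output back.

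The first technical hurdle is that the uniform $\delta$-corruption pattern on $\mathtt{T}_{L}$ does not automatically pull back, under $R$, to a uniform $\delta$-corruption pattern on $\mathtt{T}_{\SAT}$: collisions in $R$ and the possibly sparse image of $R$ can distort the induced noise distribution. I would restrict attention to a length-regular, injective (hence parsimonious) variant of $R$ in which every $\SAT$ instance of length $n$ maps to an $L$-instance of length exactly $p(n)$ for a fixed polynomial $p$, with image density at least $1/\poly(n)$. For this controlled class of reductions, a coupling argument shows that the pullback corruption rate on $\SAT$ is within $(1 \pm o(1))\delta$ and close in total variation to a truly uniform $\delta^{\prime}$-corruption of $\mathtt{T}_{\SAT}$.

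Once the simulated corrupted oracle is in place, I would apply the $\SAT$ recovery reduction of Theorem~\ref{theorem:results} with a marginally worsened noise parameter to absorb the distortion, then answer queries to $\mathtt{T}^{\prime}_{\SAT}$ on the fly by evaluating $R$ and reading $\mathtt{T}^{\prime}_{L}$. A union bound over the polynomially many queries issued by that reduction, combined with its $\exp(-\Omega(\poly(n)))$ failure probability over corruptions, yields the claimed parameters for $L$. Determinism is preserved because both the inner recovery reduction and $R$ are deterministic.

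The main obstacle, and the reason this is stated only as a conjecture, is the first step. General Karp reductions are free to blow up instance size by arbitrary polynomials and to concentrate their images on sparse, highly structured subsets of $L$-instances; for such reductions, random noise on $\mathtt{T}_{L}$ bears essentially no relation to random noise on $\mathtt{T}_{\SAT}$ under pullback, and the simulation collapses. Resolving Conjecture~\ref{conjecture:weakdeterministic} therefore plausibly requires one of: (i) showing that every $\NP$-complete problem, in its natural presentation, is a $\poly(n)$-Symmetric Group SICSAF in the sense of Theorem~\ref{theorem:resultsSICSAF}, thereby bypassing reductions entirely and matching the intuition that symmetry is a structural feature of $\NP$-completeness; (ii) proving that every $\NP$-complete language admits a size-rigid, image-dense parsimonious reduction from $\SAT$, a statement closely related to classical isomorphism and padding conjectures; or (iii) a problem-by-problem extension of the corollaries already established for the canonical list in Theorem~\ref{theorem:results}, which gives no universal proof but continues to add empirical support to the conjecture.
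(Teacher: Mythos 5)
This statement is Conjecture~\ref{conjecture:weakdeterministic}: the paper offers no proof of it, and your proposal, to its credit, does not claim to close it either. Your discussion in fact mirrors the paper's own: your route (i) is exactly Conjecture~\ref{conjecture:strong}, which the paper notes implies Conjecture~\ref{conjecture:weakdeterministic} via Theorem~\ref{theorem:resultsSICSAF}, and your route (ii) is essentially the Berman--Hartmanis $p$-isomorphism discussion in Section~\ref{section:open}. So as a map of the territory your proposal is accurate. The honest verdict is simply that neither you nor the paper proves the statement, and you correctly flag the central obstruction (random noise on $\mathtt{T}_{L}$ does not pull back to random noise on $\mathtt{T}_{\SAT}$ under an arbitrary Karp reduction).

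One structural flaw in your sketched simulation is worth naming beyond the distributional issue you already acknowledge. A recovery reduction for $L$ must, given the corrupted table $\mathtt{T}^{\prime}_{L}$ at a fixed input length $N$, compute $L(\psi)$ for \emph{every} $\psi$ of that length. Simulating $\mathtt{T}^{\prime}_{\SAT}(\phi) := \mathtt{T}^{\prime}_{L}(R(\phi))$ and running the $\SAT$ recovery reduction only lets you compute $\SAT(\phi)$ for all $\phi$, i.e., $L$ on instances in the image of $R$. To answer an arbitrary $\psi$ you would compose with a reduction $R^{\prime}$ from $L$ to $\SAT$, but then $R^{\prime}(\psi)$ is a $\SAT$ instance of length $\poly(N)$, whose simulated oracle lives in $\mathtt{T}^{\prime}_{L}$ at length $p(\poly(N)) \neq N$ --- a table you were never given. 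Even under your length-regular, injective, image-dense assumptions, the lengths do not close up unless $R \circ R^{\prime}$ is essentially a length-preserving self-map of $L$-instances, which is why the paper's own discussion of the $p$-isomorphism route is careful to say it yields recovery only on the relabelled (image) truth table rather than on $\mathtt{T}_{L}$ at a fixed length. Any genuine attack on the conjecture must either establish the intrinsic symmetry of $L$ directly (Conjecture~\ref{conjecture:strong}) or produce reductions rigid enough to avoid this length mismatch; your proposal correctly gestures at both but, as you say, proves neither.
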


\begin{open}
\label{conjecture:strong}

Is $\NP$, considering the indicator function of each language, contained in the class $\SLNP^S$?

\end{open}

Due to Theorem \ref{theorem:resultsSICSAF}, a positive answer to \ref{conjecture:strong} implies Conjecture \ref{conjecture:weakdeterministic}, which in turn implies Conjecture \ref{conjecture:weakrandom}.

The authors believe Conjecture \ref{conjecture:strong} to be true since heuristically, $\NP$-complete problems must be expressive and since they admit reductions from problems in $\NP$ with rich symmetries, it seems as though $\NP$-complete problems must retain some essence of the original symmetries. 

As discussed before, \cite{Berman1997} originally conjectured that between any-two $\NP$-complete problems, there is a bijection between the two languages that is polynomial-time computable on both sides. They call this a $p$-isomorphism. This conjecture would already imply deterministic recovery reductions for $\NP$-complete languages not on truth tables of fixed input length for that language, but a truth table on the input set that is the image of the $p$-isomorphism from (say) $\SAT$ instances on $n$ variables. Really, this is a relabelled $\SAT$ truth table, and we could compute the labels via the two-way $p$-isomorphism and use the recovery reduction for $\SAT$ as the main procedure.

Hence, we believe that our conjectures are important in the study of the structure of $\NP$-completeness.

\begin{open}
\label{open:finegrained}

\textbf{Recovery Reductions for Expressive Problems in $\P$}\\
Suppose $f$ is a problem in $\P$ of (randomized) time complexity $T(N)$ (on instances of $f$ of size $N$) with a $2^{o(n)}$-time reduction from $n$ variable instances of $3\SAT$. Does every such function $f$ have a $\left(t(N), 0.5 - \epsilon, 1 - \exp\left(-\Omega(\poly(N))\right)\right)$-recovery reduction in the random noise model such that $t(N) = o\left(T(N)\right)$ for every constant $\epsilon > 0$.

\end{open}
 
While $\OV$ and \textit{Parity $k$-Clique} do not necessarily have these properties, we raise the above question. Implicitly assuming the Exponential Time Hypothesis ($\ETH$)\footnote{The Exponential Time Hypothesis, $\ETH$ states that $3\SAT$ on $n$ variables requires $2^{cn}$ deterministic time to decide for some constant $c > 0$.} \citep{Impagliazzo2001, Impagliazzo2001b}, if a problem $f$ admits a subexponential time reduction from an $\NP$-complete problem $L$, must it contain the relevant symmetry conditioned on the fact that the language $L$ contains symmetry? If so, it is highly possible that $f$ contains enough symmetry to have a non-trivial\footnote{A trivial recovery reduction would simply use the algorithm that computes $f$ normally in the worst-case.} recovery reduction.

We ask what consequences the above proposed conjectures might have.

\begin{open}
\label{open:consequences}

\textbf{Consequences of Recovery Reductions for Every $\NP$-Complete Problem}\\
What conditional results in complexity theory can be shown assuming Conjecture \ref{conjecture:weakrandom}, Conjecture \ref{conjecture:weakdeterministic}, or Conjecture \ref{conjecture:strong}?

\end{open}

\subsubsection{Studying Correction in Various Models of Errors}

We can see errors as follows - a source of errors produces a string and XORs the string to the truth table. In our case, the source picks a random subset of inputs, with fixed Hamming weight and adversarially chooses the entries there and we try to decode with high probability over the behavior of the source. In the traditional setting in coding theory, the source introduces an adversarial string of fixed Hamming weight and the task is to decode any input with high probability.\\
\\
\textbf{Informal Question 1.} Can we construct efficient recovery reductions for functions when the error is produced by an arbitrary $\AC^0$ or an $\NC^0$ source?\\
\\
We believe the above question to be an intermediate between the two settings - where the former is easy to construct reductions for and the latter is difficult and in some cases prohibitive to construct reductions in the latter model.

\section{Intuition and Technical Overview}
\label{section:techoverview}

Suppose we have our function $f$ in the class $\SLNP^S$. The concrete function to keep in mind here is $\CLIQUE$, where $f_n:[n] \times \{\, 0,1 \,\}^{n \choose 2} \to \{ \, 0,1 \,\}$\footnote{This can be modified to be of the form $f_n:\{\, 0,1 \,\}^{p(n)} \to \mathcal{D}$ if we treat $[n]$ as $\{\, 0,1 \,\}^{\lceil\log n\rceil}$ and let $f_n$ be uniformly zero if the first $\lceil\log n\rceil$ bits encode a number larger than $n$.}, written as $f_n\left(k, H\right)$, is $1$ if and only if the simple $n$ vertex graph $H$ contains a clique of size $k$. 

This function is in the class $\SLNP^S$ since we can write $$f_n(k, H) = \bigvee_{T \subset [n]} \mathds{1}_{\text{The vertex set encoded by }T\text{ forms a clique in }H\text{ and }|T| = k},$$ where $\mathds{1}_{A}$ is the indicator function of $A$\footnote{That is, $\mathds{1}_{A}$ is $1$ if the assertion $A$ is true and $0$ if $A$ is false.}. A permutation $\pi$ of $S_n$ acts on the graph $H$ by permuting it according to the permutation $\pi$ and acts on the set $T$ by replacing every element $i \in T$ with $\pi(i)$.

\subsection*{Example - Sketching a Randomized Recovery Reduction for $\CLIQUE$}

Our recovery reductions rely on the following intuitive observations.

\begin{enumerate}

\item Symmetry - having a large automorphism group - makes computation easy.

\item Suppose our input $\phi$ has a lot of distinct instances isomorphic to it - it has a large orbit. The law of large numbers guarantees that if the truth table $\mathtt{T}_{f}$ is corrupted at random on $49\%$ of instances, then with high probability, the corrupted truth table has at least $50.5\%$ of isomorphic instances have correct answers in the corrupted truth table $\mathtt{T}^{\prime}_{f}$.

\end{enumerate}

The Orbit-Stabilizer Theorem (Lemma \ref{lemma:orbit-stabitilizer}), saying that $\left|\textit{Aut}_{G}(\phi)\right|\left|\textit{Orb}_{G}(\phi)\right| = |G|$ guarantees that at least one of the two conditions is true!

\subsubsection*{When the Graph is Highly Symmetric}

The first observation means that if our instance $\phi$ is highly symmetric - it's automorphism group is large - then, we can compress the proof verification process. For our function $f(k, H)$, suppose $H$ has a very large automorphism group. Intuitively, this means that the presence of one clique of size $k$ implies the existence of many others - we do not need to check whether there is a clique of size $k$ for each of these vertex sets $S$ - it suffices to check only one of them in this family. If there is a clique of size $k$ in this position encoded by $S$, we can return the answer $1$ for $f(k, H)$. Otherwise, we conclude immediately that many other possible vertex families of size $k$ - at least those in the orbit of $S$ when acted on by the automorphism group $\textit{Aut}(H)$ - do not contain cliques of size $k$ in the graph $H$. 

Using the Orbit-Stabilizer Theorem, our job, if we know that the index of $|\textit{Aut}(H)|$, $$n!/|\textit{Aut}(H)|,$$ is small - say $O\left(n^3\right)$ - is to list a set of the $O\left(n^3\right)$ right coset representatives (a right transversal) $U^{R}_{H}$ of the automorphism group $\textit{Aut}(H)$ and evaluate whether the vertex set $\beta_{\pi}\left(\{\, 1, 2, \ldots, k \,\}\right)$\footnote{In this case, there is only one orbit of sets of size $k$ under action from $S_n$. Typically, we would need to evaluate over every orbit. Of course, here, for $\CLIQUE$, we know a priori we only need to evaluate for one size - this will generally not be the case.} contains a clique of size $k$ in the graph $H$ for each $\pi$ in the right transversal $U^{R}_{H}$. We list each coset representative in the following way - instantiate a list containing only the graph $H$. Keep sampling random permutations $\pi \in S_n$ and add $\alpha_{\pi}(H)$ to the list if it is not already contained in the list. From the Orbit-Stabilizer Theorem, we know that there should be $n!/\left|\textit{Aut}(H)\right| = O(n^3)$-many distinct graphs and we stop this procedure once we have that many. This is a classic case of the coupon collector problem, and we can cover the space in $O\left(n^3\log n\right)$ randomized time \citep{Flajolet1992, Mitzenmacher2017}. Then, we complete our $O\left(n^3\right)$ evaluations to compute $$\bigvee_{\pi \in U^{R}_{H}}\mathds{1}_{\beta_{\pi}\left(\{\, 1, 2, \ldots, k \,\}\right)\text{ is a clique of size }k\text{ in the graph }H}$$ and obtain a polynomial-time randomized algorithm to compute $\CLIQUE$ on highly symmetric graph inputs - with $O(n^3)$ index.

\subsubsection*{When the Graph is Less Symmetric}

The second observation, on the other hand is that the operator $\mathcal{N}_{0.49}$ acts on the truth table $\mathtt{T}_{f}$ to return the corrupted truth table $\mathtt{T}^{\prime}_{f}$ with $49\%$ of the entries randomly corrupted. Due to the law of large numbers, and formalized by concentration bounds, all large orbits - of size $\Omega(n^3)$ - of $\mathcal{P}\left([n]\right)$\footnote{This is the power set of $[n]$ in this case.} will have at least $50.5\%$ of their answers correct with high probability (over the randomness of the choice of corruptions). We can simply sample random permutations from $S_n$ and query the corrupted truth table on $\texttt{T}^{\prime}_{f}$ on polynomially many entries isomorphic to $H$. With high probability, the majority value is correct! With high probability over the randomness of corruptions, we now have a randomized polynomial-time query algorithm for the asymmetric case - when the index is $\Omega\left(n^3\right)$.

\subsubsection*{Distinguishing between the Two Cases}

Now, all that remains to be done is to distinguish between the two cases. We have already hinted at a method to compute the index of a subgroup. We noted before that we can sample random permutations $\pi \in S_n$ and list every new graph $\alpha_{\pi}(H)$ that we have not seen before. Suppose the index is indeed of size $O\left(n^3\right)$. Then not only is this list of size $O\left(n^3\right)$, but after $O\left(n^3\log n\right)$ samples, this list stops growing. To be safe, we may even use $O\left(n^4\right)$ samples and ensure this list actually stops growing. After $O\left(n^4\right)$-time, if our list stops growing and is below our $O\left(n^3\right)$ size threshold, we use our procedure for highly symmetric graphs. Otherwise, we use the procedure where we query the corrupted truth table $\mathtt{T}^{\prime}_{f}$.

\subsection*{Derandomizing the Recovery Reduction}

Now that we have an intuitive feel for how and why our algorithm works, our job remains to give deterministic procedures for the parts of the recovery algorithms that we used randomness in. Namely, we used randomness in our algorithms for the following subroutines:

\begin{enumerate}

\item Distinguishing between large and small index automorphism groups and listing coset representatives when the index of the group is small.

\item Querying the corrupted truth table $\mathtt{T}^{\prime}_{f}$.

\end{enumerate}

For the first case, luckily for us, much work has been done in the area of computational group theory, for deterministic algorithms for permutation groups. We use the following result.

\begin{lemma}
\label{lemma:SS}

\textbf{\citep{Furst1980, Sims1970}}\\
Suppose we have a positive integer $m$, and a subgroup $G$ of $S_m$ such that we have a deterministic membership test to determine if any element $\pi \in S_m$ is a member of $G$ in time $T_{G} = \Omega\left(m \log m\right)$. Then,

\begin{enumerate}

\item In $\poly(m)T_{G}$ deterministic time, it is possible to compute the size of the subgroup $G$, $|G|$.

\item For any integer $k$, in $\poly(m)\poly(k)T_{G}$ deterministic time, it is possible to print a list of $k$ distinct (left or right) coset representatives of $G$.\footnote{In standard texts, the algorithm runtime depends on the index of $G$ as $\poly\left(\left[S_m:G\right]\right)$. However, to print only $k$ coset representatives, it suffices to terminate this procedure after we have listed $k$ coset representatives, which is possible with a polynomial time dependence on $k$.}

\end{enumerate}
\end{lemma}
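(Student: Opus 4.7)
The plan is to implement the classical Schreier--Sims stabilizer-chain construction, using the membership oracle in place of an explicit generating set. Define the chain $G = G_0 \geq G_1 \geq \cdots \geq G_m = \{e\}$, where $G_i$ is the pointwise stabilizer in $G$ of $\{1,\dots,i\}$, and let $O_i$ denote the orbit of $i+1$ under $G_i$. The orbit-stabilizer theorem gives $[G_i : G_{i+1}] = |O_i|$, so $|G| = \prod_{i=0}^{m-1} |O_i|$. If $U_i$ is a transversal of $G_{i+1}$ in $G_i$ (in bijection with $O_i$), then every element of $G$ has a unique factorization $u_0 u_1 \cdots u_{m-1}$ with $u_i \in U_i$. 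Once the transversals are in hand, part (1) follows by multiplying the $|U_i|$'s; and for part (2), I would apply the analogous chain construction on $S_m$ itself, then list the lexicographically first $k$ factored tuples that represent distinct cosets of $G$ in $S_m$, checking distinctness by sifting each candidate through the factored representation of $G$.

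To build the $U_i$'s, I would work bottom-up. The base case is trivial: $G_{m-1}$ is determined by a single oracle call on the transposition $(m-1\;m)$. Inductively, suppose $U_{i+1}, \dots, U_{m-1}$ are already built, so that every element of $G_{i+1}$ is known in factored form. To extend to $U_i$, for each candidate $j \in \{i+1,\dots,m\}$ I would test whether $j \in O_i$ by probing the membership oracle on a carefully chosen family of permutations that fix $\{1,\dots,i\}$ and send $i+1 \mapsto j$; as soon as one is accepted, it becomes the transversal element $u_i^{(j)}$, and any such discovery feeds back as new Schreier generators refining the chain below. A standard amortized counting argument (as in Furst--Hopcroft--Luks) bounds the total number of oracle queries by $\poly(m)$, since each query either populates a new transversal slot or sifts down to the identity, yielding part (1) in $\poly(m)\, T_G$ time. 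Part (2) then requires only $\poly(m)\poly(k)$ extra work to enumerate the first $k$ coset representatives, where each step multiplies length-$m$ permutations (a cost well within $T_G = \Omega(m \log m)$).

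The main obstacle, and the reason one cannot simply write down $U_i$ in a single step, is that without an a priori generating set there is no cheap way to exhibit orbit elements: a naive search over all $(m-i-1)!$ permutations fixing $\{1,\dots,i\}$ and sending $i+1 \mapsto j$ would blow the oracle call count up to superpolynomial. The remedy I would rely on is exactly the sifting strategy of \cite{SIMS1970169} and \cite{Furst1980}: maintain the chain as an invariant so that each membership query either confirms a new transversal entry at some level or yields a Schreier generator that refines an existing transversal. This keeps the total number of oracle queries polynomial in $m$, and combining with the $T_G = \Omega(m \log m)$ cost per query (which absorbs all permutation-multiplication bookkeeping) closes the argument for both parts.
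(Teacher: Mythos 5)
The central step of your construction does not go through. To populate $U_i$ you must, for each $j$, decide whether some element of $G$ fixes $1,\dots,i$ pointwise and maps $i+1$ to $j$, and if so exhibit one such element. You correctly identify this as the main obstacle, but the appeal to ``the sifting strategy of Sims and Furst--Hopcroft--Luks'' does not resolve it: in those algorithms the source of group elements to be sifted is an explicitly given generating set of $G$ (Schreier generators are products of input generators with transversal elements), and the amortized counting argument bounds the number of sift operations, not the number of oracle calls needed to discover elements of $G$ in the first place. A yes/no membership oracle never hands you an element of $G$, and there is no ``carefully chosen'' polynomial-size family of candidate permutations guaranteed to hit the orbit $O_i$ whenever it is nontrivial.

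Indeed, no deterministic procedure of the kind you describe can exist for an arbitrary subgroup presented only by a membership test: an algorithm making $q$ queries cannot distinguish $G=\{e\}$ from $G=\{e,\tau\}$ for any fixed-point-free involution $\tau$ outside its query set, and there are $(m-1)!!$ such involutions, so $q$ must be superpolynomial before $|G|$ can even be determined. The theorems of Sims and Furst--Hopcroft--Luks that you and the paper invoke take a \emph{generating set} of $G$ as input, in which case the stabilizer-chain argument you outline is correct and standard; the paper itself gives no proof beyond the citation, and the statement as written (membership oracle only) is strictly stronger than what those references establish. A correct proof must either start from generators of $G$ or exploit additional structure of the specific automorphism groups to which the lemma is later applied.
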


Using these algorithms and the fact that simply checking if $\alpha_{\pi}(H) = H$ is an efficient membership test, we can both determine the size of the automorphism group $\textit{Aut}(H)$ deterministically in polynomial-time, and compute a set of right coset representatives efficiently if we know the index is polynomial in $n$.

Now, derandomizing the second part is more challenging. A naive approach is to query the corrupted truth table $\mathtt{T}^{\prime}_{f}$ on every possible graph isomorphic to $H$, but the not-so-subtle issue for almost all graphs is that they have automorphism groups of size $1$ and we would be forced to make $n!$ queries \citep{Polya1937, Erdos1963}.

Suppose we use the deterministic right coset listing algorithm in Lemma \ref{lemma:SS} and construct a list $L$ of polynomially many right coset representatives of the automorphism group $\textit{Aut}(H)$  - can we query the truth table $\mathtt{T}_{f}^{\prime}$ on $\alpha_{\pi}(H)$ for every $\pi \in L$, take the majority value and hope we are correct?

As it turns out, we show using Chernoff bounds that indeed, with probability $1-\exp\left(-\poly(n)\right)$ over the randomness of the operator $\mathcal{N}_{0.49}$, for every graph $H$ with $\Omega(n^3)$ automorphism index size, the ``query path'' described above has at least $50.5\%$ of the answers correct within the corrupted truth table $\mathtt{T}^{\prime}_{f}$ under the assumption that $L$ is sufficiently large (but still polynomial). Hence, we can indeed simply do this and take the majority answer!

With these two ingredients, we have fully derandomized the recovery reduction for $\CLIQUE$. Our generalized recovery algorithm for $\SLNP^S$ proceeds almost identically.

\section{Preliminaries}
\label{section:preliminaries}

\subsection{Notation}

Throughout, we use $\mathtt{T}_{f_n}$ to refer to the truth table of the function section $f_n$ (of input length parameter $n$). We use $\mathtt{T}^{\prime}_{f_n}$ to refer to a corrupted truth table of the function section $f_n$. We use $\Sigma$ (outside of summation notation) to denote an alphabet of constant size. Typically, when we say $p(n)$, we refer to a positive function $p$ that is bounded from above by $n^{c}$ for some constant $c$. The notation $\poly(n)$ is a substitute for a positive function bounded from above by a polynomial in $n$ ($n^c$ for some constant $c$) and $\polylog(n)$ is a substitute for a positive function bounded from above by $\left(\log n\right)^c$ for some constant $c$. The notation $\ln$ represents the natural logarithm with base $e$ and $\log$ represents the logarithm with base $2$. For any positive integer $m$, $[m]$ denotes the set $\{\, 1, 2, \ldots m \,\}$. The notation $\exp(y)$ denotes $e^{y}$.

Throughout, $\alpha_g$ typically represents the group action of the element $g$ applied on the input to the function $f$ and $\beta_{g}$ represents the action of the element $g$ on a member $x$ of the certificate set $\mathcal{C}$. When $G^{\prime}$ is a subgroup of a group $G$, the index $\left[G:G^{\prime}\right] = |G|/ \left|G^{\prime}\right|$. $\textit{Aut}_{G}(y)$ represents the automorphism group of the object $y$ when acted on by the group $G$ and $\textit{Orb}_{G}(y)$ represent the orbit of $y$ when acted on by the group $G$. Throughout, we use $U^{L}_{O}$ ($U^{R}_{O}$) to represent the left (right) transversal of the automorphism group of the object $O$. A left transversal of a subgroup $H \subset G$ is a list consisting of exactly one element of each left coset of $H$. A right transversal is the analogous object for right cosets.

\subsection{Orbit Stabilizer Theorem}
\label{section:orbit}

The key technical idea from group theory underpinning our techniques is the concept of group actions \citep{Smith2015}.

\begin{definition}
\label{def:groupaction}

\textbf{Group Actions}\\
Given a group $G$ and a set $X$, a group action $\alpha: G \times X \to X$ is a function satisfying the following axioms:

\begin{itemize}

\item If $e$ is the identity of $G$ and $x$ is any element of $X$, $\alpha(e, x) = x$.

\item Given any $g, g^{\prime} \in G$ and $x \in X$, $\alpha\left(g^{\prime},\alpha(g, x)\right) = \alpha\left(g^{\prime}g, x\right)$.

\end{itemize}

\end{definition}

We will write $\alpha\left(g, x\right)$ as $\alpha_g(x)$ going forward.

Ideally, we use group actions to describe the symmetries of a set. A simple example of a group action is left multiplication, $g\cdot x$, where $x$ is a member of the group itself. A more useful example, and one we will indeed use in this paper, is the action of the symmetric group, $S_n$, on the set of simple $n$-vertex graphs. The action of a permutation $\pi \in S_n$ on a graph $H$ is to return a graph $H^{\prime}$ isomorphic to $H$\footnote{We call these graphs different if their adjacency matrices are not equal.} such that the vertices and edges of $H$ are permuted according to $\pi$.

\begin{definition}
\label{def:autgroup}

\textbf{Automorphism Group} or \textbf{Stabilizer}\footnote{In group theoretic literature, this is referred to as the stabilizer, while in combinatorics, especially in graph theory, it is referred to as the automorphism group. Throughout this paper, we will use the term ``automorphism group''.}\\
Given a group action $\alpha:G \times X \to X$, for any $x \in X$, the automorphism group $\textit{Aut}_{G}(x) \subset G$ (or stabilizer $G_x$) is defined as 
$$G_{x} = \textit{Aut}_{G}(x) = \left\{\, g \in G \mid \alpha_g(x) = x \,\right\}.$$

\end{definition}

$\textit{Aut}_{G}(x)$ is to be viewed as the subgroup of elements of $G$ fixing $x$, or equivalently, the subgroup of transformations or actions under which $x$ is invariant.

\begin{definition}
\label{def:orbit}

\textbf{Orbits}\\
Given a group action $\alpha:G \times X \to X$, for any $x \in X$, the orbit of $x$, $\textit{Orb}_{G}(x) \subset X$ is defined as
$$\textit{Orb}_{G}(x) = \left\{\, \alpha_g(x) \mid g \in G \,\right\}.$$

\end{definition}

The orbit of $x$ is to be seen as the set of elements of $X$ that are isomorphic to it. For example, intuitively, the orbit of a graph $H$ is the set of unique labelled graphs isomorphic to $H$. The automorphism group of $H$ is the set of permutations or relabellings that conserve the exact labelled edge relations.

We now state the standard version of the Orbit-Stabilizer Theorem.

\begin{lemma}
\label{lemma:orbit-stabitilizer}

\textbf{Orbit-Stabilizer Theorem (Quantitative Version)}\\
Given a group action $\alpha:G\times X \to X$, for any element $x \in X$, the following relation holds true
$$|\textit{Orb}_{G}(x)||\textit{Aut}_{G}(x)| = |G|.$$

\end{lemma}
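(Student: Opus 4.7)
The plan is to exhibit an explicit bijection between $\textit{Orb}_G(x)$ and the set of left cosets of $\textit{Aut}_G(x)$ in $G$, and then invoke Lagrange's theorem to count the latter as $|G|/|\textit{Aut}_G(x)|$. Since the group $G$ is finite in every application relevant to this paper (indeed $G = S_m$ for $m = \poly(n)$), all cardinalities are unambiguous.

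First I would define the evaluation map $\Phi : G \to \textit{Orb}_G(x)$ by $\Phi(g) = \alpha_g(x)$. Surjectivity is immediate from Definition \ref{def:orbit}. The core observation is that $\Phi(g) = \Phi(g')$ holds if and only if $g$ and $g'$ lie in the same left coset of $\textit{Aut}_G(x)$. To see this, apply $\alpha_{g^{-1}}$ to both sides of $\alpha_g(x) = \alpha_{g'}(x)$ and use the compatibility axiom together with $\alpha_e(x) = x$ from Definition \ref{def:groupaction}, yielding the chain
$$\alpha_g(x) = \alpha_{g'}(x) \iff \alpha_{g^{-1}g'}(x) = x \iff g^{-1}g' \in \textit{Aut}_G(x) \iff g' \in g\cdot \textit{Aut}_G(x).$$
Hence $\Phi$ descends to a well-defined map $\overline{\Phi} : G/\textit{Aut}_G(x) \to \textit{Orb}_G(x)$ that is both injective (by the biconditional above) and surjective (inherited from $\Phi$), so $\overline{\Phi}$ is a bijection.

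Finally, by Lagrange's theorem the number of left cosets of $\textit{Aut}_G(x)$ in $G$ is exactly $|G|/|\textit{Aut}_G(x)|$, so the bijection gives $|\textit{Orb}_G(x)| = |G|/|\textit{Aut}_G(x)|$, which rearranges to the claimed identity $|\textit{Orb}_G(x)| \cdot |\textit{Aut}_G(x)| = |G|$.

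This is a classical result, so I do not anticipate any real obstacle; the only step that requires any care is the biconditional chain, which must be derived cleanly from the two group-action axioms and the existence of inverses in $G$. Everything else is bookkeeping. Should one wish to avoid invoking Lagrange's theorem as a black box, the same argument can be phrased directly: fix any $g_0 \in G$ with $\Phi(g_0) = y \in \textit{Orb}_G(x)$, and observe that the fiber $\Phi^{-1}(y) = g_0 \cdot \textit{Aut}_G(x)$ has size exactly $|\textit{Aut}_G(x)|$; summing $|\Phi^{-1}(y)|$ over $y \in \textit{Orb}_G(x)$ recovers $|G|$ and gives the identity directly.
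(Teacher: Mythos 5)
Your proof is correct and is essentially the argument the paper itself relies on: the paper states the qualitative version (the orbit is in bijection with a left transversal of $\textit{Aut}_G(x)$, which is exactly your coset correspondence) and then notes the quantitative count follows from Lagrange's theorem. Your biconditional chain and the fiber-counting alternative are both clean instantiations of that same standard argument, so there is nothing to add.
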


More precisely and for better intuition, we state a more qualitative version of this theorem.

\begin{lemma}
\label{lemma:orbit-stabilizer-qualitative}

\textbf{Orbit-Stabilizer Theorem (Qualitative Version)}\\
Given a group action $\alpha:G \times X \to X$, for any element $x \in X$, and any left transversal $U^{L}_{x}$ of the automorphism group $\textit{Aut}_{G}(x)$, we have that $$\textit{Orb}_{G}(x) = \left\{\, \alpha_{u}(x) \mid u \in U^{L}_{x} \,\right\}.$$

\end{lemma}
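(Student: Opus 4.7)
The plan is to prove set equality by establishing both inclusions, where one direction is immediate and the other uses the defining property of a transversal together with the group action axioms.

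For the inclusion $\left\{\alpha_u(x) \mid u \in U^{L}_{x}\right\} \subseteq \textit{Orb}_{G}(x)$, I would simply observe that $U^{L}_{x} \subseteq G$, so every $\alpha_u(x)$ with $u \in U^{L}_{x}$ is of the form $\alpha_g(x)$ for some $g \in G$ and is therefore in the orbit by Definition \ref{def:orbit}.

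For the reverse inclusion $\textit{Orb}_{G}(x) \subseteq \left\{\alpha_u(x) \mid u \in U^{L}_{x}\right\}$, I would take an arbitrary orbit element $y = \alpha_g(x)$ with $g \in G$. Using the fact that $U^{L}_{x}$ is a left transversal of the subgroup $H = \textit{Aut}_{G}(x)$, every element of $G$ lies in a unique left coset $uH$ with $u \in U^{L}_{x}$, so we may write $g = uh$ for some $h \in H$. Applying the second group action axiom from Definition \ref{def:groupaction} with the roles $g' = u$ and $g = h$ yields $\alpha_{uh}(x) = \alpha_u(\alpha_h(x))$. Since $h \in \textit{Aut}_{G}(x)$, we have $\alpha_h(x) = x$ by Definition \ref{def:autgroup}, so $y = \alpha_g(x) = \alpha_u(x)$, which is in the right-hand side.

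The main obstacle is essentially nonexistent; this is a direct unwinding of definitions. The only subtlety worth flagging is keeping the left-coset convention straight: a left transversal decomposes each group element as $g = uh$ (representative times subgroup element), not $g = hu$, and this order is precisely what makes the action axiom telescope in the correct direction so that the stabilizing element $h$ acts on $x$ first and is absorbed. Once this ordering is honored, the computation is a three-line chain of equalities, and the quantitative statement of Lemma \ref{lemma:orbit-stabitilizer} then follows as a corollary by noting that the map $u \mapsto \alpha_u(x)$ from $U^{L}_{x}$ to $\textit{Orb}_{G}(x)$ is also injective (different transversal elements produce different orbit points, since $\alpha_{u_1}(x) = \alpha_{u_2}(x)$ would give $u_2^{-1}u_1 \in \textit{Aut}_{G}(x)$, forcing $u_1$ and $u_2$ to represent the same coset).
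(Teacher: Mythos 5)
Your proof is correct and complete: both inclusions are handled properly, the coset decomposition $g = uh$ is applied in the right order so that the stabilizing element acts first and is absorbed, and the injectivity remark correctly recovers the quantitative count. Note that the paper itself offers no proof of this lemma at all — it states it as a standard fact (citing a group theory text) and only remarks that the quantitative version follows from it via Lagrange's theorem — so your argument is simply the standard textbook proof filled in, and there is nothing in the paper to diverge from.
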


Hence, we can view members of the orbit of $x$, $\textit{Orb}_{G}(x)$, as ``isomorphic'' to the left cosets of the automorphism group $\textit{Aut}_{G}(x)$, in one sense. The qualitative version of the theorem implies the quantitative version since this is now just a special case of Lagrange's theorem \citep{Smith2015}.

\section{The Unified Meta Theorem}

In this section, we prove our main theorem, showing a recovery reduction for $\SLNP^S$. We organize this section based on the various ingredients in the algorithm.

\subsection{Determining Automorphism Group Size}

It follows from the work of \cite{Sims1970} and \cite{Furst1980} that using the efficiently computable group action $\alpha$ as our efficient membership test, we have a polynomial-time deterministic algorithm to compute the size of the automorphism group $\textit{Aut}_{S_m}(\phi)$ of any instance $\phi$.

\begin{lemma}
\label{lem:autsize}

Suppose we have a function $f$ in the class $\SLNP^S$. Given any instance $\phi \in \Sigma^{p(n)}$, there is a $\poly(n)$-time algorithm to determine the order $\left|\textit{Aut}_{S_m}(\phi)\right|$ of the automorphism group of $\phi$.

\end{lemma}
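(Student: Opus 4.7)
The plan is to reduce this directly to the Sims-Furst algorithm stated in Lemma \ref{lemma:SS}. Since $f$ is a $\poly(n)$-Symmetric Group SICSAF, its group $G(n) = S_m$ with $m = \poly(n)$ computable in $\poly(n)$ time, so the ambient symmetric group fits the hypothesis of Lemma \ref{lemma:SS}. The automorphism group $\textit{Aut}_{S_m}(\phi) = \{\pi \in S_m \mid \alpha_\pi(\phi) = \phi\}$ is by definition a subgroup of $S_m$, so all that is needed is to exhibit a fast deterministic membership test.

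First, I would describe the membership test: given any $\pi \in S_m$, compute $\alpha_\pi(\phi) \in \Sigma^{p(n)}$ and check whether the resulting string equals $\phi$ character-by-character. By Definition \ref{def:symgroupsyminv} (inheriting condition 4 from Definition \ref{def:syminv}), the action $\alpha$ is computable in $\poly(p(n)) = \poly(n)$ time, and the final equality check takes an additional $O(p(n)) = \poly(n)$ time. Thus $T_G = \poly(n)$, and since $T_G = \Omega(m \log m) = \Omega(\poly(n) \log \poly(n))$ is easily arranged (padding if needed), the membership-test hypothesis of Lemma \ref{lemma:SS} is satisfied.

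Next, I would invoke part (1) of Lemma \ref{lemma:SS} with this membership test applied to the subgroup $\textit{Aut}_{S_m}(\phi) \leq S_m$. The lemma then yields a deterministic procedure that outputs $|\textit{Aut}_{S_m}(\phi)|$ in time $\poly(m) \cdot T_G = \poly(n) \cdot \poly(n) = \poly(n)$, which is exactly the bound claimed.

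Honestly, there is no real obstacle here: the entire content of the lemma is packaging the preconditions for the Sims-Schreier-Furst machinery. The only mild point worth emphasizing in the write-up is that the membership test must be deterministic and that verifying $\alpha_\pi(\phi) = \phi$ suffices (rather than testing any subtler group-theoretic property), which is immediate from the definition of the stabilizer (Definition \ref{def:autgroup}). The output $|\textit{Aut}_{S_m}(\phi)|$ is at most $m! = 2^{\poly(n)}$, so it can be stored and manipulated in $\poly(n)$ bits, which is consistent with the stated runtime.
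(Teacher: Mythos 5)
Your proposal is correct and follows exactly the paper's own argument: use $\alpha_\pi(\phi) = \phi$ as the deterministic $\poly(n)$-time membership test for $\textit{Aut}_{S_m}(\phi) \leq S_m$ (valid since $m = \poly(n)$ and $\alpha$ is $\poly(n)$-computable by Definitions \ref{def:symgroupsyminv} and \ref{def:syminv}), then invoke part (1) of Lemma \ref{lemma:SS}. Your additional remarks about the $\Omega(m \log m)$ condition and the bit-length of the output are harmless elaborations, not a departure from the paper's proof.
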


\begin{proof}

From Definition \ref{def:symgroupsyminv} it follows that the variable $m$ is polynomial in $n$ and from Defintion \ref{def:syminv}, it follows that the group action $\alpha:S_m \times \Sigma^{p(n)} \to \Sigma^{p(n)}$ is computable in $\poly(n)$-time. The membership test to check if $\pi \in S_m$ is contained in $\textit{Aut}_{S_m}(\phi)$, given any instance $\phi \in \Sigma^{p(n)}$ is to verify that $\alpha_{\pi}(\phi) = \phi$, which can be done in $\poly(n)$-time. Hence, Lemma \ref{lemma:SS} implies that the group order $\left|\textit{Aut}_{S_m}(\phi)\right|$ can be deterministically computed in $\poly(n)$-time.

\end{proof}

\subsection{Probabilistic Guarantees for Asymmetric Instances}

Now, we prove that for instances with relatively small automorphism groups, we have a deterministic polynomial time procedure that uses queries to the corrupted truth table $\mathtt{T}^{\prime}_{f}$.

First, we prove a lemma that allows us to make the queries deterministically and obtain good probabilistic guarantees over the randomness of the noise operator $\mathcal{N}_{0.5-\epsilon}$. Suppose our query strategy for each input $\phi$ is deterministic and non-adaptive. Our lemma says that with high probability over the choice of corruptions, for every asymmetric input $\phi$, the queries we make retrieve correct answers at least $\left(0.5+\epsilon/2\right)$-fraction of the time.

\begin{lemma}
\label{lemma:smallaut}

Suppose we have a function $f_n:\Sigma^{p(n)} \to \mathcal{D}$ and subsets $$T_1, T_2, \ldots, T_{|\Sigma|^{p(n)}} \subset \Sigma^{p(n)},$$ each of size at least $16 p(n) \ln(|\Sigma|) / \epsilon^2$. With probability at least $1-1/|\Sigma|^{p(n)}$ over the choice of random corruptions of the operator $\mathcal{N}_{1/2-\epsilon}$, the corrupted truth table $\mathtt{T}^{\prime}_{f} = \mathcal{N}_{1/2-\epsilon}\mathtt{T}_{f}$ has at least $\left(1/2+\varepsilon/2\right)$-fraction of instances left uncorrupted within each subset $T_i$.

\end{lemma}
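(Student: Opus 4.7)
The plan is a direct concentration-plus-union-bound argument. Set $N = |\Sigma|^{p(n)}$ and $\delta = 1/2 - \epsilon$. For a fixed subset $S_i \subset \Sigma^{p(n)}$, let $X_i$ denote the number of corrupted instances within $S_i$, i.e., $X_i = |S_i \cap S|$, where $S$ is the uniformly random set of $\delta N$ corrupted positions produced by $\mathcal{N}_{1/2-\epsilon}$. The event ``$S_i$ has strictly fewer than $(1/2+\epsilon/2)|S_i|$ uncorrupted entries'' is the same as the event $X_i \geq (1/2 - \epsilon/2)|S_i|$, which since $\mathbb{E}[X_i] = \delta|S_i| = (1/2-\epsilon)|S_i|$, is a deviation of magnitude $(\epsilon/2)|S_i|$ above the mean.

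Next, I would invoke a concentration inequality. The random variable $X_i$ is hypergeometric (samples drawn without replacement from a population of size $N$ with $|S_i|$ marked elements), so by Hoeffding's inequality for sampling without replacement,
$$\Pr\!\left[X_i \geq \mathbb{E}[X_i] + (\epsilon/2)|S_i|\right] \leq \exp\!\left(-2 \cdot (\epsilon/2)^2 |S_i|\right) = \exp\!\left(-\epsilon^2 |S_i|/2\right).$$
Plugging in the hypothesis $|S_i| \geq 16 p(n) \ln(|\Sigma|)/\epsilon^2$ yields a per-set failure probability of at most $\exp(-8 p(n) \ln |\Sigma|) = |\Sigma|^{-8 p(n)}$.

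Finally, since there are at most $|\Sigma|^{p(n)}$ subsets $S_i$, a union bound gives a total failure probability of at most $|\Sigma|^{p(n)} \cdot |\Sigma|^{-8 p(n)} = |\Sigma|^{-7 p(n)}$, which is comfortably below the target $1/|\Sigma|^{p(n)}$.

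The only nontrivial point, and hence the main obstacle, is justifying the concentration inequality for correlated indicator variables: the per-position corruption indicators $\mathbb{1}[x \in S]$ are \emph{not} independent because $|S|$ is fixed exactly at $\delta N$. I would resolve this either by citing Hoeffding's 1963 bound for hypergeometric distributions directly, or by noting that sampling without replacement produces negatively associated indicators, so the classical Chernoff--Hoeffding tail bound still applies. Either route gives exactly the single-set estimate above; everything else is bookkeeping.
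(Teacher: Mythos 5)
Your proposal is correct and follows essentially the same route as the paper: a concentration bound that accounts for the dependence induced by the exactly-$\delta N$ corruption budget (you cite Hoeffding for the hypergeometric / negatively associated indicators, the paper cites a Chernoff bound under negative correlation — these are interchangeable here, and your constant in the exponent is even slightly better), followed by a union bound over the $|\Sigma|^{p(n)}$ subsets. The only cosmetic difference is that you count corrupted entries where the paper counts uncorrupted ones.
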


\begin{proof}

Consider the random variables $\left(X_{\phi}\right)_{\phi \in T}$. The random variable $X_{\phi}$ attains the value of $1$ if the entry of the corrupted truth table $\mathtt{T}^{\prime}_{f} = \mathcal{N}_{1/2-\epsilon}\mathtt{T}_{f}$ corresponding to the input $\phi$ is left uncorrupted and $0$ otherwise. We define the random variable $X_{T} = \sum_{\phi \in T}X_{\phi}$ counting the number of uncorrupted entries within $T$.

We use the Chernoff bound \citep{Mitzenmacher2017, Dubhashi1996} with negative correlation to obtain that
\begin{eqnarray}
\label{eqn:1}
\mathcal{P}\left[X_{T}/|T| \leq 1/2 + \epsilon / 2\right] \leq e^{-\epsilon^2|T|/8}.
\end{eqnarray}
This applies when each random variable $X_i \in \{\, 0,1 \,\}$ in the sum is identically distributed on $[0,1]$, but $$\mathcal{P}\left[X_j = 1 | X_i  = 1\right] \leq \mathcal{P}\left[X_j = 1\right]$$ for all $i \neq j$. 

We have that 
$$\mathcal{P}\left[X_{\phi} = 1 | X_{\phi^{\prime}} = 1\right] = \frac{\delta|\Sigma|^{p(n)}-1}{|\Sigma|^{p(n)}-1} < \delta = \mathcal{P}\left[X_{\phi} = 1\right]$$
for any distinct inputs $\phi$ and $\phi^{\prime}$ from $\Sigma^{p(n)}$. This holds since the sum $\sum_{\phi \in \Sigma^{p(n)}}X_{\phi} = \delta|\Sigma|^{p(n)}$ is a conserved quantity and if the truth table $\mathtt{T}^{\prime}_{f}$ is uncorrupted at the entry $\phi^{\prime}$, only $\delta|\Sigma|^{p(n)}-1$ of the other $|\Sigma|^{p(n)}-1$ entries can be left uncorrupted and their random variables are still identically distributed under this condition.

Hence, given any subset $T \subset |\Sigma|^{p(n)}$ with size $|T| \geq 16 p(n) \ln\left(|\Sigma|\right) / \epsilon^2$, we have, using Equation \ref{eqn:1}, that over the choice of random corruptions of $\mathcal{N}_{1/2 - \epsilon}$, $$\mathcal{P}\left[\text{The fraction of uncorrupted entries in }T\text{ is less than }1/2+\epsilon/2\right] \leq |\Sigma|^{-2p(n)}.$$
Assume we have subsets $T_{1}, T_{2}, \ldots, T_{|\Sigma|^{p(n)}} \subset \Sigma^{p(n)}$, each of size at least $16 p(n) \ln|\Sigma| / \epsilon^2$. The probability that over the choice of random corruptions of the operator $\mathcal{N}_{1/2-\epsilon}$ that at least one subset $T_{i}$ with $i \in \left[|\Sigma|^{p(n)}\right]$ has less than $\left(1/2+\epsilon/2\right)$-fraction correct entries is at most $$|\Sigma|^{p(n)}\cdot|\Sigma|^{-2p(n)} = 1/|\Sigma|^{p(n)},$$
due to the union bound.

\end{proof}

Now, using this lemma, we provide our deterministic querying algorithm, using the partial coset transversal algorithm in Lemma \ref{lemma:SS} to help us list distinct isomorphic instances.

\begin{lemma}
\label{lemma:asymalg}

Suppose we have a function $f$ in the class $\SLNP^S$ and $\epsilon > 0$. Given query access to the corrupted truth table $\mathtt{T}_{f}^{\prime} = \mathcal{N}_{1/2-\epsilon}\mathtt{T}_{f}$, there is a deterministic $\poly\left(n, 1/\epsilon\right)$-time procedure that, with probability at least $1-1/|\Sigma|^{p(n)}$ over the choice of corruptions of $\mathcal{N}_{1/2-\epsilon}$, computes $f_n(\phi)$ correctly for every input $\phi \in \Sigma^{p(n)}$ such that $\left|\textit{Aut}_{S_m}(\phi)\right| \leq {m!}/\lp16 p(n) |\Sigma|/ \epsilon^{2}\rp$.

\end{lemma}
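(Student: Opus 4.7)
The plan is to exploit the isomorphism invariance of $f$ (Lemma \ref{lemma:isoinv}) together with the large orbit guaranteed by the hypothesis, and reduce the computation of $f(\phi)$ to a majority vote over queries to $\mathtt{T}_f'$ at a carefully listed set of isomorphic instances.

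First, by the quantitative Orbit-Stabilizer Theorem (Lemma \ref{lemma:orbit-stabitilizer}), the hypothesis $|\textit{Aut}_{S_m}(\phi)| \leq m!/(16 p(n)|\Sigma|/\epsilon^2)$ gives $|\textit{Orb}_{S_m}(\phi)| \geq 16 p(n)|\Sigma|/\epsilon^2$, which is comfortably above the threshold $k := 16 p(n)\ln(|\Sigma|)/\epsilon^2$ required by Lemma \ref{lemma:smallaut}. On input $\phi$, the algorithm first invokes Lemma \ref{lem:autsize} to compute $|\textit{Aut}_{S_m}(\phi)|$; if the value exceeds the threshold, it aborts (this case is handled elsewhere). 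Otherwise, using the $\poly(n)$-time membership test ``$\alpha_\pi(\phi) = \phi$'', it applies Lemma \ref{lemma:SS} to deterministically enumerate $k$ distinct right coset representatives $u_1, \dots, u_k$ of $\textit{Aut}_{S_m}(\phi)$ in $S_m$ in $\poly(n,1/\epsilon)$ time. By the qualitative Orbit-Stabilizer Theorem (Lemma \ref{lemma:orbit-stabilizer-qualitative}), the instances $\alpha_{u_1}(\phi), \dots, \alpha_{u_k}(\phi)$ are pairwise distinct elements of $\textit{Orb}_{S_m}(\phi)$. The algorithm queries $\mathtt{T}_f'$ at each of these and outputs the value that appears most often, breaking ties arbitrarily.

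Correctness follows from two ingredients. By Lemma \ref{lemma:isoinv}, $f(\alpha_{u_i}(\phi)) = f(\phi)$ for every $i$, so every uncorrupted query returns the true value $f(\phi)$. To control corruptions uniformly over \emph{all} inputs simultaneously, assign to each $\phi$ with automorphism order below the threshold the subset $S_\phi := \{\alpha_{u_1}(\phi), \dots, \alpha_{u_k}(\phi)\}$ of size $k$, and pad the remaining inputs with arbitrary dummy subsets of size $k$ so as to obtain a family indexed by all of $\Sigma^{p(n)}$. Lemma \ref{lemma:smallaut}, applied to this family, states that with probability at least $1 - 1/|\Sigma|^{p(n)}$ over the choice of corruptions in $\mathcal{N}_{1/2-\epsilon}$, every $S_\phi$ retains at least a $(1/2+\epsilon/2)$-fraction of uncorrupted entries. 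On this event, strictly more than half of the $k$ queries return $f(\phi)$, so the majority vote is correct for every qualifying $\phi$.

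The step I expect to require the most care is guaranteeing the correctness statement \emph{uniformly} over all inputs despite the algorithm being deterministic and the query pattern $S_\phi$ being fully determined by $\phi$. This is exactly what Lemma \ref{lemma:smallaut} was tailored to deliver: the Chernoff tail $e^{-\epsilon^2 k/8}$ was calibrated so that $k = 16 p(n)\ln(|\Sigma|)/\epsilon^2$ absorbs a union bound over $|\Sigma|^{p(n)}$ subsets. The remaining runtime bookkeeping is routine: one automorphism-order computation, one partial transversal enumeration producing $k = \poly(n,1/\epsilon)$ representatives, $k$ evaluations of the $\poly(n)$-time action $\alpha$, $k$ oracle lookups, and a final plurality count, together giving a $\poly(n,1/\epsilon)$-time deterministic procedure.
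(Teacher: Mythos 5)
Your proposal is correct and follows essentially the same route as the paper's proof: deterministically list $16p(n)\ln(|\Sigma|)/\epsilon^2$ coset representatives of $\textit{Aut}_{S_m}(\phi)$ via Lemma \ref{lemma:SS}, form the set $S_\phi$ of distinct isomorphic instances, invoke Lemma \ref{lemma:smallaut} with padding for non-qualifying inputs to get the uniform $(1/2+\epsilon/2)$-fraction guarantee, and conclude by majority vote using Lemma \ref{lemma:isoinv}. Your explicit remark reconciling the $|\Sigma|$ in the lemma's hypothesis with the $\ln(|\Sigma|)$ threshold actually used is a small clarification the paper glosses over.
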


\begin{proof}

Suppose that for each input $\phi \in \Sigma^{p(n)}$, $\mathtt{COSET}_{\phi}(k)$ is the list of $k$ (possibly incomplete) coset representatives of $\textit{Aut}_{S_m}(\phi)$, $\left(g_1, g_2, \ldots, g_{k}\right)$ returned by the deterministic procedure for computing a possibly incomplete list of coset representatives implied by Lemma \ref{lemma:SS}.

For each input $\phi \in \Sigma^{p(n)}$ with automorphism group size \\ $\left|\textit{Aut}_{S_m}(\phi)\right| \leq m! / \left(16 p(n) \ln(|\Sigma|) / \epsilon^2\right)$\footnote{Otherwise, we set $T_{\phi} = \Sigma^{p(n)}$ - this just allows Lemma \ref{lemma:smallaut} to apply in a blackbox fashion. We will not actually use such a set $T_{\phi}$ algorithmically.}, suppose $$\mathtt{COSET}_{\phi}\left(16 p(n) \ln(|\Sigma|) / \epsilon^2\right) = \left(g_1, g_2, \ldots, g_{16 p(n) \ln(|\Sigma|) /\epsilon^2}\right)$$ is the list returned by the deterministic procedure in Lemma \ref{lemma:SS}. Then, we define for each input $\phi \in \Sigma^{p(n)}$, the set $T_{\phi}$ as $$T_{\phi} = \left\{\, \alpha_{g_1}(\phi), \alpha_{g_2}(\phi), \ldots, \alpha_{g_{16 p(n) \ln(|\Sigma|) / \epsilon^2}}(\phi) \,\right\}.$$

Since these are coset representatives of the automorphism group $\textit{Aut}_{S_m}(\phi)$ and the orbit stabilizer theorem (Lemma \ref{lemma:orbit-stabitilizer}) guarantees that we have at least $\left(16 p(n) \ln(|\Sigma|) / \epsilon^2\right)$ distinct orbit members (or coset representatives), it is easy to see that each of the images of group action of $\textit{COSET}_{\phi}\left(16 p(n) \ln(|\Sigma|) / \epsilon^2\right)$ is unique. Hence, each set $T_{\phi}$ is of size exactly $\left(16 p(n) \ln(|\Sigma|) /\epsilon^2\right)$. Due to Lemma \ref{lemma:smallaut}, with probability at least $1-1/|\Sigma|^{p(n)}$ over the choice of corruptions of the operator $\mathcal{N}_{1/2-\epsilon}$, the corrupted truth table $\mathtt{T}^{\prime}_{f} = \mathcal{N}_{1/2-\epsilon}\mathtt{T}_{f}$ has at least a $\left(1/2+\epsilon/2\right)$-fraction of entries left uncorrupted over the set $T_{\phi}$ for every input $\phi \in \Sigma^{p(n)}$.

The algorithm is defined as follows. Suppose we are given $\phi \in \Sigma^{p(n)}$ such that $\left|\textit{Aut}_{S_m}(\phi)\right| \leq m! / \left(16 p(n) \ln(|\Sigma|) / \epsilon^2\right)$. We use the deterministic procedure in Lemma \ref{lemma:SS} to compute \\ $16 p(n) \ln(|\Sigma|) / \epsilon^2$ coset representatives of $\textit{Aut}_{S_m}(\phi)$, where the membership test is to apply $\alpha_{\pi}$ to $\phi$ and check whether the image is $\phi$. Upon computing the list $$\mathtt{COSET}_{\phi}\left(16 p(n) \ln(|\Sigma|) /\epsilon^2\right) = \left(g_1, g_2, \ldots, g_{16 p(n) \ln(|\Sigma|) / \epsilon^2}\right),$$ we compute the set $$T_{\phi} = \left\{\, \alpha_{g_1}(\phi), \alpha_{g_2}(\phi), \ldots, \alpha_{g_{16 p(n) \ln(|\Sigma|) / \epsilon^2}}(\phi) \,\right\}.$$
Then, we query the corrupted truth table $\mathtt{T}_{f}^{\prime}$ in the locations listed in $T_{\phi}$. We take the majority value in $\mathcal{D}$ of all the retrieved values, provided one exists. Under the $\left(1-1/|\Sigma|^{p(n)}\right)$-probability guarantee over the choice of corruptions of the operator $\mathcal{N}_{1/2-\epsilon}$, at least $\left(1/2+\epsilon/2\right)$-fraction of answers are uncorrupted. Hence, under these guarantees, the majority is well-defined and equal to $f(\phi)$ due to Lemma \ref{lemma:isoinv}.

The entire procedure requires $\poly(m)\poly(1/\epsilon)\poly(n)$ deterministic time and $O(p(n)/\epsilon^2)$ queries to the corrupted truth table $\mathtt{T}_{f}^{\prime}$. Due to definition of the class $\SLNP^S$ (Definition \ref{def:symgroupsyminv}), $m = \poly(n)$ and $p(n) = \poly(n)$. Hence, the total deterministic runtime of this procedure is polynomial in $n$ and $1/\epsilon$.

\end{proof}

\subsection{Quick Computability for Symmetric Instances}

Now, we aim to handle the case where the automorphism group of the instance is very large - almost as large as $S_m$ itself on the logarithmic scale. Here, we make the following observation about the structure of the class $\SLNP^S$. This follows from the definition of the class $\SLNP^G$ (Definition \ref{def:syminv}).
\begin{equation}\label{eqn:orbits}
  f_n(\phi) = \bigodot_{x \in \mathcal{C}_n}h_n\left(\phi, x\right) = \bigodot_{i \in [j(n)]}\left(\bigodot_{x \in \mathcal{C}^{\beta}_{n, i}}h_n\left(\phi, x\right)\right).
\end{equation}
Note that, by definition, $j(n)$ is a function that is polynomial in $n$. Now, for convenience of writing, let
\begin{equation}\label{eqn:g}
f_{n, i}(\phi) = \bigodot_{x \in \mathcal{C}^{\beta}_{n, i}}h_n\left(\phi, x\right),
\end{equation}
where $\mathcal{C}^{\beta}_{n, i}$ is the $i^{\text{th}}$ orbit of $\mathcal{C}$ as defined in Definition \ref{def:syminv}.

We want to be able to compute each $f_{n, i}(\phi)$ efficiently by some process, leveraging the fact that the instance $\phi$ is highly symmetric and then compute
\begin{equation}\label{eqn:fintermsofg}
  f_n(\phi) = \bigodot_{i \in [j(n)]}f_{n, i}(\phi)
\end{equation}
in polynomial time since $j(n) = O\left(\poly(n)\right)$.

Indeed, we do compress the brute-force enumeration process by using the symmetries of the instance $\phi$. First, making the following observation.

\begin{observation}
\label{obs:inv}

Given that $h_n\left(\alpha_g(\phi), \beta_g(x)\right) = h_n(\phi, x)$ for every $g \in G_n$, $\phi \in \Sigma^{p(n)}$, $x \in \mathcal{C}_n$, for every element $g$ in the automorphism group $\textit{Aut}_{S_m}(\phi)$, we have that $$h_n\left(\phi, \beta_{g}(x)\right) = h_n(\phi, x).$$

\end{observation}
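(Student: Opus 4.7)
The plan is to prove this observation by a direct substitution argument, leveraging the defining property of the automorphism group (stabilizer). The statement is essentially a specialization of the general invariance property of SICSAFs (property 1 in Definition~\ref{def:syminv}) to the subgroup of elements that fix the instance $\phi$.

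First, I would recall from Definition~\ref{def:autgroup} that for any $g \in \textit{Aut}_{S_m}(\phi)$, we have $\alpha_g(\phi) = \phi$ by the very definition of the automorphism group (stabilizer). Next, I would invoke the general invariance property of SICSAFs, which is given as a hypothesis: for every $g \in S_m$, $\phi \in \Sigma^{p(n)}$, and $x \in \mathcal{C}$,
\[
  h\left(\alpha_g(\phi),\, \beta_g(x)\right) \;=\; h(\phi, x).
\]
Substituting $\alpha_g(\phi) = \phi$ on the left-hand side immediately yields $h\left(\phi,\, \beta_g(x)\right) = h(\phi, x)$, which is the desired conclusion.

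There is really no obstacle here: the observation is a one-line consequence of combining the stabilizer condition with the $h$-invariance axiom. The only care needed is making clear that the quantification is correctly narrowed from ``every $g \in G(n) = S_m$'' to ``every $g \in \textit{Aut}_{S_m}(\phi)$,'' and that this narrowing is exactly what allows us to replace $\alpha_g(\phi)$ with $\phi$ while keeping $\beta_g(x)$ untouched (since $g$ need not fix $x$). This observation will be used later to argue that when $\phi$ has a large automorphism group, the values of $h(\phi, \cdot)$ are constant on entire $\textit{Aut}_{S_m}(\phi)$-orbits of $\mathcal{C}$, which in turn compresses the aggregation in Equation~\eqref{eqn:g} to a sum over orbit representatives.
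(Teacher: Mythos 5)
Your proof is correct and matches the paper's argument exactly: the paper dismisses this observation with the single remark that it ``follows immediately from the definition of the automorphism group,'' and your substitution of $\alpha_g(\phi) = \phi$ into the invariance hypothesis is precisely that one-line argument, spelled out. Nothing further is needed.
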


This follows immediately from the definition of the automorphism group (Definition \ref{def:autgroup}).

Now, we prove our main lemma making our algorithm for high-symmetry instances possible. This lemma, in essence, says that we can reduce the time used to verify all possible proofs by not checking proofs we know the validity of due to symmetry. 

\begin{lemma}
\label{lemma:compression}

\textbf{Compressing Brute-Force via Symmetry}\\
Suppose we have a function $f$ contained in the class $\SLNP^S$. Then, for any $n \in \mathbb{N}$, $i \in [j(n)]$, $g \in S_{m(n)}$, and given any $y_i \in \mathcal{C}_{n, i}^{\beta}$, we have that
   $$f_{n, i}(\phi) = \bigodot_{g \in S_m}h_n\left(\phi, \beta_{g}(y_i)\right)$$ and
   $$f_{n, i}(\phi) = \bigodot_{u \in U^{R}_{\phi}}h_n\left(\phi, \beta_{u}(y_i)\right),$$ where $U^{R}_{\phi}$ is the coset right transversal of the automorphism group $\textit{Aut}_{S_m}(\phi)$.
   
\end{lemma}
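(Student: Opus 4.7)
My plan is to prove the two identities in sequence, using three ingredients: (i) the orbit--stabilizer theorem (Lemma \ref{lemma:orbit-stabitilizer}); (ii) the $S_m$-invariance of the kernel $h$, specifically the fact (Observation \ref{obs:inv}) that $h(\phi, \beta_a(x)) = h(\phi, x)$ whenever $a \in \textit{Aut}_{S_m}(\phi)$; and (iii) the idempotence assumption $d \odot d = d$, which means any finite number of copies of the same semigroup element collapses to one copy under $\odot$. Commutativity and associativity of $\odot$ let me reorder and regroup the aggregation freely throughout.

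For the first identity, I will reindex $\bigodot_{g \in S_m} h(\phi, \beta_g(y_i))$ by the value of $\beta_g(y_i)$. Since $\mathcal{C}_i^\beta$ is precisely the orbit $\textit{Orb}_{S_m}(y_i)$ under $\beta$, every $x \in \mathcal{C}_i^\beta$ arises as $\beta_g(y_i)$ for some $g$, and Lemma \ref{lemma:orbit-stabitilizer} tells me it arises for exactly $|\textit{Aut}_{S_m}(y_i)|$ values of $g$. Hence
\[
\bigodot_{g \in S_m} h(\phi, \beta_g(y_i)) \;=\; \bigodot_{x \in \mathcal{C}_i^\beta} \underbrace{\bigl( h(\phi, x) \odot \cdots \odot h(\phi, x) \bigr)}_{|\textit{Aut}_{S_m}(y_i)| \text{ copies}} \;=\; \bigodot_{x \in \mathcal{C}_i^\beta} h(\phi, x) \;=\; f_i(\phi),
\]
where the middle step uses idempotence and the last is the definition of $f_i$ in Equation \ref{eqn:g}.

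For the second identity I will again reorganize $\bigodot_{g \in S_m}$, now by the right cosets of $H := \textit{Aut}_{S_m}(\phi)$. Each $g \in S_m$ decomposes uniquely as $g = au$ with $a \in H$ and $u \in U^R_\phi$. The group-action axiom gives $\beta_{au} = \beta_a \circ \beta_u$, and then Observation \ref{obs:inv} (applied to the certificate $\beta_u(y_i)$) yields $h(\phi, \beta_g(y_i)) = h(\phi, \beta_a(\beta_u(y_i))) = h(\phi, \beta_u(y_i))$. So along each right coset $Hu$ the aggregand is constantly $h(\phi, \beta_u(y_i))$, appearing $|H|$ times; idempotence collapses these repetitions, and combining with the first identity yields
\[
f_i(\phi) \;=\; \bigodot_{g \in S_m} h(\phi, \beta_g(y_i)) \;=\; \bigodot_{u \in U^R_\phi} h(\phi, \beta_u(y_i)).
\]

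The only real subtlety I anticipate is pinning down the coset convention: because the action axiom reads $\beta_{au} = \beta_a \beta_u$, the correct factorization for Observation \ref{obs:inv} to absorb the stabilizer element and leave the transversal factor exposed is $g = au$ with the stabilizer element on the left, which is exactly the decomposition induced by writing $S_m$ as the disjoint union of right cosets $Hu$ for $u \in U^R_\phi$. Once this is pinned down, the rest is pure rewriting with no estimates or probabilistic reasoning; morally, the lemma is just the statement that under idempotence, aggregating over the whole group $S_m$ is the same as aggregating over any traversal of a single orbit, which is exactly what the next subsection will need in order to compute $f_i(\phi)$ in polynomial time whenever $\phi$ has a small right transversal.
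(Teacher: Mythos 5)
Your proposal is correct and follows essentially the same route as the paper's proof: both identities are obtained by decomposing $S_m$ into cosets of the relevant stabilizer (left cosets of $\textit{Aut}_{S_m}(y_i)$ for the first identity, right cosets of $\textit{Aut}_{S_m}(\phi)$ for the second), observing via the orbit--stabilizer theorem and Observation \ref{obs:inv} that the aggregand is constant on each coset, and collapsing the repetitions with idempotence. The only cosmetic difference is that for the first identity you read the computation from the group sum down to the orbit sum (fiber-counting) whereas the paper expands the orbit sum up to the group sum; the coset-convention point you flag is handled the same way in the paper.
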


\begin{proof}

\begin{enumerate}

\item Now, suppose $\textit{Aut}_{S_m}(x)$ is the automorphism group of any element $x \in \mathcal{C}_n$ and $U^{L}_{x}$ is the left transversal of $\textit{Aut}_{S_m}(x)$. Due to the Orbit-Stabilizer theorem (Lemma \ref{lemma:orbit-stabilizer-qualitative}), we have that for each $x^{\prime} \in \textit{Orb}_{S_m}(x)$, there is a coset representative $u \in U^{L}_x$ such that $x^{\prime} = \beta_{u}(x)$. And hence, may write $h_n\left(\phi, x^{\prime}\right) = h_n\left(\phi, \beta_{u}(x)\right)$. Since $d \odot d = d$ for every $d \in \mathcal{D}$, we may write $h_n\left(\phi, x^{\prime}\right) = h_n\left(\phi, \beta_{u}(x)\right) \odot h_n\left(\phi, \beta_{u}(x)\right) \odot \cdots \odot h_n\left(\phi, \beta_{u}(x)\right)$ any non-zero number of times. Since $\beta_{ug}(x) = \beta_{u}(x)$ for every $g \in \textit{Aut}_{S_m}(x)$, we have that $$h_n\left(\phi, x^{\prime}\right) = \bigodot_{g \in \textit{Aut}_{S_m}(x)}h_n\left(\phi, \beta_{ug}(x)\right).$$
Subsequently, due to the Orbit-Stabilizer theorem (Lemma \ref{lemma:orbit-stabilizer-qualitative}), enumerating over all members $x^{\prime}$ of the orbit of $x$, we have that
\begin{equation*}
\begin{split}
\bigodot_{x^{\prime} \in \textit{Orb}_{S_m}(x)}h_n\left(\phi, x^{\prime}\right) & = \bigodot_{u \in U^{L}_{x}} h_n\left(\phi, \beta_{u}(x)\right) = \bigodot_{u \in U^{L}_{x}} \left(\bigodot_{g \in \textit{Aut}_{S_m}(x)}h_n\left(\phi, \beta_{ug}(x)\right)\right) \\
& = \bigodot_{g \in S_m}h_n\left(\phi, \beta_{g}(x)\right).
\end{split}
\end{equation*}
Applying this to the orbit $\mathcal{C}^{\beta}_{n, i}$ and its member $y_i$, we get that $$f_{n, i}(\phi) = \bigodot_{g \in S_m}h_n\left(\phi, \beta_{g}\left(y_i\right)\right).$$

\item Using the previous part that $f_{n, i}(\phi) = \bigodot_{g \in S_m}h_n\left(\phi, \beta_{g}\left(y_i\right)\right)$, we rewrite this as $$f_{n, i}(\phi) = \bigodot_{g \in S_m}h\left(\phi, \beta_{g}\left(y_i\right)\right) = \bigodot_{u \in U^{R}_{\phi}}\left(\bigodot_{g \in \textit{Aut}_{S_m}(\phi)}h_n\left(\phi, \beta_{gu}(x)\right)\right)$$ using the Orbit-Stabilizer theorem (Lemma \ref{lemma:orbit-stabilizer-qualitative}). Now, keep in mind that $\beta_{gu}(x) = \beta_{g}\left(\beta_{u}(x)\right)$. Using Observation \ref{obs:inv}, we get that for every $g \in \textit{Aut}_{S_m}(\phi)$, $$h_n\left(\phi, \beta_{gu}(x)\right) = h_n\left(\phi, \beta_{g}\left(\beta_{u}(x)\right)\right) = h_n\left(\phi, \beta_{u}(x)\right).$$
      Using the fact that $d \odot d = d$ for every $d \in \mathcal{D}$, we can now show that $$\bigodot_{g \in \textit{Aut}_{S_m}(\phi)}h_n\left(\phi, \beta_{gu}(x)\right) = \bigodot_{g \in \textit{Aut}_{S_m}(\phi)}h_n\left(\phi, \beta_{u}(x)\right) =  h_n\left(\phi, \beta_{u}(x)\right).$$
      Subsequently, we have that $$f_{n, i}(\phi) = \bigodot_{u \in U^{R}_{\phi}}\left(\bigodot_{g \in \textit{Aut}_{S_m}(\phi)}h_n\left(\phi, \beta_{gu}(x)\right)\right) = \bigodot_{u \in U^{R}_{\phi}}h_n\left(\phi, \beta_{u}(y_i)\right).$$
      
\end{enumerate}

\end{proof}

And now, we prove the existence of our polynomial-time algorithm to compute each subfunction $f_{n, i}$.

\begin{lemma}
\label{lemma:f_i}

Suppose we have a function $f$ in the class $\SLNP^S$. Then given any input $\phi \in \Sigma^{p(n)}$ such that the automorphism group size is $|\textit{Aut}_{S_m}(\phi)| \geq m! /\lp16 p(n) \ln(|\Sigma|) /\epsilon^2\rp$ and $i \in [j(n)]$, $f_{n, i}(\phi)$ is deterministically computable in $\poly(n, 1/\epsilon)$-time.

\end{lemma}

\begin{proof}

From the definition of the class $\SLNP^G$ (Definition \ref{def:syminv}), for each $i \in [j(n)]$, we can compute a member $x_i$ of the orbit $\mathcal{C}^{\beta}_{n, i}$ in $\poly(n)$ time. From Definition \ref{def:symgroupsyminv}, we have that the time complexity of computing $x_i$ is bounded by a polynomial in $n$. Using part 2 of Lemma \ref{lemma:compression}, we have that $$f_{n, i}(\phi) = \bigodot_{u \in U^{R}_{\phi}}h_n\left(\phi, \beta_{u}(x_i)\right).$$
We compute the right transversal $U^{R}_{\phi}$ using the deterministic procedure specified in Lemma \ref{lemma:SS} for coset representatives. Since $|\textit{Aut}_{S_m}(\phi)| \geq m! /\lp16 p(n) \ln(|\Sigma|) / \epsilon^2\rp$, due to the Orbit-Stabilizer theorem (Lemma \ref{lemma:orbit-stabilizer-qualitative}), we have that $\left|U^{R}_{\phi}\right| \leq 16 p(n) \ln(|\Sigma|) / \epsilon^2$. Hence, due to Lemma \ref{lemma:SS}, $U^{R}_{\phi}$ is computable in $\poly(n, 1/\epsilon)$-time and contains $\poly(n, 1/\epsilon)$ members.

Now, we enumerate over $U^{R}_{\phi}$ and compute $h_n\left(\phi, \beta_{u}\lp x_i\rp\right)$ for each $u \in U^{R}_{\phi}$. We can compute each of these answers in $\poly(n)$-time due to the definition of the class $\SLNP^S$ (Definition \ref{def:syminv} and Definition \ref{def:symgroupsyminv}). Using all these answers, we can compute $f_{n, i}(\phi) = \bigodot_{u \in U^{R}_{\phi}}h_n\left(\phi, \beta_{u}\lp x_i\rp\right)$. Since the right transversal $U^{R}_{\phi}$ contains polynomially many elements, we only evaluate the function $h_n$ and perform the operation $\odot$ polynomially many times.

\end{proof}

Now, we combine these $f_{n, i}$'s to obtain our value $f_n(\phi)$.

\begin{lemma}
\label{lemma:highlysymalg}

Suppose we have a function $f$ contained in the class $\SLNP^S$ and are given an input $\phi \in \Sigma^{p(n)}$ with the promise that the automorphism group size is $\left|\textit{Aut}_{S_m}(\phi)\right| \geq m! /\lp16 p(n) \ln(|\Sigma|) / \epsilon^2\rp$. Then, $f_n(\phi)$ is deterministically computable in $\poly \left(n, 1/\epsilon\right)$-time.

\end{lemma}

\begin{proof}

Using the $\poly(n, 1/\epsilon)$-algorithm in Lemma \ref{lemma:f_i}, we enumerate $f_{n, i}(\phi)$ for every $i \in [j(n)]$. Following this, we compute $$f_n(\phi) = \bigodot_{i \in [j(n)]}f_{n, i}(\phi).$$
This can be done in $\poly(n, 1/\epsilon)$-time since $j(n) = \poly(n)$ (Definition \ref{def:symgroupsyminv}).

\end{proof}

\subsection{Deterministic Reductions for $\SLNP^S$}

We state our main theorem, giving deterministic polynomial-parameter recovery reductions for our class of functions.

\begin{theorem}

\textbf{(Theorem \ref{theorem:resultsSICSAF} Restated)}\\
Suppose we have a function $f$ in the class $\SLNP^S$. Then, for every $\epsilon > 0$, we have a \\$\left(\poly\left(n, 1/\epsilon\right), 0.5 - \epsilon, \exp\left(-\Omega(\poly(n))\right)\right)$-recovery reduction in the random noise model for $f$.

\end{theorem}

\begin{proof}

First, given the input $\phi \in \Sigma^{p(n)}$, we compute the size of the automorphism group $\textit{Aut}_{S_m}(\phi)$ using the deterministic procedure implied by Lemma \ref{lem:autsize}. This takes $\poly(n)$-time. If $\left|\textit{Aut}_{S_m}(\phi)\right| \geq m! /\lp16 p(n) \ln(|\Sigma|) / \epsilon^2\rp$, we use the deterministic procedure for highly symmetric instances specified in Lemma \ref{lemma:highlysymalg} to compute $f_n(\phi)$ in $\poly(n, 1/\epsilon)$-time. Otherwise, if $\left|\textit{Aut}_{S_m}(\phi)\right| < m! /\lp16 p(n) \ln(|\Sigma|) / \epsilon^2\rp$, then we use the $\poly(n, 1/\epsilon)$-time and $\poly(n, 1/\epsilon)$-query procedure specified in Lemma \ref{lemma:asymalg} that returns the correct answer with probability at least $1-1/|\Sigma|^{p(n)} = 1 -\exp\left(-\Omega(\poly(n))\right)$ over the choice of random corruptions of the operator $\mathcal{N}_{0.5 - \epsilon}$.

\end{proof}

\begin{remark}
\label{remark:multiplication}

Our paradigm so far has been to query on isomorphic instances wherever possible and compute the answer ourselves when we cannot obtain probabilistic guarantees. We note here that this paradigm falls short for some functions and does not always hold as a black-box paradigm. The example is the function $f:\{\, 0,1 \,\}^{n}\times\{\, 0,1 \,\}^n \to \{\, 0,1 \,\}^{2n+1}$ multiplying two $n$ bit integers together. Consider the task of multiplying two $n$-bit prime numbers $p$ and $q$ larger than $2^{n-1}$. If we wanted to use the ``query isomorphic instances'' idea, we can only use the inputs $(p, q)$ and $(q, p)$ since these are the only factorizations of $pq$ with both inputs representable in $n$ bits. The probabilistic guarantees here are poor. If we apply $\mathcal{N}_{0.49}$ to the multiplication table, with probability $1-0.51^2 = 0.7399$, at least one of the truth table entries for the inputs $(p, q)$ and $(q, p)$ is false and looking for a majority is no longer a fruitful approach. Hence, we must manually compute this answer ourselves. Due to the work of \cite{Harvey2021}, there is a $O(n\log n)$ time algorithm for multiplying $2$ $n$ bit integers. Under the \textit{Network Coding Conjecture} \citep{Li2004, Langberg2009}, this algorithm was shown to be optimal \citep{Afshani2019}. If integer multiplication has a non-trivial recovery reduction ($o(n \log n)$-time) under our paradigm, then for every $n$-bit prime pair $p$ and $q$, both larger than $2^{n-1}$, we would be able to multiply them in $o(n \log n)$-time. It seems unlikely to us that multiplying prime numbers should be asymptotically easier than multiplying arbitrary $n$ bit numbers.

\end{remark}

\section{Random Noise Reductions for $\NP$-Hard Problems}
\label{section:randomnp}

We now give the lemma proving that our $\NP$-hard problems are indeed contained in the class $\SLNP^S$. The proof is straightforward, but tedious, so we provide the proof in Appendix \ref{appendix:B}.

\begin{lemma}
\label{lemma:nphsicsaf}

The following functions are contained in the class $\SLNP^S$ (as defined in Definition \ref{def:symgroupsyminv}).

\begin{enumerate}

\item $\SAT$

\item $k\SAT$
  
\item $k\CSP$
  
\item Max-$k\CSP$
  
\item $\INDSET$
  
\item $\VERTEXCOVER$
  
\item $\CLIQUE$
  
\item $k\COL$
  
\item $\HAMCYCLE$
  
\item $\HAMPATH$
  
\end{enumerate}

\end{lemma}

Now, we complete the proof of our main theorem, putting all the ingredients together.

\begin{theorem}

\textbf{(Theorem \ref{theorem:results} Restated)}\\
For every $\epsilon > 0$, the following functions have deterministic\\ $\left(\poly\left(n, 1/\epsilon\right), 0.5 - \epsilon, \exp\left(-\Omega\left(\poly(n)\right)\right)\right)$-recovery reductions in the random noise model:

\begin{enumerate}

\item $\SAT$
  
\item $k\SAT$
  
\item $k\CSP$
  
\item Max-$\CSP$
  
\item $\INDSET$
  
\item $\VERTEXCOVER$
  
\item $\CLIQUE$
  
\item $k\COL$
  
\item $\HAMCYCLE$
  
\item $\HAMPATH$
  
\end{enumerate}

\end{theorem}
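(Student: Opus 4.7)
The plan is to deduce the theorem as a direct corollary of the unified meta-theorem proved in the previous subsection, by invoking Lemma \ref{lemma:nphsicsaf} to verify the structural hypotheses for each problem on the list. First I would cite Lemma \ref{lemma:nphsicsaf} to obtain that each of the ten problems is realized as a $\poly(n)$-Symmetric Group SICSAF in the sense of Definition \ref{def:symgroupsyminv}, with an explicit choice of certificate space $\mathcal{C}$, aggregation semigroup $(\mathcal{D},\odot)$, symmetric group $S_m$ with $m=\poly(n)$, and group actions $\alpha,\beta$. This takes care of clauses (1)--(4) of Definition \ref{def:syminv} and the parameter bounds of Definition \ref{def:symgroupsyminv}, so no further structural verification is needed here.

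Next I would check the idempotence hypothesis $d \odot d = d$ required by the meta-theorem. For the nine decision problems ($\SAT$, $k\SAT$, $k\CSP$ satisfiability, $\INDSET$, $\VERTEXCOVER$, $\CLIQUE$, $k\COL$, $\HAMCYCLE$, $\HAMPATH$) the semigroup is $(\{0,1\},\vee)$, which is manifestly idempotent. For Max-$k\CSP$ the natural semigroup is $(\mathbb{N}_{\le m}, \max)$ on the number of satisfied constraints, which is again idempotent. Each of these assignments is exactly what Lemma \ref{lemma:nphsicsaf} uses, so the idempotence check reduces to one line per problem.

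With both hypotheses verified, I would then apply the main meta-theorem of the previous subsection, which states that every $\poly(n)$-Symmetric Group SICSAF whose semigroup satisfies $d \odot d = d$ admits, for every $\epsilon > 0$, a deterministic $\left(\poly(n,1/\epsilon),\, 0.5-\epsilon,\, \exp(-\Omega(\poly(n)))\right)$-recovery reduction in the random noise model. Applied once for each of the ten problems, this yields the claimed recovery reductions and completes the proof.

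The only potentially tricky step is the idempotence check, since for problems whose certificate count (rather than existence) one wants to compute, the natural semigroup would be $(\mathbb{N},+)$ or $(\mathbb{Z}_2,\oplus)$, which are \emph{not} idempotent and are ruled out by Remark \ref{remark:restriction}. However, for the ten problems in the statement, the aggregation operations used in Lemma \ref{lemma:nphsicsaf} are $\vee$ or $\max$, so the hypothesis is satisfied and no genuine obstacle arises. All the heavy lifting -- the orbit-stabilizer based case split into symmetric and asymmetric instances, the concentration argument of Lemma \ref{lemma:smallaut}, the compression lemma (Lemma \ref{lemma:compression}), and the deterministic permutation-group primitives of Lemma \ref{lemma:SS} -- is already bundled inside the meta-theorem.
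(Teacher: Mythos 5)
Your proposal is correct and matches the paper's own proof essentially verbatim: cite Lemma \ref{lemma:nphsicsaf} to establish that each problem is a $\poly(n)$-Symmetric Group SICSAF, observe that the semigroups in play are $\left(\{0,1\},\vee\right)$ and $\left(\mathbb{Z},\max\right)$, both idempotent, and then invoke the meta-theorem. No differences worth noting.
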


\begin{proof}

We note that the semilattices in question are $\left(\{\, 0,1 \,\}, \vee\right)$ and $\left(\mathbb{Z}, \max\right)$ and $b \vee b = b$ for every $b \in \{\, 0,1 \,\}$ and $\max\{\, r, r \,\} = r$ for every $r \in \mathbb{Z}$. Hence, this follows immediately from Lemma \ref{lemma:nphsicsaf} and Theorem \ref{theorem:resultsSICSAF}.

\end{proof}

\section{Random Noise Reductions for Fine-Grained Problems}
\label{section:randomp}

Now, we provide the claimed recovery reductions for fine-grained problems. We start with the $\OV$ problem.

\begin{theorem}

\textbf{(Theorem \ref{thoerem:fg}, Part 1 Restated)}\\
For every $\epsilon = 1/\polylog(n)$ and dimension $d = O\left(n^{1-\gamma}\right)$ for some $\gamma > 0$, there is a\\ $\left(\tilde{O}(nd), 0.5 - \epsilon, 1 - \exp\left(-\Omega(nd)\right) \right)$-recovery reduction in the random noise model for the $\OV$ problem in $d$ dimensions.

\end{theorem}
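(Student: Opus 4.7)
The plan is to adapt the large-orbit/small-orbit dichotomy of Section~\ref{section:techoverview} to \OV, using the natural $S_n$-action that permutes the order of the $n$ input vectors (under which \OV is manifestly invariant), and to replace the deterministic coset enumeration of the main meta-theorem with uniform random sampling in order to drive the running time down to $\tilde{O}(nd)$.

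I will first set the orbit threshold $N_{\min} := 16\, nd \ln 2 / \epsilon^2 = \tilde{O}(nd)$ and apply Lemma~\ref{lemma:smallaut} with $|\Sigma| = 2$ and $p(n) = nd$, taking as the family $(S_i)_i$ the collection of $S_n$-orbits of $\{0,1\}^{nd}$ of size at least $N_{\min}$. Since orbits partition the input space, there are at most $2^{nd}$ of them, so the lemma yields that with probability at least $1 - 2^{-nd}$ over the corruptions produced by $\mathcal{N}_{0.5 - \epsilon}$, every such orbit retains at least a $(1/2 + \epsilon/2)$-fraction of uncorrupted entries in $\mathtt{T}^{\prime}_{\OV}$; I will call this the good corruption event.

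Given an input $\phi = (v_1, \ldots, v_n)$, the algorithm reads $\phi$ and computes the number $K$ of distinct vectors via radix-sort or hashing in $\tilde{O}(nd)$ time. If $K \le 3$, then $\phi$ has at most three distinct vectors, so the algorithm checks all pairs of distinct vectors together with the self-pair of any vector appearing at least twice for orthogonality in $O(d)$ time per check, returning the exact value of $\OV(\phi)$. If $K \ge 4$, then even in the most unbalanced multiplicity pattern (one multiplicity equal to $n - K + 1$, the other $K-1$ equal to $1$) the orbit of $\phi$ has size $n(n-1)\cdots(n-K+2) \ge (n-3)^3 = \Theta(n^3)$, which under $d = O(n^{1-\gamma})$ comfortably exceeds $N_{\min} = O(n^{2-\gamma}\polylog n)$ for large $n$. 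In this regime the algorithm samples $k = \Theta(1/\epsilon^2) = \Theta(\polylog n)$ independent uniformly random permutations $\pi_1, \ldots, \pi_k \in S_n$ via Fisher--Yates, queries $\mathtt{T}^{\prime}_{\OV}(\pi_j(\phi))$ for each $j$, and returns the majority vote. Under the good corruption event each queried location is uniformly distributed over the (large) orbit of $\phi$ and hence uncorrupted with probability at least $1/2 + \epsilon/2$, so a Chernoff bound gives majority correctness with probability at least $2/3$ over the algorithm's internal coins.

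The total running time is dominated by the one-time sort plus the $k = \polylog(n)$ queries, each of which needs $O(nd)$ time to write down $\pi_j(\phi)$, for a total of $\tilde{O}(nd)$. The main technical point to pin down will be the orbit lower bound in the $K \ge 4$ case: the hypothesis $d = O(n^{1-\gamma})$ is exactly what is needed to guarantee $N_{\min} = o(n^3)$, so that a fixed constant threshold (here $K = 4$) simultaneously pushes every large-$K$ orbit safely above $N_{\min}$ while keeping the small-$K$ direct-computation branch inside the near-linear time budget. Apart from this calibration, the argument is essentially a randomised, input-size-aware variant of the proofs of Lemmas~\ref{lemma:asymalg} and~\ref{lemma:highlysymalg}.
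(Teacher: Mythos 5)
Your proposal is correct and follows essentially the same route as the paper's proof: the same $S_n$-action permuting the $n$ vectors, the same dichotomy between at most three distinct vectors (solved directly in $O(nd)$ time) and at least four (orbit of size $\Omega(n^3) \gg 16nd\ln 2/\epsilon^2$, handled by Lemma~\ref{lemma:smallaut} applied to the orbits, followed by $\polylog(n)$ uniformly random permutation queries and a majority vote via Chernoff). Your write-up is if anything slightly more careful than the paper's on the orbit-size lower bound and on why $d = O(n^{1-\gamma})$ is needed for the threshold calibration.
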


\begin{proof}

Suppose the input is given as $n$, dimension $d$ vectors $V = \left(v_1, v_2, \ldots, v_n\right)$. First, we remark that we use the group action $\alpha: S_n \times \{\, 0,1 \,\}^{nd} \to \{\, 0,1 \,\}^{nd}$ such that $\alpha_{\pi}(V) = \left(v_{\pi(1)}, v_{\pi(2)}, \ldots, v_{\pi(n)}\right)$. 

 Notice that if $V$ had four or more distinct vectors, then the index of the automorphism group $\textit{Aut}_{S_n}(V)$ is at least $\Omega \left(n^3\right)$. Hence, if the input $V$ has at most three distinct vectors, in $O(nd)$-time, we can scan the input, list the (at most three) distinct vectors and compute $O(1)$ dot products in time $O(nd)$. If one of the possible dot products is $0$, we return $1$ for the $\OV$ problem.

Now, suppose the input $V$ has four or more distinct vectors - we can check this in $O(nd)$-time. Then the automorphism group is of size $O\left(n! / n^3\right) = o\left(n!/n^2\right)$ for all sufficiently large $n$. Now, suppose we have the corrupted truth table $\mathtt{T}^{\prime}_{f} = \mathcal{N}_{0.5 - \epsilon}\mathtt{T}_{f}$. Suppose for each instance $V$ with automorphism group index $\omega\left(n^2\right)$, we define $S_{V} = \textit{Orb}_{S_n}(V)$\footnote{If the index of the automorphism group $\textit{Aut}_{S_n}(V)$ is $O(n^2)$, we set $S_{V} = \{\, 0,1 \,\}^{nd}$ as a formality.}. Now, due to Lemma \ref{lemma:smallaut}, with probability $1-2^{-nd}$, every such set $S_{V}$ with $\textit{Aut}_{S_n}(V)$ index size at least $\omega\left(n^2\right) > 16 nd \ln(2) / \epsilon^2$ has at least $\left(0.5+\epsilon/2\right)$-fraction of its instances correct. Now, our strategy is to sample $O\left (\log n/\epsilon\right)$ random permutations from $S_n$, query the corrupted truth table $\mathtt{T}^{\prime}_{f}$ on the input $\alpha_{\pi}(V)$ for each sampled permutation $\pi$. Due to the Chernoff bound \citep{Mitzenmacher2017}, with probability at least $1- 1/n > 2/3$, the majority is the correct answer. This takes $\tilde{O}(nd/\epsilon^2) = \tilde{O}(nd)$-time. Hence, we have our randomized recovery reduction.

\end{proof}

Now, we provide our recovery reduction for the \textit{Parity $k$-Clique} Problem.

\begin{theorem}

\textbf{(Theorem \ref{thoerem:fg}, Part 2 Restated)}\\
For any constant $k > 0$ and $\epsilon = 1/\polylog(n)$, we have a\\ $\left(\tilde{O}(n^2), 0.5-\epsilon, 1-2^{-{{n} \choose {2}}}\right)$-recovery reduction in the random noise model for \textit{Parity $k$-Clique}, the problem of computing the lowest order bit of the number of $k$ cliques in a graph.

\end{theorem}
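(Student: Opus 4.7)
The plan is to mirror the Orthogonal Vectors argument, using the natural $S_n$-action $\alpha$ on graphs by vertex relabeling (for which the parity of the $k$-clique count is invariant). I fix the threshold $T := 16\binom{n}{2}\ln(2)/\epsilon^2 = \tilde{O}(n^2)$ coming from Lemma \ref{lemma:smallaut} with $p(n) = \binom{n}{2}$ and $|\Sigma| = 2$, and split into two cases on the orbit size of the input graph $H$: a ``low-symmetry'' regime handled by random sampling against $\mathtt{T}^{\prime}_{f}$, and a ``high-symmetry'' regime handled by a closed-form direct computation.

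For the low-symmetry case $|\textit{Orb}_{S_n}(H)| \geq T$, I would invoke Lemma \ref{lemma:smallaut} with $S_H := \textit{Orb}_{S_n}(H)$, obtaining that with probability at least $1-2^{-\binom{n}{2}}$ over $\mathcal{N}_{0.5-\epsilon}$, every such orbit retains at least a $(0.5+\epsilon/2)$-fraction of uncorrupted entries. The algorithm then draws $\Theta(\log n / \epsilon^2)$ uniform permutations $\pi \in S_n$, evaluates $\alpha_\pi(H)$ in $O(n^2)$ time each, queries $\mathtt{T}^{\prime}_{f}$, and returns the majority vote. Since $\alpha_\pi(H)$ is uniform on $\textit{Orb}_{S_n}(H)$ (every fiber of $\pi \mapsto \alpha_\pi(H)$ has size $|\textit{Aut}_{S_n}(H)|$), a Chernoff bound gives correctness probability at least $2/3$, and the total runtime is $\tilde{O}(n^2/\epsilon^2) = \tilde{O}(n^2)$.

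For the high-symmetry case $|\textit{Orb}_{S_n}(H)| < T$, the automorphism group satisfies $|\textit{Aut}_{S_n}(H)| > (n-2)!/\polylog(n)$. Classical results on low-index subgroups of $S_n$ imply that there is a set $V_0 \subseteq V(H)$ with $|V_0| \geq n - O(1)$ on which $H$ is either complete or empty; moreover, each vertex outside $V_0$ is either universal to or isolated from $V_0$, with arbitrary edges among the $O(1)$ exceptional vertices. This yields only $O(1)$ isomorphism types, and for each the number of $k$-cliques decomposes into a sum of $O(1)$ summands of the form $\binom{|V_0|}{j} \cdot (\text{edge indicator})$, each a constant-size integer computable in $O(1)$ time so that its parity (and hence the sum's parity) is immediate. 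Matching $H$ to one of these canonical forms takes $O(n^2)$ time: compute the degree sequence, identify the $O(1)$ exceptional vertices as those whose degree deviates from the dominant value, and verify the edges incident to them.

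The two cases combine into the final algorithm: run the $O(n^2)$-time canonical-form check first; on a match, return the closed-form parity, and otherwise invoke the sampling procedure (which is sound since a non-match certifies $|\textit{Orb}_{S_n}(H)| \geq T$). The $2^{-\binom{n}{2}}$ failure bound over corruptions follows directly from Lemma \ref{lemma:smallaut} as a union bound over orbits. The main obstacle I expect is the structural step in the high-symmetry case: pinning down a precise enough statement about subgroups of $S_n$ of index $\tilde{O}(n^2)$ to make the canonical-form enumeration rigorous and verifiable in $O(n^2)$, while cleanly handling the alternating-group boundary cases and the small-$n$ exceptions where the classification degenerates.
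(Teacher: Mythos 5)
Your architecture is the same as the paper's: split on whether the orbit of $H$ under the $S_n$ relabeling action clears the $16\ln(2)\binom{n}{2}/\epsilon^2$ threshold from Lemma \ref{lemma:smallaut}, handle the large-orbit case by sampling $\polylog(n)$ uniform permutations and majority-voting over queries to $\mathtt{T}^{\prime}_f$ (your $\Theta(\log n/\epsilon^2)$ sample count is the right one for a Chernoff majority with advantage $\epsilon/2$), and handle the small-orbit case by recognizing the input as one of constantly many highly symmetric graphs and computing the parity in closed form. The one place you diverge is the step you yourself flag as the obstacle: you want to import a classical theorem on subgroups of $S_n$ of index $o(n^3)$ (that they contain an alternating group on all but $O(1)$ points, forcing $H$ to be complete or empty on a set $V_0$ of $n-O(1)$ vertices with each exceptional vertex universal to or disjoint from $V_0$). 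That statement is true and would close the gap, but the paper does not invoke it; instead Appendix \ref{appendix:C} (Lemmas \ref{lemma:22} and \ref{lemma:23}) gives a self-contained elementary classification via degree partitions, showing that $\left|\textit{Aut}(H)\right| = \omega(n!/n^3)$ forces $H$ to be one of exactly $12$ explicit graphs ($K_n$, $K_n$ minus an edge, $K_{n-1}$ plus an isolated or pendant vertex, $K_{n-2}$ plus two isolated vertices or plus a disjoint edge, and their complements), each recognizable and clique-countable in $\tilde{O}(n^2)$ time (Lemma \ref{lemma:24}). The paper's route buys you an unconditional, fully explicit case list with no alternating-group boundary issues or small-$n$ degeneracies to worry about; your route would be shorter if you are willing to cite the Jordan/Liebeck-type bound, but as written the structural claim is asserted rather than proved, so to make the argument complete you would either need that citation made precise or the paper's explicit case analysis.
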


\begin{proof}

First, we note that due to Lemma \ref{lemma:23} and Lemma \ref{lemma:24} in Appendix \ref{appendix:C}, we have that for sufficiently large $n$, if a graph has automorphism group size $\left|\textit{Aut}(H) \right| = \omega\left(n! / n^3\right)$, then it is one of twelve graphs, each of which we can recognize and count the number of $k$-cliques on (and hence the parity bit) in $\tilde{O}(n^2)$-time.

If it is not one of those twelve graphs, then $|\textit{Aut}(H)| = O\left(n! / n^3\right)$, and hence, the index of the automorphism group is $\Omega\left(n^3\right) > 16  \ln(2) n^2 / \epsilon^2$. In this case, Lemma \ref{lemma:smallaut} says that with probability at least $1-2^{-{{n} \choose {2}}}$ over the choice of corruptions, the corrupted truth table $\mathtt{T}^{\prime}_{f} = \mathcal{N}_{0.5 - \epsilon}\mathtt{T}_{f}$ has at least $\left(0.5+\epsilon/2\right)$-fraction of entries unflipped over every predefined set $S_{H}$ of size at least $16 \ln (2) n^2 / \epsilon^2$. Hence, for each graph $H$ with $\textit{Aut}(H) = O\left(n! / n^3\right)$, we define $S_{H} = \textit{Orb}_{S_n}(H)$ and $S_{H} = \{\, 0,1 \,\}^{{n} \choose 2}$ for the other twelve cases. Hence, since $\textit{Aut}(H) = O\left(n! / n^3\right)$, we can query the truth table on $O \left(\log n / \epsilon \right)$ random members of $\textit{Orb}_{S_n}(H)$ and return the majority value. We do this by randomly sampling permutations $\pi$ from $S_n$ and querying $\mathtt{T}^{\prime}_{f}$ on $\alpha_{\pi}(H)$ for the $O\left(\log n / \epsilon \right)$ randomly chosen permutations $\pi$. Due to the Chernoff bound \citep{Mitzenmacher2017}, with probability $1-1/n > 0$ over the randomness of our algorithm, we get the correct majority answer for \textit{Parity $k$-Clique} on the input $H$.

\end{proof}

\begin{remark}

Note that in both cases, we use randomness so we can perform our reductions in almost-linear time. This is because the deterministic permutation algorithms of \cite{Sims1970} and \cite{Furst1980} tend to add large polynomial overheads to the algorithm.

\end{remark}

\subsubsection*{Acknowledgments}
We are grateful to Emanuele Viola for useful feedback on presentation and for pointing us to \cite{Akavia2008} and \cite{Akavia2006}.

\bibliographystyle{apalike}

\bibliography{main}

\begin{thebibliography}{}

\bibitem[Abboud and Williams, 2014]{Abboud2014}
Abboud, A. and Williams, V.~V. (2014).
\newblock Popular conjectures imply strong lower bounds for dynamic problems.
\newblock In {\em IEEE 55th Annual Symposium on Foundations of Computer Science
  (FOCS)}, pages 434--443.

\bibitem[Afshani et~al., 2019]{Afshani2019}
Afshani, P., Freksen, C.~B., Kamma, L., and Larsen, K.~G. (2019).
\newblock {Lower Bounds for Multiplication via Network Coding}.
\newblock In {\em 46th International Colloquium on Automata, Languages, and
  Programming (ICALP)}, pages 10:1--10:12.

\bibitem[Ajtai, 1996]{Ajtai1996}
Ajtai, M. (1996).
\newblock Generating hard instances of lattice problems (extended abstract).
\newblock In {\em Proceedings of the Twenty-Eighth Annual ACM Symposium on
  Theory of Computing (STOC)}, pages 99--108.

\bibitem[Akavia, 2008]{Akavia2008}
Akavia, A. (2008).
\newblock {\em Learning noisy characters, multiplication codes, and
  cryptographic hardcore predicates}.
\newblock PhD thesis, Massachusetts Institute of Technology.

\bibitem[Akavia et~al., 2006]{Akavia2006}
Akavia, A., Goldreich, O., Goldwasser, S., and Moshkovitz, D. (2006).
\newblock On basing one-way functions on {NP}-hardness.
\newblock In {\em Proceedings of the Thirty-Eighth Annual ACM Symposium on
  Theory of Computing (STOC)}, pages 701--710.

\bibitem[Angluin and Laird, 1988]{Angluin1988}
Angluin, D. and Laird, P. (1988).
\newblock Learning from noisy examples.
\newblock {\em Machine Learning}, 2(4):343--370.

\bibitem[Asadi et~al., 2022]{Asadi2022}
Asadi, V.~R., Golovnev, A., Gur, T., and Shinkar, I. (2022).
\newblock Worst-case to average-case reductions via additive combinatorics.
\newblock In {\em Proceedings of the 54th Annual ACM SIGACT Symposium on Theory
  of Computing (STOC)}, pages 1566--1574.

\bibitem[Asadi et~al., 2024]{Asadi2024}
Asadi, V.~R., Golovnev, A., Gur, T., Shinkar, I., and Subramanian, S. (2024).
\newblock Quantum worst-case to average-case reductions for all linear
  problems.
\newblock In {\em Proceedings of the 2024 Annual ACM-SIAM Symposium on Discrete
  Algorithms (SODA)}, pages 2535--2567.

\bibitem[Babai and Laplante, 1999]{Babai1999}
Babai, L. and Laplante, S. (1999).
\newblock Stronger separations for random-self-reducibility, rounds, and
  advice.
\newblock In {\em Proceedings of the Fourteenth Annual IEEE Conference on
  Computational Complexity (CCC)}, pages 98--104.

\bibitem[Ball et~al., 2017]{Ball2017}
Ball, M., Rosen, A., Sabin, M., and Vasudevan, P.~N. (2017).
\newblock Average-case fine-grained hardness.
\newblock In {\em Proceedings of the 49th Annual ACM SIGACT Symposium on Theory
  of Computing (STOC)}, pages 483--496.

\bibitem[Berman and Hartmanis, 1977]{Berman1997}
Berman, L. and Hartmanis, J. (1977).
\newblock On isomorphisms and density of {NP} and other complete sets.
\newblock {\em SIAM Journal on Computing}, 6(2):305--322.

\bibitem[Bogdanov and Trevisan, 2006]{Bogdanov2006}
Bogdanov, A. and Trevisan, L. (2006).
\newblock On worst‐case to average‐case reductions for {NP} problems.
\newblock {\em SIAM Journal on Computing}, 36(4):1119--1159.

\bibitem[Boix-Adser\`{a} et~al., 2019]{Boix2019}
Boix-Adser\`{a}, E., Brennan, M., and Bresler, G. (2019).
\newblock The average-case complexity of counting cliques in {E}rdös--{R}ényi
  hypergraphs.
\newblock In {\em IEEE 60th Annual Symposium on Foundations of Computer Science
  (FOCS)}, pages 1256--1280.

\bibitem[Cai et~al., 1999]{Cai1999}
Cai, J., Pavan, A., and Sivakumar, D. (1999).
\newblock On the hardness of permanent.
\newblock In {\em 16th Annual Symposium on Theoretical Aspects of Computer
  Science (STACS)}, pages 90--99.

\bibitem[Cesa-Bianchi et~al., 1999]{Bianchi1999}
Cesa-Bianchi, N., Dichterman, E., Fischer, P., Shamir, E., and Simon, H.~U.
  (1999).
\newblock Sample-efficient strategies for learning in the presence of noise.
\newblock {\em Journal of the ACM}, 46(5):684--719.

\bibitem[Chen et~al., 2006]{Chen2006}
Chen, J., Huang, X., Kanj, I.~A., and Xia, G. (2006).
\newblock Strong computational lower bounds via parameterized complexity.
\newblock {\em Journal of Computer and System Sciences}, 72(8):1346--1367.

\bibitem[Dalirrooyfard et~al., 2020]{Mina2020}
Dalirrooyfard, M., Lincoln, A., and Williams, V.~V. (2020).
\newblock New techniques for proving fine-grained average-case hardness.
\newblock In {\em IEEE 61st Annual Symposium on Foundations of Computer Science
  (FOCS)}, pages 774--785.

\bibitem[Demillo and Lipton, 1978]{Demillo1978}
Demillo, R.~A. and Lipton, R.~J. (1978).
\newblock A probabilistic remark on algebraic program testing.
\newblock {\em Information Processing Letters}, 7(4):193--195.

\bibitem[Dubhashi and Ranjan, 1996]{Dubhashi1996}
Dubhashi, D. and Ranjan, D. (1996).
\newblock Balls and bins: A study in negative dependence.
\newblock {\em Basic Research in Computer Science (BRICS)}.

\bibitem[Erdős and Rényi, 1963]{Erdos1963}
Erdős, P. and Rényi, A. (1963).
\newblock Asymmetric graphs.
\newblock {\em Acta Mathematica Academiae Scientiarum Hungaricae}, 14:295--315.

\bibitem[Feige and Lund, 1996]{Feige1996}
Feige, U. and Lund, C. (1996).
\newblock On the hardness of computing the permanent of random matrices.
\newblock {\em Computational Complexity}, 6:101--132.

\bibitem[Feigenbaum and Fortnow, 1993]{Feigenbaum1993}
Feigenbaum, J. and Fortnow, L. (1993).
\newblock Random-self-reducibility of complete sets.
\newblock {\em SIAM Journal on Computing}, 22(5):994--1005.

\bibitem[Feigenbaum et~al., 1992]{Feigenbaum1992}
Feigenbaum, J., Fortnow, L., Lund, C., and Spielman, D. (1992).
\newblock The power of adaptiveness and additional queries in
  random-self-reductions.
\newblock In {\em Proceedings of the Seventh Annual Structure in Complexity
  Theory Conference}, pages 338--346.

\bibitem[Flajolet et~al., 1992]{Flajolet1992}
Flajolet, P., Gardy, D., and Thimonier, L. (1992).
\newblock Birthday paradox, coupon collectors, caching algorithms and
  self-organizing search.
\newblock {\em Discrete Applied Mathematics}, 39(3):207--229.

\bibitem[Furst et~al., 1980]{Furst1980}
Furst, M., Hopcroft, J., and Luks, E. (1980).
\newblock Polynomial-time algorithms for permutation groups.
\newblock In {\em 21st Annual Symposium on Foundations of Computer Science
  (FOCS)}, pages 36--41.

\bibitem[Gemmell and Sudan, 1992]{Gemmell1992}
Gemmell, P. and Sudan, M. (1992).
\newblock Highly resilient correctors for polynomials.
\newblock {\em Information Processing Letters}, 43(4):169--174.

\bibitem[Goldreich, 2020]{Goldreich2020}
Goldreich, O. (2020).
\newblock On counting $ t $-cliques mod 2.
\newblock {\em Electronic Colloquium on Computational Complexity}, Report No.
  104(Revision 3):1--6.

\bibitem[Goldreich and Rothblum, 2018]{Goldreich2018}
Goldreich, O. and Rothblum, G. (2018).
\newblock Counting $t$-cliques: Worst-case to average-case reductions and
  direct interactive proof systems.
\newblock In {\em IEEE 59th Annual Symposium on Foundations of Computer Science
  (FOCS)}, pages 77--88.

\bibitem[Harvey and van~der Hoeven, 2021]{Harvey2021}
Harvey, D. and van~der Hoeven, J. (2021).
\newblock Integer multiplication in time ${O}(n \log n)$.
\newblock {\em Annals of Mathematics}, 193:563--617.

\bibitem[Impagliazzo and Paturi, 2001]{Impagliazzo2001b}
Impagliazzo, R. and Paturi, R. (2001).
\newblock On the complexity of $k$-{SAT}.
\newblock {\em Journal of Computer and System Sciences}, 62(2):367--375.

\bibitem[Impagliazzo et~al., 2001]{Impagliazzo2001}
Impagliazzo, R., Paturi, R., and Zane, F. (2001).
\newblock Which problems have strongly exponential complexity?
\newblock {\em Journal of Computer and System Sciences}, 63(4):512--530.

\bibitem[Kalai and Servedio, 2003]{Kalai2003}
Kalai, A. and Servedio, R.~A. (2003).
\newblock Boosting in the presence of noise.
\newblock In {\em Proceedings of the Thirty-Fifth Annual ACM Symposium on
  Theory of Computing (STOC)}, pages 195--205.

\bibitem[Kane and Williams, 2019]{Kane2019}
Kane, D.~M. and Williams, R.~R. (2019).
\newblock The orthogonal vectors conjecture for branching programs and
  formulas.
\newblock In {\em 10th Innovations in Theoretical Computer Science Conference
  (ITCS)}, pages 48:1--48:15.

\bibitem[Langberg and Médard, 2009]{Langberg2009}
Langberg, M. and Médard, M. (2009).
\newblock On the multiple unicast network coding, conjecture.
\newblock In {\em 2009 47th Annual Allerton Conference on Communication,
  Control, and Computing}, pages 222--227.

\bibitem[Levin, 1986]{Levin1986}
Levin, L.~A. (1986).
\newblock Average case complete problems.
\newblock {\em SIAM Journal on Computing}, 15(1):285--286.

\bibitem[Li and Li, 2004]{Li2004}
Li, Z. and Li, B. (2004).
\newblock Network coding: The case of multiple unicast sessions.
\newblock In {\em Proceedings of the 42nd Annual Allerton Conference on
  Communication, Control, and Computing}.

\bibitem[Micciancio, 2004]{Micciancio2004}
Micciancio, D. (2004).
\newblock Almost perfect lattices, the covering radius problem, and
  applications to {A}jtai's connection factor.
\newblock {\em SIAM Journal on Computing}, 34(1):118--169.

\bibitem[Mitzenmacher and Upfal, 2017]{Mitzenmacher2017}
Mitzenmacher, M. and Upfal, E. (2017).
\newblock {\em Probability and Computing: Randomization and Probabilistic
  Techniques in Algorithms and Data Analysis}.
\newblock Cambridge University Press, second edition.

\bibitem[Naik et~al., 1993]{Naik1993}
Naik, A., Ogiwara, M., and Selman, A. (1993).
\newblock P-selective sets, and reducing search to decision vs.
  self-reducibility.
\newblock In {\em Proceedings of the Eigth Annual Structure in Complexity
  Theory Conference}, pages 52--64.

\bibitem[Pólya, 1937]{Polya1937}
Pólya, G. (1937).
\newblock Kombinatorische anzahlbestimmungen f\"{u}r gruppen, graphen und
  chemische verbindungen.
\newblock {\em Acta Mathematica}, 68:145--254.

\bibitem[Schwartz, 1980]{Schwartz1980}
Schwartz, J.~T. (1980).
\newblock Fast probabilistic algorithms for verification of polynomial
  identities.
\newblock {\em Journal of the ACM}, 27(4):701--717.

\bibitem[Sims, 1970]{Sims1970}
Sims, C.~C. (1970).
\newblock Computational methods in the study of permutation groups.
\newblock In {\em Computational Problems in Abstract Algebra}, pages 169--183.
  Pergamon.

\bibitem[Smith, 2015]{Smith2015}
Smith, J. D.~H. (2015).
\newblock {\em Introduction to Abstract Algebra}.
\newblock CRC Press, second edition.

\bibitem[Sudan, 1996]{Sudan1996}
Sudan, M. (1996).
\newblock Maximum likelihood decoding of {R}eed {S}olomon codes.
\newblock In {\em Proceedings of 37th Conference on Foundations of Computer
  Science (FOCS)}, pages 164--172.

\bibitem[Sudan et~al., 2001]{Sudan2001}
Sudan, M., Trevisan, L., and Vadhan, S. (2001).
\newblock Pseudorandom generators without the {XOR} lemma.
\newblock {\em Journal of Computer and System Sciences}, 62(2):236--266.

\bibitem[Williams, 2005]{Williams2005}
Williams, R. (2005).
\newblock A new algorithm for optimal 2-constraint satisfaction and its
  implications.
\newblock {\em Theoretical Computer Science}, 348(2):357--365.

\bibitem[Williams, 2018]{Williams2019}
Williams, V.~V. (2018).
\newblock On some fine-grained questions in algorithms and complexity.
\newblock {\em Proceedings of the International Congress of Mathematicians
  (ICM)}, pages 3447--3487.

\bibitem[Williams and Williams, 2018]{Williams2018}
Williams, V.~V. and Williams, R.~R. (2018).
\newblock Subcubic equivalences between path, matrix, and triangle problems.
\newblock {\em Journal of the ACM}, 65(5):27:(1--38).

\bibitem[Zhang et~al., 2020]{Zhang2020}
Zhang, Y., Guo, Z., and Rekatsinas, T. (2020).
\newblock A statistical perspective on discovering functional dependencies in
  noisy data.
\newblock In {\em Proceedings of the 2020 ACM SIGMOD International Conference
  on Management of Data}, pages 861--876.

\bibitem[Zippel, 1979]{Zippel1979}
Zippel, R. (1979).
\newblock Probabilistic algorithms for sparse polynomials.
\newblock In {\em Symbolic and Algebraic Computation (EUROSAM)}, pages
  216--226.

\end{thebibliography}

\appendix

\section{Properties of Our Generalized Functions}
\label{appendix:A}

\begin{lemma}
\label{lemma:CSAF_appendix}

\textit{(Lemma \ref{lemma:CSAF} restated)}\\
The complexity class $\NP$ is contained in the function class $\SLNP$.

\end{lemma}

\begin{proof}

Note that $\left(\{\, 0,1 \,\}, \vee\right)$ is a semilattice where $y_1\vee y_2$ is always computable in constant time. From the definition of $\NP$, for a language $L \in NP$, we can write $$f(\phi) = \bigvee_{x \in \mathcal{C}}M(\phi, x),$$ where $f$ is the indicator function for $L$, $\mathcal{C}$ is the set of polynomial-sized certificates, and the verifier $M$ runs in $\poly(n)$ time. The set $\mathcal{C}$ of certificates is of size $2^{|x|} = 2^{\poly(n)}$.

\end{proof}

\begin{lemma}
\label{lemma:isoinv_appendix}

\textbf{Isomorphism Invariance Property of $\SLNP^G$}\textit{ (Lemma \ref{lemma:isoinv} restated)}\\
Given an indicator function $f \in \SLNP^G$, with group sequence $G_1, G_2, \ldots$ and group actions $\alpha$ and $\beta$, we have that for every element $g \in G_n$ and input $\phi \in \Sigma^{p(n)}$, $$f_n\left(\alpha_g(\phi)\right) = f_n(\phi).$$

\end{lemma}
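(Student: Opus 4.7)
The plan is to unpack the definition of $f$ as an aggregation, perform a change of variables in the index set using the bijectivity of $\beta_g$, and then invoke the symmetry condition of the SICSAF together with commutativity of $\odot$ to collapse everything back to $f(\phi)$.

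First I would write
\[
f\bigl(\alpha_g(\phi)\bigr) \;=\; \bigodot_{x \in \mathcal{C}} h\bigl(\alpha_g(\phi),\, x\bigr),
\]
using the aggregation representation guaranteed by the CSAF part of the SICSAF definition (Definition \ref{def:aggregrate}). Next, I would observe that $\beta : G(n) \times \mathcal{C} \to \mathcal{C}$ is a group action, so $\beta_g$ is a bijection on $\mathcal{C}$ with inverse $\beta_{g^{-1}}$ (this follows from the two axioms in Definition \ref{def:groupaction}). Hence the reindexing $x = \beta_g(y)$ is a bijection of the sum's index set, and by commutativity of $\odot$ (which guarantees that the aggregation is independent of the order in which we enumerate the index set), we may rewrite
\[
\bigodot_{x \in \mathcal{C}} h\bigl(\alpha_g(\phi),\, x\bigr) \;=\; \bigodot_{y \in \mathcal{C}} h\bigl(\alpha_g(\phi),\, \beta_g(y)\bigr).
\]

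Then I would apply the key symmetry condition (Condition 1 of Definition \ref{def:syminv}), which states that $h(\alpha_g(\phi), \beta_g(y)) = h(\phi, y)$ for every $y \in \mathcal{C}$ and every $g \in G(n)$. Substituting termwise gives
\[
\bigodot_{y \in \mathcal{C}} h\bigl(\alpha_g(\phi),\, \beta_g(y)\bigr) \;=\; \bigodot_{y \in \mathcal{C}} h(\phi, y) \;=\; f(\phi),
\]
which completes the argument.

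I do not expect any genuine obstacle here; the lemma is essentially a bookkeeping consequence of three facts glued together (aggregation form, $\beta_g$ is a bijection of the certificate set, and the pointwise invariance of $h$). The one subtlety worth being careful about is that commutativity of $\odot$ is genuinely needed to justify the reindexing step, since $\mathcal{C}$ carries no canonical order and $\beta_g$ need not preserve any such order; this is why the SICSAF definition insists on a \emph{commutative} semigroup rather than an arbitrary one.
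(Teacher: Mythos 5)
Your proposal is correct and follows essentially the same route as the paper's proof: the paper also expands the aggregation, reindexes via the bijection $\beta_g$ (written there as inserting $\beta_g(\beta_{g^{-1}}(x))$ and invoking commutativity of $\odot$ to rearrange), and then applies the pointwise invariance $h(\alpha_g(\phi), \beta_g(x)) = h(\phi, x)$ termwise. Your remark about commutativity being what licenses the reindexing matches the paper's justification exactly.
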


\begin{proof}

First we expand out $f_n\left(\alpha_g(\phi)\right)$ as $$f_n\left(\alpha_g(\phi)\right) = \bigodot_{x \in \mathcal{C}_n}h_n\left(\alpha_g(\phi), x\right) = \bigodot_{x \in \mathcal{C}_n}h_n\left(\alpha_g(\phi), \beta_{g}\left(\beta_{g^{-1}}(x)\right)\right).$$
Since the group action $\beta_{g^{-1}}$ is a permutation of the set $\mathcal{C}_n$ and the operator $\odot$ is commutative, we can rearrange this ``big sum'' as $$f_n\left(\alpha_g(\phi)\right) = \bigodot_{x \in \mathcal{C}_n}h_n\left(\alpha_g(\phi), \beta_{g}(x)\right).$$
Since, pointwise for every $x \in \mathcal{C}_n$, by definition of $\SLNP^G$ (Definition \ref{def:syminv}), we have that $h_n\left(\alpha_{g}(\phi), \beta_{g}(x)\right) = h_n(\phi, x)$, we rewrite this sum as $$f_n\left(\alpha_g(\phi)\right) = \bigodot_{x \in \mathcal{C}_n}h_n\left(\alpha_g(\phi), \beta_{g}(x)\right) = \bigodot_{x \in \mathcal{C}_n}h_n\left(\phi, x\right) = f_n(\phi),$$ and subsequently prove the desired equality.

\end{proof}

\section{Our $\NP$-Hard Problems are in $\SLNP^S$}
\label{appendix:B}

\begin{lemma}

\textit{(Lemma \ref{lemma:nphsicsaf} restated)}\\
The following functions are in the class $\SLNP^S$ as defined in Definition \ref{def:symgroupsyminv}.

\begin{enumerate}

\item $\SAT$
  
\item $k\SAT$
  
\item $k\CSP$
  
\item Max-$k\CSP$
  
\item $\INDSET$
  
\item $\VERTEXCOVER$
  
\item $\CLIQUE$
  
\item $k\COL$
  
\item $\HAMCYCLE$
  
\item $\HAMPATH$
  
\end{enumerate}

\end{lemma}

\begin{proof}

\begin{enumerate}

\item Suppose we are given SAT formulae of length at most $m(n)$ (polynomial in $n$) (in any measure) on $n$ variables. Since $\SAT$ is in $\NP$, due to Lemma \ref{lemma:CSAF}, $\SAT$ is in $\SLNP$. We use the group $S_n$ to act on the input $\phi$ and the assignment $x$ as follows. For any permutation $\pi \in S_n$, $\alpha_{\pi}(\phi)$ relabels every variable $x_i$ as $x_{\pi(i)}$. Given the assignment $x^{\prime} = \left(x^{\prime}_{1}, x^{\prime}_{2}, \ldots, x^{\prime}_{n}\right) \in \{\, 0,1 \,\}^{n}$, the permutation $\pi \in S_n$ acts on $x^{\prime}$ as $\beta_{\pi}\left(x^{\prime}_1, x^{\prime}_2, \ldots, x^{\prime}_n\right) = \left(x^{\prime}_{\pi^{-1}(1)}, x^{\prime}_{\pi^{-1}(2)}, \ldots, x^{\prime}_{\pi^{-1}(n)}\right)$ - the entry in the $j^{\text{th}}$ position moves to the $\pi(j)^{\text{th}}$ position. In the view that $f(\phi) = \bigodot_{x \in \{\, 0,1 \,\}^n}M(\phi, x)$, it is easy to see that for every $\pi \in S_n$, $$M\left(\alpha_{\pi}(\phi), \beta_{\pi}(x)\right) = M(\phi, x),$$ and that both group actions $\alpha$ and $\beta$ are computable in $\poly(n)$-time. The challenge remains to show that $\{\, 0,1 \,\}^n$ partitions into $\poly(n)$ orbits under the action of $\beta$ and that we can deterministically sample a list of elements of $\{\, 0,1 \,\}^n$ in each orbit in $\poly(n)$-time. This is, indeed, the case. The group action $\beta$ partitions $\{\, 0,1 \,\}^n$ into $n+1$ partitions on the basis of the number of $0$s in the string. A representative of the orbit with $k$ $0$s is $0^k1^{n-k}$. Hence, with all relevant polynomial parameters, $\SAT$ is in $\SLNP^S$.
  
\item This proof follows almost immediately from the proof for $\SAT$. The only difference here is that the input $\phi$ has clause-width at most $k$. We can also represent $\phi$ as a string in $\{\, 0,1 \,\}^{\sum_{j \in [k]}{{2n}\choose{k}}}$ choosing or not choosing one of the $\sum_{j \in [k]}{{2n} \choose {j}}$ unrestricted clauses of length at most $k$.

\item For $k\CSP$, we have a clause $C:\Sigma^{k} \to \{\, 0,1 \,\}$ with an arbitrary truth table. One can see that any $C\left(x_1, \ldots, x_{k}\right)$ is computable in $O(1)$ time. We construct our function $h_n:\{\, 0,1 \,\}^{\left(q(n)\right)^k} \times \mathcal{C}_n \to \{\, 0,1 \,\}$ where $\mathcal{C}_n = \Sigma^{q(n)}$ and $q$ is a polynomial representing the length of the certificate. Here, we have that 
$$h_n(\phi, y) = \bigwedge_{(i_1, i_2, \ldots, i_k) \in [q(n)]^k} C_{\text{select}}\left(\phi, y_{i_1}, y_{i_2}, \ldots, y_{i_k}\right),$$ where 
$$C_{\text{select}}(\phi, y_1, y_2, \ldots, y_k) = 
\begin{cases}
        C\left(y_1, y_2, \ldots, y_k \right), & \mbox{if } \phi \text{ selects the tuple }\lp i_1, i_2, \ldots, i_k \rp; \\
        1, & \mbox{otherwise}.
      \end{cases}$$
      
Note that $h$ can be computed in polynomial time since $q$ is a polynomial in $n$. We can see $\phi$ as a $\left(q(n)\right)^k$-bits long string selecting and deselecting clauses in the $k\CSP$ instance. Our string $y$ is a proof from the set $\Sigma^{q(n)}$. Our $k\CSP$ can be represented as $$f_n(\phi) = \bigvee_{y \in \Sigma^{q(n)}}h_n(\phi, y).$$
      
Now, we describe our group actions $\alpha: S_{q(n)} \times \{\, 0,1 \,\}^{\left(q(n)\right)^{k}} \to \{\, 0,1 \,\}^{\left(q(n)\right)^{k}}$ and $\beta: S_{q(n)} \times \Sigma^{q(n)} \to \Sigma^{q(n)}$. The simpler group action $\beta$ acts on $y = \left(y_1, y_2, \ldots, y_{q(n)}\right)$ to give $\beta_{\pi}(y) = \left(y_{\pi(1)}, y_{\pi(2)}, \ldots, y_{\pi(q(n))}\right)$ for every permutation $\pi \in S_{q(n)}$. The group action $\alpha$ acts to preserve the relation, with the bit $\alpha_{\pi}(\phi)_{\left(i_1, i_2, \ldots, i_k\right)} = \phi_{\left(\pi^{-1}\left(i_1\right), \pi^{-1}\left(i_2\right), \ldots, \pi^{-1}(i_k)\right)}$ for every permutation $\pi \in S_{q(n)}$. It can be seen that the relation $h\left(\alpha_{\pi}(\phi), \beta_{\pi}(y)\right) = h(\phi, y)$ is preserved for every $\pi \in S_{q(n)}$, instance $\phi \in \{\, 0,1 \,\}^{\left(q(n)\right)^k}$, and certificate $y \in \Sigma^{q(n)}$. Note that $q$ is a polynomial and $S_{q(n)}$ is the group and hence, this is in $\SLNP^S$ for every possible clause $C:\Sigma^{k} \to \{\, 0,1 \,\}$.
      
\item The proof proceeds identically to the above case, except we use $\left(\mathbb{Z}, \max\right)$ as the semilattice.
  
\end{enumerate}

For cases 5 through 10, due to Lemma \ref{lemma:CSAF}, it is easy to see that these are in $\SLNP$. Now, we prove that they are in $\SLNP^S$ due to action from $S_n$ on $n$ vertex graph inputs. The group action $\alpha$ always acts on a graph $H$ by permuting the graph according to permutation $\pi$, where the graph $\alpha_{\pi}(H)$ contains an edge between $v_{\pi(i)}$ and $v_{\pi(j)}$ if and only if the graph $H$ has an edge between $v_i$ and $v_j$. Now, we prove that for each of these problems, we have a group action $\beta$ preserving symmetry and splitting the certificate space $\mathcal{C}$ into $\poly(n)$-partitions, each of which we can deterministically produce one canonical certificate from. Since one can verify the parameters, this is sufficient to show that these functions are in $\SLNP^S$.

\begin{enumerate}

\setcounter{enumi}{4}

\item Here, the certificate is a set $S \subset [n]$. The group action $\beta_{\pi}$ acts on $S$ by giving us a set $\beta_{\pi}(S)$ that contains $\pi(i)$ if and only if $S$ contains $i$. One can see that the set $\beta_{\pi}(S)$ is an independent set for the graph $\alpha_{\pi}(H)$ if and only if the set $S$ is an independent set for the graph $H$. Here, the orbits partition the power set of $[n]$, $\mathcal{P}\left([n]\right)$ into $n+1$ partitions on the basis of size. We can easily sample the set $\{\, 1, 2, \ldots, k \,\}$ as the canonical certificate of size $k$. Here, our function $h$, on input $k$, graph $H$ and set $S$ checks if $S$ encodes an independent set for the graph $H$ and that $|S| = k$.
  
\item Here, this follows identically to the above case except the function $h$ checks if the set $S$ is a vertex cover of size $k$ for the graph $H$.
  
\item As above, the proof proceeds identically, except the function $h$ checks if the set $S$ forms a clique of size $k$ in the graph $H$.
  
\item Here, our certificate is coloring $C = \left(c_1, c_2, \ldots, c_n\right) \in [k]^{n}$ of the vertices. The function $h$ checks if the coloring $C$ is a valid $k$-coloring of the graph $H$, with no monochromatic edges. The group action $\beta$ acts on the coloring $C$ by producing $\beta_{\pi}(C) = \left(c_{\pi^{-1}(1)}, c_{\pi^{-1}(2)}, \ldots, c_{\pi^{-1}(n)}\right)$. One can see that $\beta_{\pi}(C)$ is a valid coloring of the graph $\alpha_{\pi}(H)$ is and only if $C$ is a valid coloring for the graph $H$. It can also be seen that there are ${{n + k - 1}\choose {k - 1}} = O\left(n^{k-1}\right)$ orbit partitions on the basis of the number of each color $c \in [k]$ present in the coloring $C$ - we can sample one canonical member of each orbit in polynomial-time.
  
\item Here, the certificate is a sequence $I = (i_1, i_2, \ldots, i_n)$ where each distinct $i_k \in [n]$ and this represents a cycle $i_1 \to i_2 \to \cdots \to i_n \to i_1$. The function $h$ checks if the sequence is a valid Hamiltonian cycle for the graph $H$. We note the redundancy but introduce it for the sake of symmetry. The action $\beta_{\pi}$ acts on the sequence $I$ by returning $\beta_{\pi}(I) = \left(i_{\pi^{-1}(1)}, i_{\pi^{-1}(2)}, \ldots, i_{\pi^{-1}(n)}\right)$. One can see that there is only one orbit and that the relation is preserved upon group action.
  
\item Here, the certificate is once again a sequence $I = (i_1, i_2, \ldots, i_n)$ as before, representing the path $i_1 \to i_2 \to \cdots \to i_n$. Once again, we note that we introduce the redundancy in certification to allow for symmetry. The function $h$ checks if the path encoded by the sequence $I$ is a Hamiltonian path in the graph $H$. We use the group action $\beta$ identically to before and there is only one orbit, with the group actions preserving the relation.
  
\end{enumerate}

\end{proof}

\section{A Classification of Graphs With $\left| \textit{Aut} \left( H \right) \right| = \omega \left( n! / n^3 \right)$}
\label{appendix:C}

First, we will prove Lemma \ref{lemma:22} that gives the properties of the $n$ vertex graphs with $\left| \textit{Aut} \left( H \right) \right| = \omega \left( n! / n^3 \right)$ in terms of the number of partitions based on the degree of vertices, the size of such partitions, and the degree distribution of the vertices.

\begin{lemma}
\label{lemma:22}

For sufficiently large $n$, any graph $H$ with $\left| \textit{Aut} \left( H\right) \right| = \omega \left( n! / n^3 \right)$ satisfies the following properties.

\begin{enumerate}

\item If the vertices of $H$ are partitioned based on degree, then there are at most three partitions.

\item No partition can be simultaneously larger than $2$ and smaller than $n-2$.

\item The degree of any vertex $v$ can be in the set $\{\, 0, 1, 2, n-2, n-1 \,\}$.

\end{enumerate}

\end{lemma}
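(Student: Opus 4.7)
The core observation is that $\textit{Aut}(H)$ preserves the degree partition of $H$. Writing $V_1,\ldots,V_k$ for the classes of this partition and $n_i = |V_i|$, we have $\textit{Aut}(H) \hookrightarrow S_{n_1} \times \cdots \times S_{n_k}$, and hence
$$
|\textit{Aut}(H)| \leq \prod_i n_i! = \frac{n!}{\binom{n}{n_1,\ldots,n_k}}.
$$
The hypothesis $|\textit{Aut}(H)| = \omega(n!/n^3)$ thus forces $\binom{n}{n_1,\ldots,n_k} = o(n^3)$. For Claim 1, the minimum multinomial over $k$-part compositions of $n$ is $n(n-1)\cdots(n-k+2) = \Theta(n^{k-1})$, attained at $(1,\ldots,1,n-k+1)$; being $o(n^3)$ forces $k-1 < 3$, i.e.\ $k \leq 3$. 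For Claim 2, the elementary inequality $\prod_{j \neq i} n_j! \leq (n-n_i)!$ yields $\binom{n}{n_1,\ldots,n_k} \geq \binom{n}{n_i}$ for every $i$, so $\binom{n}{n_i} = o(n^3)$, which forces $n_i \in \{0,1,2,n-2,n-1,n\}$.

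For Claim 3, I apply the Orbit-Stabilizer Theorem (Lemma \ref{lemma:orbit-stabitilizer}) to an individual vertex $v$ of degree $d$. The stabilizer of $v$ in $\textit{Aut}(H)$ permutes $N(v)$ (size $d$) and its complement $V \setminus N(v) \setminus \{v\}$ (size $n-1-d$) independently, so has order at most $d!(n-1-d)!$; meanwhile the orbit of $v$ has size at most $n_{i(v)}$, the size of $v$'s degree class. Thus
$$
|\textit{Aut}(H)| \;\leq\; n_{i(v)} \cdot d!(n-1-d)! \;=\; \frac{n_{i(v)} \cdot (n-1)!}{\binom{n-1}{d}}.
$$
If $v$ lies in a small class, Claim 2 gives $n_{i(v)} \leq 2$, and the hypothesis forces $\binom{n-1}{d} = o(n^2)$, so $d \in \{0,1,n-2,n-1\}$. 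If instead $v$ lies in the large class, then $n_{i(v)} \leq n$ and we obtain $\binom{n-1}{d} = o(n^3)$, which (using $\binom{n-1}{3} = \Theta(n^3)$) restricts $d$ to $\{0,1,2,n-3,n-2,n-1\}$.

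The main subtlety I foresee is that the straightforward orbit-stabilizer bound permits the borderline value $d = n-3$ in the large class, since $\binom{n-1}{n-3} = \binom{n-1}{2} = \Theta(n^2)$ is $o(n^3)$; indeed, $\overline{K_2} \sqcup K_{n-2}$ achieves $|\textit{Aut}(H)| = 2(n-2)! = \omega(n!/n^3)$ with large-class degree exactly $n-3$. Obtaining precisely the set $\{0,1,2,n-2,n-1\}$ claimed in the statement therefore requires either a complement-symmetric refinement or an ad-hoc structural argument eliminating this boundary configuration. This is the step I expect to be the most delicate; the remainder of the proof is a clean deployment of orbit-stabilizer paired with elementary multinomial estimates.
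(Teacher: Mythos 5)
Your arguments for claims 1 and 2 are correct and essentially identical to the paper's: both reduce to the injection of $\textit{Aut}(H)$ into the product of symmetric groups on the degree classes, followed by the observation that the resulting multinomial (resp.\ binomial) coefficient must be $o(n^3)$.

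The substantive point is claim 3, and here you have done better than you give yourself credit for: what you describe as a ``delicate step'' still to be resolved is in fact a counterexample, and claim 3 is false as stated. The graph $K_{n-2} \sqcup \overline{K_2}$ has $\left|\textit{Aut}(H)\right| = 2(n-2)! = \omega\left(n!/n^3\right)$ and $n-2$ vertices of degree $n-3 \notin \{0,1,2,n-2,n-1\}$, so no refinement or ad-hoc structural argument can eliminate this configuration. The paper's own proof of claim 3 commits exactly the error you anticipated: for a vertex of degree $m$ with $3 \le m \le n-3$ it bounds the density of the automorphism group in $S_n$ by $1/\binom{n-1}{m}$ and then asserts $\binom{n-1}{m} \ge \binom{n-1}{3}$, which fails at $m = n-3$, where $\binom{n-1}{n-3} = \binom{n-1}{2} = \Theta\left(n^2\right)$. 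Consistently with this, the paper's Lemma \ref{lemma:23} --- which is deduced from the present lemma --- accepts $K_{n-2}$ with two isolated vertices and $K_{n-2}$ with a disjoint edge, both of which have degree-$(n-3)$ vertices. The correct conclusion, which your orbit--stabilizer computation establishes, is $d \in \{0,1,2,n-3,n-2,n-1\}$ for every vertex, together with the sharper $d \in \{0,1,n-2,n-1\}$ for vertices lying in degree classes of size at most $2$. This complement-symmetric version (with the small-class refinement) is all that the case analysis of Lemma \ref{lemma:23} actually requires, since that proof passes to $\overline{H}$ so that the largest degree class has degree at most $2$; the right fix is therefore to restate claim 3 with the corrected degree set rather than to search for an argument that cannot exist.
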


\begin{proof}

Suppose that we have $m \geq 4$ partitions of size $(\alpha_i)_{i \in [m]}$, in ascending order, with each $\alpha_i \geq 1$. The probability that $\pi \in S_n$ is in $\textit{Aut} \left( H \right)$ is bounded from above by
\begin{equation*}
\frac{\prod_{i \in [m]} \alpha_i!}{n!} \leq \frac{(\sum_{i \in [m-1]} \alpha_i)! \alpha_m!}{n!},
\end{equation*}
since $\pi$ is not allowed to permute vertices across partitions. Since $\alpha_m \geq n / m$ due to the pigeonhole principle, and $\alpha_m \leq n - (m - 1)$ due to each $\alpha_i$ being positive, we have that
\begin{equation*}
\frac{\left( \sum_{i \in [m-1]} \alpha_i \right)! \alpha_m!}{n!} = \frac{1}{\displaystyle {n \choose \alpha_m}} \leq \frac{1}{\displaystyle {n \choose 3}} = O \left( \frac{1}{n^3} \right).
\end{equation*}
If this is the case, then $\left| \textit{Aut} \left( H \right) \right|$ is upper bounded by $O\left( n! / n^3 \right)$, leading to a contradiction. This proves the first statement of the lemma.

If there is a partition of size $\alpha$, then the probability that $\pi \in S_n$ is in $\textit{Aut} \left( H \right)$ is upper bounded by $1 / \displaystyle {n \choose \alpha} = O \left( 1 / n^3 \right)$ for the forbidden range. This implies the second statement of the lemma.

Let us consider that the degree $m$ of $v$ in $H$ is greater than $2$ and less than $n-2$. Let the neighbors of $v$ be $(u_i)_{i \in [m]}$. The probability that $\pi \in S_n$ is in $\textit{Aut} \left( H \right)$ is bounded from above by
\begin{equation*}
\frac{n (m!) (n-1-m)!}{n!} = \frac{1}{\displaystyle {{n-1}\choose m}} \leq \frac{1}{\displaystyle {{n -1} \choose 3}} = O(1/n^3),
\end{equation*}
since $n$ is the maximum number of vertices $v$ could map to, $m!$ is the number of ways the neighbors of $v$ could distribute themselves among the neighbors of the image of $v$, and $(n-m-1)!$ is the number of ways the remaining vertices can distribute. Also, $m$ is between $3$ and $n-3$. Due to a similar argument as before, this implies the third statement of the lemma.

\end{proof}

Now, using Lemma \ref{lemma:22}, we will prove Lemma \ref{lemma:23} that gives the structure of the graphs with $\left| \textit{Aut} \left( H \right) \right| = \omega \left( n! / n^3 \right)$.

\begin{lemma}
\label{lemma:23}

Only the following graphs have $\left| \text{Aut} \left( H \right) \right| = \omega \left( n! / n^3 \right)$.

\begin{enumerate}

\item $K_n$ and its complement.

\item $K_n$ with one edge missing and its complement.

\item $K_{n - 1}$ with an isolated vertex and its complement.

\item $K_{n - 1}$ with one vertex of degree $1$ adjacent to it and its complement.

\item $K_{n - 2}$ with two isolated vertices and its complement.

\item $K_{n - 2}$ with two vertices of degree $1$ adjacent to each other and its complement.

\end{enumerate}

\end{lemma}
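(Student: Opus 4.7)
The plan is to treat Lemma~\ref{lemma:22} as a scaffold and enumerate exhaustively. That lemma forces the degree-class partition of $H$ to have at most three parts, each of size $\leq 2$ or $\geq n-2$, with every degree lying in a small set near $0$ or $n-1$ (up to complementation, which preserves $|\textit{Aut}|$). Since the part sizes sum to $n$ and at most one can be $\geq n-2$ for large $n$, the only feasible size profiles are $(n)$, $(n-1, 1)$, $(n-2, 2)$, and $(n-2, 1, 1)$. I would walk through these four profiles; within each, use the uniformity of degrees within each part to pin down the edge set, and match against the list in the statement. Working up to complementation halves the casework throughout.

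For profile $(n)$, the graph is $d$-regular for some allowed $d$. For $d \in \{1, 2\}$ and their complements the graph is a disjoint union of matchings, cycles, or complements thereof, with $|\textit{Aut}(H)| \leq O((n/2)!\cdot 2^{n/2}) = o(n!/n^3)$, leaving only $d = 0$ and $d = n-1$, hence case 1. For the other three profiles the key observation is a \emph{uniformity principle}: for $\textit{Aut}(H)$ to act nearly-transitively on the large class $D$ (of size $n-t$, $t \in \{1, 2\}$), each small-class vertex must be joined to all of $D$ or to none, and the induced subgraph $H_D$ on $D$ must itself satisfy $|\textit{Aut}(H_D)| = \omega(|D|!/|D|^3)$. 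The latter follows because automorphisms of $H$ fixing $V \setminus D$ pointwise inject into $\textit{Aut}(H_D)$, and there are at most $t! \leq 2$ permutations of $V \setminus D$ to quotient out. Rerunning the regular-graph analysis on $H_D$ then forces $H_D$ to be either $K_{n-t}$ or empty.

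Armed with uniformity, each remaining profile becomes a short finite check. Profile $(n-1, 1)$ has two binary choices (the singleton universal versus isolated; $H_D$ complete versus empty), yielding case 3 and its complement. Profile $(n-2, 2)$ has four binary choices (the $u_1 u_2$-adjacency, the two all/none external patterns, and $H_D$ complete versus empty); matching the resulting degree sequences against the allowed set produces exactly cases 2, 5, 6 and their complements. Profile $(n-2, 1, 1)$ requires the two singletons to have distinct degrees, which together with uniformity forces $d_1' = 1, d_2' = n-1$ up to complementation, yielding case 4 and its complement. For each of the twelve surviving candidates, $|\textit{Aut}(H)|$ is then computed directly (e.g.\ $|\textit{Aut}(K_{n-1}+I_1)| = (n-1)!$ and $|\textit{Aut}(K_{n-2}+K_2)| = 2(n-2)!$), confirming each is indeed $\omega(n!/n^3)$.

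The main obstacle is the $(n-2, 2)$ bookkeeping: a priori there are roughly a dozen subcases parametrized by $d, d'$, the $u_1 u_2$-adjacency, and the two all/none patterns. Most die instantly from uniformity together with a degree-sum check, but care is needed to identify subcases that silently collapse to a coarser profile already handled (for instance, when $u_1, u_2$ end up sharing a degree with $D$, the profile secretly becomes $(n)$ or $(n-1, 1)$ and the graph has already been counted). A secondary check needed is that the recursive application of the regular-graph analysis to $H_D$ is legitimate, which the index-$\leq 2$ argument above provides.
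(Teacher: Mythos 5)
Your proposal is correct in substance and rests on the same skeleton as the paper's proof: both exhaust the four degree-class profiles $(n)$, $(n-1,1)$, $(n-2,2)$, $(n-2,1,1)$ permitted by Lemma~\ref{lemma:22} and work up to complementation. Where you differ is in organization. The paper grinds through the profiles with an ad hoc automorphism-group count for every candidate edge pattern (organized by the degree of the largest class), whereas you factor the casework through two structural reductions: (i) each small-class vertex is joined to all of the large class $D$ or to none of it, and (ii) the induced graph $H_D$ is regular with $\left|\textit{Aut}(H_D)\right|$ still superpolynomially close to $|D|!$, so the profile-$(n)$ analysis applies recursively and forces $H_D$ to be complete or empty. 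Your index-$\leq t!$ argument for (ii) is sound, and the resulting enumeration does produce exactly the twelve graphs. This buys a shorter and more conceptual write-up than the paper's; what the paper's brute-force version buys is that every rejection comes with an explicit numerical bound, so nothing hinges on a general principle.

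The one place you should not wave your hands is the justification of (i). ``Near-transitivity on $D$'' is not the right reason. The correct argument is that a small-class vertex $u$ has orbit of size at most $2$, so an index-$\leq 2$ subgroup of $\textit{Aut}(H)$ preserves $N(u)\cap D$ setwise, giving $\left|\textit{Aut}(H)\right| = O\bigl(k!\,(|D|-k)!\bigr)$ for $k = |N(u)\cap D|$; this kills $2 \leq k \leq |D|-2$ in all profiles, and also $k \in \{1, |D|-1\}$ when $|D| = n-2$ (the bound is then $\Theta((n-3)!) = O(n!/n^3)$ --- note this sits exactly at the threshold, and such graphs genuinely exist, e.g.\ the paper's rejected $(n-2,1,1)$ configuration with $d_1 = 2$, $d_2 = n-2$, so the comparison must be to $\omega$, not to strict inequality). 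But for the profile $(n-1,1)$ with $k = 1$ or $k = n-2$ the same count gives only $(n-2)!$, which \emph{is} $\omega(n!/n^3)$, so counting alone does not establish all-or-none there; you must rule these out by the degree-consistency and parity arguments you allude to (they do all lead to contradictions, since the unique attachment vertex in $D$ would need a degree different from the rest of its class). Filling in that subcase explicitly, and checking that configurations which collapse to a coarser profile are not double-counted or lost, is all that separates your outline from a complete proof.
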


\begin{proof}

Using Lemma \ref{lemma:23}, the only possible partition sizes based on degree we can have are $(n)$, $(n-1, 1)$, $(n-2, 1, 1)$ and $(n-2, 2)$. Now, by a case-by-case analysis, we will determine which graphs can have such large automorphism groups. Since $\textit{Aut} \left( H \right) = \textit{Aut} \left( \overline{H} \right)$, we will categorize by the degree of the largest partition and assume that the degree is less than or equal to $2$. This way, we will either allow a graph and its complement or reject both. We will also assume that $n$ is sufficiently large, say $n \geq 100$.

\textbf{\textit{Case 1: The Largest Partition Degree is $0$.}} \\
Now, for the $(n)$ partition, the graph is either empty or the complete graph $K_n$. Clearly,
\begin{equation*}
\left| \textit{Aut} \left( H \right) \right| = n! = \omega \left( \frac{n!}{n^3} \right),
\end{equation*}
in both cases, so we allow both.

When the partition is $(n-1, 1)$, this is technically not allowed since even the vertex of the partition of size $1$ must have degree zero, meaning such a partition with these degrees cannot exist.

When the partition is $(n-2, 1, 1)$, this cannot exist since the partitions of size $1$ must have the same degree.

When the partition is $(n-2, 2)$, the only allowed case is that both the vertices in the partition of size $2$ are adjacent. Otherwise, they would also have degree $0$, and we would have $(n)$ again. The other case is $K_n$ with one edge missing. Both of them have
\begin{equation*}
\left| \textit{Aut} \left( H \right) \right| = 2 (n-2)! = \frac{n!}{O(n^2)} = \omega \left( \frac{n!}{n^3} \right),
\end{equation*}
hence, we allow them both.

From this case, we allow the graphs as described in statements 1 and 2 of the lemma.

\textbf{\textit{Case 2: The Largest Partition Degree is $1$.}} \\
For the partition type $(n)$, this is only allowed when $n$ is even due to the handshake lemma. When so, the vertices arrange themselves in pairs. Visually, we have $n / 2$ ``sticks''. We can permute these sticks in $(n / 2)!$ ways and flip them in $2^{n / 2}$ ways. In particular, the size of the automorphism group is
\begin{equation*}
\left| \textit{Aut} \left( H \right) \right| = \left( \frac{n}{2} \right)! \cdot 2^{n / 2} \leq \frac{n!}{n^3} = O \left( \frac{n!}{n^3} \right),
\end{equation*}
for sufficiently large $n$. Hence, we reject this case.

For the partition type $(n-1, 1)$, we have the following possibilities: The vertex in the partition of size $1$ may have possible degrees $n-1$, $n-2$, $2$, or $0$.

\begin{enumerate}

\item The vertices in the $n - 1$-partition are all adjacent to the vertex in the $1$-partition. This is allowed, with
\begin{equation*}
\left| \textit{Aut} \left( H \right) \right| = (n - 1)! = \frac{n!}{n} = \omega \left( \frac{n!}{n^3} \right),
\end{equation*}
and hence, we allow $K_{n-1}$ with an isolated vertex and its complement graph. This covers the case where the $1$-partition vertex has degree $n-1$.

\item If the degree of the $1$-partition vertex is $n-2$, this is disallowed for the following reason: The vertex in the $n-1$-partition not adjacent to the $1$-partition vertex must be adjacent to one of the other vertices, if it needs a degree of $1$. This creates a vertex of degree $2$ in the $n-1$-partition.

\item The degree of the $1$-partition vertex is $2$. In this case, we have two vertices $u$ and $v$ in the $n - 1$-partition that are adjacent to the $1$-partition vertex. The others are arranged similarly to the $(n)$ case for degree $1$. Here, the automorphism group size is
\begin{equation*}
\left| \textit{Aut} \left( H \right) \right| = 2 \left( \frac{n - 3}{2} \right)! \cdot 2^{(n - 3) / 2} = O \left( \frac{n!}{n^3} \right),
\end{equation*}
which for sufficiently large $n$ is too small; hence we reject this case when $n$ is odd. The graph is not possible when $n$ is even

\item If the degree of the $1$-partition vertex is $0$ and the others have degree $1$, this suffers from the same pitfalls as the $(n)$ case, having an automorphism group of size
\begin{equation*}
\left| \textit{Aut} \left( H \right) \right| = \left( \frac{n - 1}{2} \right)! \cdot 2^{(n - 1) / 2} = O \left( \frac{n!}{n^3} \right),
\end{equation*}
and hence, we reject this case as well when $n$ is odd. The graph is not possible when $n$ is even.

\end{enumerate}

For the partition type $(n - 2, 1, 1)$, let the two $1$-partition vertices be $u_1$ and $u_2$ with degrees $d_1$ and $d_2$, respectively. Without loss of generality, assume that $d_2 > d_1$. Since the unique degrees $d_1$ and $d_2$ are different, the $n - 2$-partition implicitly partitions itself into three parts: The partition that is adjacent to the vertex $u_1$ of size $\alpha_1$, the partition that is adjacent to the vertex $u_2$ of size $\alpha_2$, and the remaining vertices that pair themselves. These partitions are rigid in that no $\pi$ from the automorphism group can map vertices across the partition. Hence, assuming the correct parity for $n$, the probability that a random $\pi$ from $S_n$ is in the automorphism group is
\begin{equation*}
\begin{split}
\frac{\left| \textit{Aut} \left( H \right) \right|}{n!} & \leq \frac{\alpha_1! \alpha_2! \displaystyle \left( \frac{n - \alpha_1 - \alpha_2 - 2}{2} \right)! 2^{\left( n - \alpha_1 - \alpha_2 - 2 \right) / 2}}{n!} \\
& \leq \frac{d_1! d_2! \displaystyle \left( \frac{n - \alpha_1 - \alpha_2 - 2}{2} \right)! 2^{\left( n - \alpha_1 - \alpha_2 - 2 \right) / 2}}{n!} \\
& = O \left( \frac{1}{n^3} \right),
\end{split}
\end{equation*}
if $d_1$ and $d_2$ are both from the set $\{\, 0, 2 \,\}$. Therefore, $d_2$ is either $n - 1$ or $n - 2$. The value of $d_2$ cannot be $n - 1$, since then $u_2$ is connected to all the other vertices, forcing $d_1 = 1$, which is not allowed. The only possibility that remains is $d_2 = n - 2$. If $d_1 = 0$, then the vertex $u_1$ is isolated, and $u_2$ is connected to all vertices in the $n - 2$-partition. In this case, we have
\begin{equation*}
\left| \textit{Aut} \left( H \right) \right| = (n - 2)! = \frac{n!}{O \left( n^2 \right)} = \omega \left( \frac{n!}{n^3} \right),
\end{equation*}
so that we allow this graph. We also allow the complement of this graph, a $K_{n-1}$ with a vertex of degree $1$ adjacent to it.

If $d_2 = n - 2$ and $d_1 = 2$, then the vertex $u_2$ is connected to all but one vertex in the $n - 2$-partition, and it is also connected with the vertex $u_1$. The vertex $u_1$ is also connected with the isolated vertex in the $n - 2$-partition. In this case, we have
\begin{equation*}
\left| \textit{Aut} \left( H \right) \right| = (n - 3)! = \frac{n!}{O \left( n^3 \right)} = O \left( \frac{n!}{n^3} \right),
\end{equation*}
so that this graph is rejected.

For the $(n - 2, 2)$ case, we have two vertices $v_1$, and $v_2$ of degree $d$. We have the following cases.

\begin{enumerate}

\item If $d = 0$, we have a case similar to that of $(n)$ with degree $1$, where the automorphism group size is
\begin{equation*}
\left| \textit{Aut} \left( H \right) \right| = 2 \cdot 2^{(n - 2) / 2} \left( \frac{n - 2}{2} \right)! = O \left( \frac{n!}{n^3} \right).
\end{equation*}
We disallow this case when $n$ is even. When $n$ is odd, the graph is not possible.

\item If $d = 1$, this is not allowed since we have defined the partition class this way.

\item For $d = 2$, this implicitly partitions the $n - 2$-partition into two parts: Adjacent to a vertex of degree $2$ and not adjacent to a vertex of degree $2$. Suppose that these vertices are partitioned into partitions of size $\alpha_1$ and $\alpha_2$, respectively, the probability that $\pi \in S_n$ is in the automorphism group is
\begin{equation*}
\frac{\left| \textit{Aut} \left( H \right) \right|}{n!} \leq \frac{2 \cdot \alpha_1! \alpha_2!}{n!} = \frac{2}{n(n-1)} \cdot \frac{1}{\displaystyle {{n-2} \choose \alpha_1}} = O \left( \frac{1}{n^3} \right),
\end{equation*}
since $\alpha_1$ and $\alpha_2$ are necessarily positive. If they were not, we would either have $v_1$ and $v_2$ have very high degree, or degree $0$ or $1$. We reject this graph.

\item For $d = n - 2$ and $d = n - 1$, this is not allowed since at least one vertex from the $n - 2$-partition would have to have a degree larger than $1$.

\end{enumerate}

This case covers the statements 3 and 4 of the lemma.

\textbf{\textit{Case 3: The Largest Partition Degree is $2$.}} \\
For the $(n)$ case, for sufficiently large $n$, we must have $v_1, v_2, v_3, v_4, v_5$, and $v_6$ such that $v_1$ and $v_2$ are adjacent, $v_2$ and $v_3$ are adjacent, $v_4$ and $v_5$ are adjacent, and $v_5$ and $v_6$ are adjacent. If we pick a random permutation $\pi$ from $S_n$, the probability that it is in the automorphism group is
\begin{equation*}
\frac{\left| \textit{Aut} \left( H \right) \right|}{n!} \leq \frac{n \cdot 2 \cdot (n - 3) \cdot 2 \cdot (n - 6)!}{n!} = O \left( \frac{1}{n^4} \right),
\end{equation*}
since $v_2$ can map to at most $n$ vertices, $v_1$ and $v_3$ can only swap their positions as a neighbor of $v_2$; similarly, $v_5$ can map to at most $n - 3$ vertices, $v_4$ and $v_6$ can only swap their positions as a neighbor of $v_5$, and the remaining vertices can map freely to give an upper bound. Hence, we reject this case.

When we have the partition type $(n - 1, 1)$, we have the following cases, based on the degree $d$ of the $1$-partition vertex $u$.

\begin{enumerate}

\item If $d = 0$, this graph suffers the same pitfalls as the $(n)$ partition case and has the automorphism group size of
\begin{equation*}
\left| \textit{Aut} \left( H \right) \right| \leq \frac{(n - 1) \cdot 2 \cdot (n - 4) \cdot 2 \cdot (n - 7)!}{n!} = O \left( \frac{1}{n^5} \right),
\end{equation*}
hence, we reject this case.

\item If $d = 1$, suppose $v_1$ is in the $(n-1)$-partition and adjacent to $u$. If $v_2$ is adjacent to $v_1$, we must find a $v_3$ adjacent to $v_2$ since $v_3$ cannot be adjacent to any of the vertices we already numbered. Otherwise, $u$'s degree would be too high, and a similar case would go for $v_1$ and $v_2$. Once we continue this process and reach $v_{n-1}$, this vertex has no chance of having a degree $2$ since all other vertices have their promised degrees. Such a graph does not exist.

\item If $d = 2$, we violate the definition of our partition structure.

\item If $d = n - 2$, we only have one possibility: Suppose $u$ is adjacent to $v_1$ through $v_{n - 2}$. The vertex $v_{n-1}$ is adjacent to $v_1$ and $v_2$. From $i = 1$ onwards, $v_{2i+1}$ is also adjacent to $v_{2i+2}$. The automorphism group of this graph is of the size of
\begin{equation*}
\left| \textit{Aut} \left( H \right) \right| = 2 \cdot 2^{(n - 4) / 2} \left( \frac{n - 4}{2} \right)! = O \left( \frac{n!}{n^3} \right),
\end{equation*}
since the vertices $v_1$ and $v_2$ can swap themselves; and all other remaining $(n - 4) / 2$ pairs can swap and rearrange themselves. We reject this case when $n$ is even. When $n$ is odd, the graph is not possible. 

\item If $d = n - 1$, then the structure would be $u$, connected to each $v_i$ and the $v_i$'s forming pairs again, like the sticks. The automorphism group is of the size of
\begin{equation*}
\left| \textit{Aut} \left( H \right) \right| = 2^{(n - 1) / 2} \left( \frac{n - 1}{2} \right)! = O \left( \frac{n!}{n^3} \right),
\end{equation*}
for sufficiently large $n$, and we reject this case when $n$ is odd. The graph is not possible when $n$ is even.

\end{enumerate}

When we have a partition structure $(n-2, 2)$, we have the following cases, where $d$ is the degree of the $2$-partition.

\begin{enumerate}

\item If $d = 0$, then this suffers from the same asymptotic pitfalls as the $(n)$-case for degree $2$ and we reject this case:
\begin{equation*}
\left| \textit{Aut} \left( H \right) \right| \leq \frac{2 \cdot (n - 2) \cdot 2 \cdot (n - 5) \cdot 2 \cdot (n - 8)!}{n!} = O \left( \frac{1}{n^6} \right).
\end{equation*}

\item If $d = 1$, suppose $u_1$ and $u_2$ are from the $2$-partition. If $u_1$ and $u_2$ are adjacent, this suffers from the same pitfall as the $(n)$-case again (as shown in case 1 above), and we reject this case. If they are not adjacent, then suppose that $v_1$ is adjacent to $u_1$. The vertex $v_1$ is adjacent to $v_2$. The vertex $v_2$ cannot be adjacent to any of the vertices we visited, so we require a new vertex $v_3$. Similarly, we go on until $v_{n - 2}$. The vertex $v_{n - 2}$ must be adjacent to $u_2$, since all the others already have the promised degree. This resulting graph has an automorphism group size of $2$: Only reflectional symmetry. Another alternative is one chain from $u_1$ to $u_2$, and a cover of cycles. Once again, the $u_1$, $u_2$ component with the chain only has reflectional symmetry, so we have an automorphism group of size
\begin{equation*}
\left| \textit{Aut} \left( H \right) \right| \leq 2 \cdot (n-3)! = \frac{n!}{O \left( n^3 \right)} = O \left( \frac{n!}{n^3} \right),
\end{equation*}
and we reject this case.

\item The case of $d = 2$ is again not allowed.

\item If $d = n - 2$, we have two cases:

\begin{itemize}

\item If $u_1$ and $u_2$ are not adjacent, then they are connected to each vertex of the $n - 2$-partition. This graph has an automorphism group of size
\begin{equation*}
\left| \textit{Aut} \left( H \right) \right| = 2 \cdot (n - 2)! = \frac{n!}{O \left( n^2 \right)} = \omega \left( \frac{n!}{n^3} \right),
\end{equation*}
and we accept this and its complement: $K_{n-2}$ with the other component being an edge.

\item If $u_1$ and $u_2$ are adjacent, then $u_1$ and $u_2$ are adjacent to $n-3$ vertices each in the $n - 2$-partition. The vertices $v_3$ through $v_{n - 2}$ are adjacent to both, and $v_1$ (adjacent to $u_1$) is adjacent to $v_2$ (adjacent to $u_2$). The automorphism group size is
\begin{equation*}
\left| \textit{Aut} \left( H \right) \right| = 2 \cdot (n - 4)! = O \left( \frac{n!}{n^4} \right),
\end{equation*}
and hence, we reject it.

\end{itemize}

\item If $d = n - 1$, then this graph is a complement of $K_{n - 2}$ along with two isolated vertices. This graph has an automorphism group size of
\begin{equation*}
\left| \textit{Aut} \left( H \right) \right| = 2 \cdot (n - 2)! = \frac{n!}{O \left( n^2 \right)} = \omega \left( \frac{n!}{n^3} \right), 
\end{equation*}
and we accept this and its complement.

\end{enumerate}

When we have a partition structure of $(n - 2, 1, 1)$, we reject. We can categorize this graph as follows: The vertices in the $n - 2$-partition form a cycle within the partition, there is a chain starting at $u_1$ and ending at $u_2$, or starting at $u_i$ and ending at $u_i$ (for $i = 1$ or $2$). There must be at least one such cycle containing some $u_i$, since otherwise, $d_1$ would be equal to $d_2$. Let $a$ be the length of the chain and $b$ be the number of such isomorphic chains. The number of permutations in the automorphism group is
\begin{equation*}
\left| \textit{Aut} \left( H \right) \right| \leq 2^a a! ( n - a b)! = O \left( \frac{n!}{n^3} \right),
\end{equation*}
since $b$ is at least $1$ and $a$ is at least $3$.

This case covers the statements 5 and 6 of the lemma.

\end{proof}

Due to the above case-by-case analysis, we have the following lemma.

\begin{lemma}
\label{lemma:24}

For sufficiently large $n$, in $\tilde{O} \left( n^2 \right)$-time, given an $n$-vertex undirected simple graph $H$, we can check whether $\text{Aut} \left( H \right) = \omega \left( n! / n^3 \right)$ and also compute the number of $k$-cliques for any $k > 2$.

\end{lemma}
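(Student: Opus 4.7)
The plan is to exploit the full strength of Lemma \ref{lemma:23}: the twelve graphs it enumerates are all extremely structured (a complete or empty graph, modified by a constant number of local perturbations), so membership in the list and $k$-clique counting both reduce to checking degree sequences and applying closed-form binomial formulas. The algorithm therefore has two stages: (i) a structural classification test that runs in $\tilde{O}(n^2)$ time, and (ii) an arithmetic stage that outputs $\binom{a}{k} \pm \binom{b}{k-2}$ (or similar constant-term expressions) depending on the case identified.

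For stage (i), first read the adjacency matrix once and compute the degree sequence $(d_1, \dots, d_n)$, sorted, in $\tilde{O}(n^2)$ time. Every graph in the list of Lemma \ref{lemma:23} has a degree sequence that is constant in its multiset description up to the values $0, 1, 2, n-3, n-2, n-1$ and their multiplicities; for instance $K_n$ has sequence $(n-1)^n$, $K_n$ minus an edge has sequence $(n-2)^2 (n-1)^{n-2}$, $K_{n-1}$ plus an isolated vertex has sequence $0 \cdot (n-2)^{n-1}$, the pendant-on-$K_{n-1}$ graph has sequence $1 \cdot (n-2)^{n-2} \cdot (n-1)$, and analogous fingerprints hold for the three remaining graphs and their complements. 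I would build a small lookup table of these twelve degree multisets and check in $O(n)$ time whether the sorted sequence matches one of them. Since the degree sequences of these graphs are pairwise distinct (for sufficiently large $n$, the small-degree entries uniquely identify the case), this is both necessary and sufficient for classification, and no further verification is needed.

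For stage (ii), given the identified case, $k$-clique counts follow from elementary formulas. For $K_n$: $\binom{n}{k}$. For $K_n$ minus one edge $\{u,v\}$: $\binom{n}{k} - \binom{n-2}{k-2}$, subtracting cliques that contained both endpoints. For $K_{n-1}$ with an isolated vertex: $\binom{n-1}{k}$. For $K_{n-1}$ with an adjacent pendant: still $\binom{n-1}{k}$ when $k > 2$, since the pendant has only one neighbor and so lies in no triangle. For $K_{n-2}$ with two isolated vertices and for $K_{n-2}$ with a detached edge: $\binom{n-2}{k}$. For the six complements (an empty graph, an empty graph plus an edge, a star, a star plus an isolated vertex, $K_{2,n-2}$, and $K_{2,n-2}$ with an extra edge $\{u_1,u_2\}$), the triangle-free or near-triangle-free structure forces the $k$-clique count for $k > 2$ to be either $0$ or $n-2$ (cliques of the form $\{u_1, u_2, v_i\}$ when $k = 3$ in the relevant case). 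Each formula is evaluated in $\tilde{O}(1)$ arithmetic time on $O(\log n)$-bit integers.

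The only nontrivial step is the runtime bookkeeping: stage (i) inevitably reads an $n \times n$ adjacency matrix, giving the $\tilde{O}(n^2)$ bound, and all subsequent work is lower-order. I do not expect a real obstacle here, since Lemma \ref{lemma:23} has already done the hard enumerative work; the remaining care is simply to (a) tabulate the twelve graphs along with their degree fingerprints and $k$-clique formulas, and (b) ensure the classifier returns ``not in the list'' promptly when the degree sequence fails to match, so that the caller in the \emph{Parity $k$-Clique} theorem can fall back to the random-query branch.
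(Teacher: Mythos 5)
Your overall strategy --- classify against the twelve graphs of Lemma \ref{lemma:23} via degree fingerprints, then output closed-form binomial counts --- is the same as the paper's, and the runtime accounting is fine. However, there is a genuine gap in stage (i): the claim that matching the sorted degree sequence is ``both necessary and sufficient for classification, and no further verification is needed'' is false for the case of $K_{n-2}$ together with a detached edge, whose degree multiset is $\{1,1,(n-3)^{n-2}\}$. Take $K_{n-2}$ on $v_1,\dots,v_{n-2}$, delete the edge $v_1v_2$, and attach pendant vertices $u_1\sim v_1$ and $u_2\sim v_2$. This graph has exactly the same degree sequence, yet it is not isomorphic to $K_{n-2}\cup K_2$: its automorphism group has order $2(n-4)! = O\left(n!/n^4\right)$, and its $k$-clique count is $\binom{n-2}{k}-\binom{n-4}{k-2}$ rather than $\binom{n-2}{k}$. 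A degree-only classifier would therefore both misreport the count and send the caller of the \textit{Parity $k$-Clique} theorem down the wrong branch, treating an asymmetric instance as symmetric. The paper's proof avoids this by adding one extra check in precisely this case --- verifying that the two degree-$1$ vertices are adjacent to each other, which then forces the remaining $n-2$ vertices to form a clique --- and the analogous check is needed for the complementary graph. The fix costs only $O(n^2)$ time, but as written your classifier is unsound. (The other five primal fingerprints, for $K_n$, $K_n$ minus an edge, $K_{n-1}$ plus an isolated or pendant vertex, and $K_{n-2}$ plus two isolated vertices, do determine the graph uniquely, so no extra verification is needed there.)

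One point in your favor: your stage (ii) is more careful than the paper's own proof in the complement cases. The complement of $K_{n-2}$ with two isolated vertices contains the edge $u_1u_2$ together with all edges $u_iv_j$, hence has $n-2$ triangles, so its $3$-clique count is $n-2$ and not $0$ as the paper asserts; you correctly identify this exception.
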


\begin{proof}

Our algorithm will proceed as follows. If the number of edges in $H$ is larger than $\displaystyle {n \choose 2} / 2$, then we check if it is one of the large-clique structures. If not, then we compute $\overline{H}$ and check for one of the large clique structures. Both counting edges and computing the complement of $H$ requires $O \left( n^2 \right)$-time. Hence, assuming $U^\prime_n = H$ or $\overline{H}$ with at least $\displaystyle {n \choose 2} / 2$ edges, our algorithm proceeds as follows.

\begin{enumerate}

\item \textit{Checking if $U^\prime_n$ is $K_n$:} Simply check if every entry in $U^\prime_n$ is $1$ confirms this. If this test is passed, if $U^\prime_{n} = U_{n}$, then the number of $k$-cliques is $\displaystyle {n \choose k}$. If $U^\prime_{n} = \overline{H}$, then the number of $k$-cliques is $0$. If $U^\prime_{n}$ does not pass this test, then we move to the next test.

\item \textit{Checking if $U^\prime_n$ is $K_n$ with one missing edge:} Simply checking if exactly one entry in $U^\prime_n$ is $0$ confirms this. If the test is passed and $U^\prime_n = H$, then the number of $k$-cliques is $\displaystyle {n \choose k} - {{n - 2} \choose {k - 2}}$, since the subtracted number is the number of $k$-cliques that, in $K_n$, would contain the excluded edge. If $U^\prime_n = \overline{H}$, then the number of $k$-cliques is $0$. If this test fails, then we move to the next test.

\item \textit{Checking if $U^\prime_n$ is $K_{n - 1}$ with an isolated vertex:} It suffices to check if $n - 1$ vertices have degree $n - 2$ and one has degree $0$. If $U^\prime_n$ passes this test and $U^\prime_n = H$, then the number of $k$-cliques is $\displaystyle {{n - 1} \choose k}$. If $U^\prime_n = \overline{H}$, then the number of $k$-cliques is $0$. If this test fails, then we move to the next test.

\item \textit{Checking if $U^\prime_n$ is $K_{n - 1}$ with one vertex of degree $1$ adjacent to it:} First, we count the degrees of the vertices. If there is agreement with the expected number of vertices of each degree, then we are done, since all vertices with degree $n - 2$ must form a $K_{n - 2}$ subgraph, all adjacent to the vertex of degree $n - 1$, since the vertex of degree $n - 1$ is already adjacent to the vertex of degree $1$. If this test passes and $U^\prime_n = H$, then the number of $k$-cliques is $\displaystyle {{n - 1} \choose k}$. If $U^\prime_n = \overline{H}$, then the number of $k$-cliques is $0$. If this test fails, then we move to the next test.

\item \textit{Checking if $U^\prime_n$ is $K_{n - 2}$ with two isolated vertices:} It suffices in this case to check alignment with the expected degrees of the vertices. If the test passes and $U^\prime_n = H$, then the number of $k$-cliques is $\displaystyle {{n - 2} \choose k}$. If $U^\prime_n = \overline{H}$, then the number of $k$-cliques is $0$. If this test fails, then we move to the next test.

\item \textit{Checking if $U^\prime_n$ is $K_{n - 2}$ with two vertices of degree $1$ adjacent to each other:} First, we compute the degrees of the vertices and check that the vertices of degree $1$ are adjacent to each other. This forces the other $n - 2$ vertices to form an $n - 2$-clique. If this test passes and $U^\prime_n = H$, then the number of $k$-cliques is $\displaystyle {{n - 2} \choose k}$. If $U^\prime_n = \overline{H}$, then the number of $k$-cliques is $0$. If this test fails as well, and after all other tests, we know from our classification that
\begin{equation*}
\left| \textit{Aut} \left( H \right) \right| = O \left( \frac{n!}{n^3} \right).
\end{equation*}

\end{enumerate}

In all six cases, in $\tilde{O} \left( n^2 \right)$-time (depending on one's preferred model of computation), we can determine the appropriate classification if
\begin{equation*}
\left| \textit{Aut} \left( H \right) \right| = \omega \left( \frac{n!}{n^3} \right),
\end{equation*}
and also compute the number of $k$-cliques in $\tilde{O} \left( n^2 \right)$-time. If not, we can determine that 
\begin{equation*}
\left| \textit{Aut} \left( H \right) \right| = O \left( \frac{n!}{n^3} \right).
\end{equation*}

\end{proof}
\end{document}